\documentclass[11pt]{article} 
\usepackage[T1]{fontenc}
\usepackage[utf8]{inputenc}

\usepackage{microtype}

\usepackage{subcaption}
\usepackage{graphicx}
\usepackage[backend=biber,sorting=none,sortcites,citestyle=numeric-comp,bibstyle=phys,doi=false,eprint=true,url=false,biblabel=brackets,isbn=true,articletitle=true]{biblatex}   
\usepackage{authblk}

\usepackage{amsmath, amstext}    
\usepackage{cases}
\usepackage{mathtools}
\usepackage{dsfont}
\usepackage[dvipsnames]{xcolor}
\usepackage[margin=1in]{geometry}
\usepackage{float}
\usepackage{ragged2e}
\DeclareCaptionJustification{justified}{\justifying}
\usepackage{placeins}
\usepackage[title,titletoc]{appendix}
\usepackage[colorlinks=true, linkcolor = BlueViolet, citecolor = BlueViolet, urlcolor = BlueViolet]{hyperref}
\usepackage{enumitem}

\usepackage[bitstream-charter]{mathdesign}

\usepackage{subdepth} 

\usepackage{amsthm}   

\usepackage{moresize}  

\usepackage{makecell}

\allowdisplaybreaks   

\newtheorem{theorem}{Theorem}[section]
\newtheorem{definition}[theorem]{Definition}
\newtheorem{lemma}[theorem]{Lemma}

\newtheorem{corollary}[theorem]{Corollary}

\DeclareMathOperator{\Tr}{Tr}
\DeclareMathOperator{\sinc}{sinc}

\DeclareMathOperator{\vectorspan}{span}

\DeclareMathOperator{\ifc}{\overline{\Theta}}

\newcommand{\proj}{\hat{\Pi}}

\newcommand{\idop}{\hat{\mathds{1}}}
\newcommand{\diff}{\mathrm{d}}
\newcommand{\dos}{\mathcal{N}}   

\newcommand{\htime}{{\rm H}}

\newcommand{\effdim}{{\rm eff}}

\newcommand{\obs}{\text{obs}}

\newcommand{\eqm}{\text{eq}}
\newcommand{\neqm}{\text{neq}}

\title{Aperiodicity is sufficient for macroscopic thermalization}
\author[1]{Amit Vikram}
\affil[1]{JILA and Center for Theory of Quantum Matter, Department of Physics, University of Colorado, Boulder CO 80309 USA}
\date{}

\begin{document}
\pagenumbering{gobble}
\maketitle

\abstract{
We identify a general mechanism for the finite-time thermalization of macroscopic observables, such as coarse-grained charge densities, in terms of elementary forms of the quantum dynamics of initial states: (1) aperiodicity, which provides a computable measure of (2) a dynamical \textit{partially} ergodic exploration of the Hilbert space. Specifically, this mechanism predicts the equilibration of all (concentrated) macroscopic observables, in almost all states in an initial ensemble and almost all times within finite and longer intervals, given only the observable-independent information that the return probability of the ensemble of initial states is small over a finite time range. As a special case, it also accesses standard results on equilibration over infinitely long times in terms of (stronger versions of) the effective dimension of initial state delocalization in the energy eigenbasis. Our results incorporate macroscopic thermalization into the domain of operational quantum statistical mechanics, recently developed to provide finitely computable criteria for microscopic thermalization. We discuss an overall characterization of this approach as establishing connections between (1) the decay of a (theoretically or experimentally) computable probe indicating memorylessness, (2) a fundamental invariant mechanism in terms of the alignment of observables or states in the Hilbert space, and (3) predicting different natural forms of (classical and) quantum thermalization, most of which rigorously recover conventional eigenstate-based descriptions of infinite-time thermalization as a special case but provide stronger accessible predictions over finite observation times in the thermodynamic limit.
}

\newpage

\setcounter{tocdepth}{3}
\tableofcontents

\newpage

\pagenumbering{arabic}

\section{Introduction}
\label{sec:intro}

\subsection{Macroscopic thermalization by concentration}
\label{sec:introconcentration}

\subsubsection{Statistical mechanics at different length scales}

Many isolated systems with several degrees of freedom appear to reach an effective stationary or equilibrium state on the measurement of many naturally accessible ``simple'' observables, despite continuing to undergo extremely complex forms of dynamics in their state space. This process of ``thermalization'' is fundamentally characterized by an apparent loss of sensitivity of such simple observables to any detailed information imprinted on the basic degrees of freedom of the system. Where it occurs, thermalization allows the use of equilibrium statistical mechanics~\cite{Tolman, KhinchinStatMech, LLStatMech}: the description of the behavior of such accessible observables by ``universal'' statistical ensembles that are at best privy to extremely coarse grained (if any) information about the state of the system.

Thermalization may be attributed to a variety of distinct overlapping mechanisms across different length scales in various broad classes of systems. At the most microscopic scales, one is concerned with observables pertaining to a few fundamental degrees of freedom, such as the state of a single particle in a many-particle system. Classically, microscopic thermalization in its most elementary form is impossible: the state of any given particle remains well defined and dynamically evolving at all times, and cannot be described at any instant by a coarse-grained statistical ensemble. Nevertheless, weaker forms of statistical behavior can be described in the dynamics of such local observables, e.g. using statistical ensembles to describe their distribution in time along the lines of the \textit{ergodic theory}~\cite{HalmosErgodic, Sinai1976, SinaiCornfeld} (cf.~\cite{KhinchinStatMech}) of classical dynamical systems, restricted to describe specific local observables of interest~\cite{KhinchinStatMech}.

The situation is considerably sharper for microscopic quantum thermalization~\cite{Nandkishore, gogolin2016equilibration, MoriETHreview}. Here, it is possible for the quantum information in a single particle to spread across the system via ``information scrambling''~\cite{HaydenPreskill, ChaosComplexityRMT}, which generates large-scale entanglement and allows the reduced quantum state of a single particle to effectively settle into an equilibrium mixed state, even as the full dynamics in the Hilbert space proceeds unitarily without information loss. With the development of local control and measurement techniques that address individual qubits in experimental quantum platforms, including quantum simulators~\cite{QuantumSimulationReview2024}, it has become possible to observe qubit-wise microscopic quantum thermalization in systems of several (at present $\lesssim 10^2$) qubits~\cite{GoogleThermalization2025, QuantinuumThermalization2026}.
Moreover, such processes remain important to consider (or mitigate) in sophisticated protocols involving the coherent manipulation of quantum information, such as various quantum algorithms~\cite{NielsenChuang}, and even toy models of quantum information loss of relevance to quantum gravity~\cite{HaydenPreskill}.

The statistical mechanics of everyday experience, however, remains largely disconnected from microscopic thermalization. Here, one often observes systems of incomparably large numbers of particles $N$ (e.g. comparable to Avogadro's number~\cite{CODATA2022}, $N \sim 10^{24}$), often justifying a direct $N\to \infty$ limit with all other independent quantities treated as finite, without having anything close to the ability to discern individual particles. The relevant observables tend to be global densities over a (nominally) finite fraction of the characteristic length scales of the system, such as pressure, (kinetic) temperature, or various charge densities (including particle number, a $U(1)$ charge, or magnetization, an $SU(2)$ charge\footnote{Here, the term ``charge'' is used for (Hermitian) generators of ``local'' (per-particle) unitary transformations~\cite{charges}.}).
Understanding the behavior of these observables requires analyzing thermalization over macroscopically large length scales. For this purpose, the scrambling mechanism is largely irrelevant: macroscopic thermalization prominently occurs even without entanglement generation. This is the case, for example, (1) in classical systems which possess no notion of entanglement, as well as (2) for observables defined across the full system in isolated quantum systems, where there is no external system to scramble information (generate entanglement) with.

A very general phenomenon that contributes to macroscopic thermalization in both classical and quantum systems is the concentration of (probability) measure~\cite{KhinchinStatMech}, a standard occurrence in the theory of probability for aggregations of a large number of variables~\cite{LugosiEtAlConcentration}.
As macroscopic densities usually average a large number of microscopic variables, e.g. the central limit theorem~\cite{RossProbability} (implying concentration) comes into play, and one finds that the distribution of any such observable is very narrow, concentrating around its equilibrium value for large system sizes (a standard example is worked out in Sec.~\ref{sec:concentratedobservables} for completeness). With such macroscopic observables practically being in equilibrium in almost all states in the state space of the system by default, it is then understood to be fairly typical and expected behavior that macroscopic observables thermalize in large systems~\cite{KhinchinStatMech, GallavottiErgodic}.

\subsubsection{Can we predict thermalization dynamics beyond concentration?}
\label{sec:thermalizationbeyondconcentration}

While the measure concentration phenomenon provides a static justification for the \textit{typicality} of macroscopic thermal equilibrium in almost all states of complex systems, the dynamical mechanism of macroscopic thermalization from a nonequilibrium state has not yet been satisfactorily resolved to our knowledge --- in either classical or quantum systems. Such nonequilibrium states are straightforward to prepare and observe in many cases~\cite{gogolin2016equilibration}, and explaining their thermalization is therefore of significant interest.

While it is broadly recognized~\cite{GallavottiErgodic, TasakiTypicalityThermalization} that the dynamics has to be of such a nature as to take a nonequilibrium state into the larger equilibrium region, sharpening this straightforward intuitive idea in terms of precise dynamical notions has been challenging. The primary obstacle to identifying such a mechanism stems from the fact that the state space of a many-particle system is exponentially large in the number of particles $N$, with $\sim \exp(N)$ independent states. In most cases of interest, the equilibrium and nonequilibrium regions of macroscopic observables are both exponentially large in $N$, even with the former occupying nearly the full state space and the latter occupying an exponentially vanishing fraction~\cite{KhinchinStatMech} (as in the example in Sec.~\ref{sec:concentratedobservables}).

This interplay of exponential sizes causes quantitative problems
for characterizing the dynamics of nonequilibrium states in the $N \to \infty$ thermodynamic limit. Dynamical statements in classical ergodic theory generally apply over infinite times, and to \textit{almost all} states rather than all states, corresponding to regions of finite nonzero measure with respect to some probability measure defined on the state space. Without the $N\to\infty$ limit being taken first, ``infinite times'' for finite $N$ are longer than any scale of interest set by $N$, including $\sim 10^{24}$ (Avogadro's number), yielding no predictions for finite times. In the strict $N\to\infty$ limit, the nonequilibrium region is an exponentially small fraction of the full state space, and its \textit{relative size} $\sim \exp(-N)$ shrinks rapidly to measure zero, beyond the reach of ergodic-theory-type statements. However, the absolute size of the nonequilibrium region grows as $\sim \exp(N)$, which is divergent and contains too many states to be tackled on a state-by-state basis. Analogous obstacles exist for quantum systems. We will now illustrate how these obstacles hinder some standard approaches to classical and quantum macroscopic thermalization\footnote{Throughout these arguments and the rest of our developments, we will use the formal asymptotic notation~\cite{knuth1976asymptotic}, by which $f(x) = M(g(x))$ means $[f(x)\ \mathrm{R}\ cg(x)]\ \mathrm{S}(c)$, where the symbol $M$ is (initially) chosen from $M \in (o, O, \Omega, \omega)$, the relation $R$ from $R \in (<, \leq, \geq, >)$ respectively, and the qualifier $\mathrm{S}(c)$ is ``there exists a constant $c$ such that the relation applies'' for the uppercase $O$ and $\Omega$, and ``the relation applies for every constant $c$'' for the lowercase $o$ and $\omega$, with the additional symbol $M = \Theta$ being defined as the simultaneous validity of both uppercase relations $O$ and $\Omega$, when $x\to\infty$ (which may be implicit). Separately, we use $=$ to declare equality, $\equiv$ for a definition/equivalence, and occasionally $:=$ for an \textit{assignment} of a specific variable e.g. in an example, to a general variable that may stand in for several such examples. We also use the boolean indicator function $\ifc(\mathrm{X}) = 1$ if statement $\mathrm{X}$ is true, and $\ifc(\mathrm{X}) = 0$ if statement $\mathrm{X}$ is false, where the overline differentiates the indicator function $\ifc$ from the asymptotic relation $\Theta$.}.

\paragraph{Classical obstacles to predicting macroscopic thermalization}

For classical systems, Khinchin's formal approach~\cite{KhinchinStatMech} argues at length for the concentration mechanism from probability theory being the primary justification for macroscopic statistical mechanics, but also attempts to briefly provide a dynamical justification for the approach to equilibrium in terms of the properties of single-particle observables. For the latter, let $\lbrace a_k\rbrace_{k=1}^{n_O}$ refer to a set of $n_O$ single particle observables, shifted by a constant without loss of generality to have a microcanonical average of $\langle a_k(x)\rangle_{x\in\mathcal{P}} = 0$ over the state space $\mathcal{P}$ (with averages implicitly being with respect to the state space measure). It is shown in Ref.~\cite[Chap. III]{KhinchinStatMech} that given each $\mathcal{P}$-averaged autocorrelation function for this set:
\begin{equation}
C_k(t,t') \equiv \langle a_k(x,t)a_k(x,t')\rangle_{x\in\mathcal{P}},
\end{equation}
its long-time decay\footnote{This is stated in Eq.~\eqref{eq:KhinchinMolecularChaos} as an exact limit following Ref.~\cite{KhinchinStatMech} but note that a weaker C\'{e}saro limit or even just a time averaged decay suffices; see also \cite{dynamicalqthermalization, dynamicalpurestatethermalization}.} characterizing an almost ubiquitous intuitive phenomenon of e.g. ``molecular chaos'' in gases~\cite{KhinchinStatMech},
\begin{equation}
    \lim_{|t-t'| \to \infty}C_k(t,t') = 0,
    \label{eq:KhinchinMolecularChaos}
\end{equation}
establishes the time-averaged thermalization of the corresponding $a_k$ in \textit{almost all} initial states in nonzero measure regions, in a sense similar to (observable-specific) ergodic theory:
\begin{equation}
    \left\langle \left(\lim_{T\to\infty}\int_0^{T}\frac{\diff t}{T}\ a_k(x,t)\right)^2\right\rangle_{x\in\mathcal{P}} = 0.
    \label{eq:KhinchinTimeAveragedThermalization}
\end{equation}
It is then \textit{argued} that all macroscopic observables of the form $\sum_{k=1}^{n_O} c_k a_k$ also average to $0$ almost everywhere in the state space and over infinite time intervals, which indeed follows for any \textit{finite} $n_O$. This meets two key obstacles:
\begin{enumerate}
    \item For any finite $N$, Eqs.~\eqref{eq:KhinchinMolecularChaos} and \eqref{eq:KhinchinTimeAveragedThermalization} already refer to infinite times $t \to \infty$, or stated somewhat more hyperbolically,
    \begin{equation}
        t \sim t' \sim |t-t'| \sim T = \omega(f(N)) \text{ for any } f(N) \neq 0, \text{ e.g. } f(N) = e^{e^{N^2}},
        \label{eq:infinitetimehyperbole}
    \end{equation}
    and infinitely small correlators
\begin{equation}
    \lim_{|t-t'| \to \infty}C_k(t,t') = o\left(f(N)\right) \text{ for any } f(N) \neq 0, \text{ e.g. } f(N) = e^{-e^{N^2}},
\end{equation}
which are respectively too absurdly large and small to apply to the statistical mechanics of everyday experience involving e.g. an Avogadro's number $N \sim 10^{24}$ of particles. In particular, the corresponding timescales $\exp(\exp(\sim10^{48}))$ in some practical units are several exponents beyond the estimated age of the universe $\sim 10^{61}$ \cite{universeage} in units of the Planck time~\cite{CODATA2022}, yielding no predictions for statistical mechanics as observed over the timescale of several seconds to days (or so far in our universe).
\item If the above criteria are formulated instead in the $N\to\infty$ limit, so that softer limits of correlator magnitudes can be taken with more reasonable timescales (e.g. $C_k$ smaller than $\epsilon$ for $|t-t'| > 1/\delta$ and $T > 1/\delta$ for $\epsilon, \delta = \Theta(1)$ even as $N\to\infty$), then the nonequilibrium subspace is already a measure zero set in the full state space (of relative measure $\exp(-\Theta(N)) \to 0$). Therefore, the correlator decay in Eq.~\eqref{eq:KhinchinMolecularChaos} is entirely dominated by equilibrium states, the implication Eq.~\eqref{eq:KhinchinTimeAveragedThermalization} again applies only to equilibrium states (as phase space averages freely exclude sets of measure zero), and the passage to macroscopic densities $\sum_{k=1}^{n_O} c_k a_k$ now has $n_O \to \infty$ (if $n_O = \Theta(N)$, to be macroscopic), with no clear notion of convergence. In short, no prediction is obtained for the thermalization dynamics of macroscopic observables in nonequilibrium states.
\end{enumerate}
The ``molecular chaos'' approach has other merits: in other related work~\cite{dynamicalqthermalization, dynamicalpurestatethermalization}, we have taken the viewpoint that a similar set of mechanisms based on local correlators of specific observables provides a robust computation-friendly justification for \textit{microscopic} quantum thermalization. But it does not appear to us that the classical mechanism gives clear nontrivial predictions in its originally proposed domain of macroscopic thermalization for the aforementioned reasons, whose quantum analogues we will briefly describe in Sec.~\ref{sec:operationalquantumstatmech}.



\paragraph{Quantum macroscopic thermalization over infinite times} For quantum systems, the question of explaining macroscopic thermalization has remained open since von Neumann~\cite{vonNeumannThermalization}, with a series of subsequent works progressing towards its resolution, e.g.~\cite{GoldsteinetalMacroscopicTypicality, GoldsteinetalMateMite1, GoldsteinetalMateMite2, TasakiTypicalityThermalization}. The most up-to-date comprehensive (and accessible) treatment of macroscopic quantum thermalization \textit{from nonequilibrium states} that we are aware of is by Tasaki~\cite{TasakiTypicalityThermalization}. Here, two different dynamical mechanisms are considered for a finite Hilbert space dimension (requiring finite $N$), neither being able to predict finite-time thermalization\footnote{Another set of typicality results show that ``typical'' Hamiltonians thermalize extremely fast~\cite{TypicalFastMacroThermalization1, TypicalFastMacroThermalization2} (with such fast timescales being allowed in every system~\cite{GoldsteinHaraTasakiTimescales}), but this has the opposite problem that almost no realistic system is observed to thermalize as fast~\cite{TasakiTypicalityThermalization}, for which such typicality results give no predictions.} in the $N\to\infty$ limit (a situation highlighted in Ref.~\cite{TasakiTypicalityThermalization} as ``quite unsatisfactory'' and a key open problem). These are:
\begin{enumerate}
    \item Strong eigenstate thermalization away from nonequilibrium: If $\proj_{\neqm}$ is a projector onto the nonequilibrium region, its infinite time average in any state $\lvert \psi(t)\rangle$ (with $\langle \psi\vert \psi\rangle = 1$) is given by an average of its expectation values in the energy eigenstates $\lvert E_n\rangle$, assumed to be nondegenerate (which can be adapted to degeneracies in ways we won't discuss~\cite{TumulkaEsT, dynamicalqthermalization}):
    \begin{equation}
        \lim_{t\to\infty}\int_0^{T}\frac{\diff t}{T} \langle \psi(t)\vert \proj_{\neqm}\lvert \psi(t)\rangle = \sum_{n} \left\lvert\langle E_n\vert \psi\rangle \right\rvert^2 \langle E_n\rvert \proj_{\neqm}\lvert E_n\rangle.
        \label{eq:Tasaki_infytimeaverage}
    \end{equation}
    Therefore, if each energy eigenstate individually has a low probability e.g. at most some $\epsilon > 0$ of being found in the nonequilibrium region (i.e. thermalizes ``away from nonequilibrium''),
    \begin{equation}
        \langle E_n\rvert \proj_{\neqm}\lvert E_n\rangle \leq \epsilon,
        \label{eq:strongETH}
    \end{equation}
    then the infinite time average in Eq.~\eqref{eq:Tasaki_infytimeaverage} is necessarily not greater than $\epsilon$, showing that any state spends at most an $\epsilon$ fraction of its infinite-time history in the nonequilibrium region (as witnessed by projective measurements~\cite{NielsenChuang}). This should give macroscopic thermalization over almost all times. While this mechanism is advocated for by a number of works e.g. \cite{GoldsteinetalMateMite1, GoldsteinetalMateMite2, TumulkaEsT}, similar issues to the classical case remain: (a) Eq.~\eqref{eq:Tasaki_infytimeaverage} is still an infinite time statement for finite $N$ and is susceptible to the same issue as Eq.~\eqref{eq:infinitetimehyperbole}, while (b) strong eigenstate thermalization~\eqref{eq:strongETH} itself involves exponentially many eigenstates (Hilbert space size) in $N$, and may not be computationally verifiable in general with any finite resolution for thermodynamically large systems~\cite{ThermalizationUndecidability1, ThermalizationUndecidability2}.
    \item Energy eigenstate delocalization of initial states: Using the Cauchy-Schwarz inequality~\cite{ByronFuller} on Eq.~\eqref{eq:Tasaki_infytimeaverage} (and further bounding a resulting factor involving $\proj_{\neqm}$), we get
    \begin{equation}
        \lim_{t\to\infty}\int_0^{T}\frac{\diff t}{T} \langle \psi(t)\vert \proj_{\neqm}\lvert \psi(t)\rangle \leq \sqrt{\left(\sum_n\left\lvert\langle E_n\vert \psi\rangle \right\rvert^4\right)\Tr[\proj_{\neqm}]}.
        \label{eq:Tasaki_effdim1}
    \end{equation}
    Noting that $\Tr[\proj_{\neqm}]$ is the size of the nonequilibrium region, if the inverse participation ratio $\sum_n \lvert\langle E_n\vert \psi\rangle \rvert^4$ (which estimates the inverse of the number of energy eigenstates $\lvert \psi\rangle$ is significantly supported on) satisfies $\sum_n \lvert\langle E_n\vert \psi\rangle \rvert^4 \ll 1/\Tr[\proj_{\neqm}]$, then the time average on the left hand side is small, again showing that the state spends most of its time outside the nonequilibrium region. As with Eq.~\eqref{eq:strongETH}, comparable issues resurface: (a) the infinite time average gives no predictions for finite times as $N\to\infty$ as with Eq.~\eqref{eq:infinitetimehyperbole}, and (b) the exponential size of the nonequilibrium region in $N$ prevents computationally constraining the participation ratio of $\lvert \psi\rangle$ to be as large with finite resolution in the thermodynamic limit.
\end{enumerate}

\paragraph{A preview of our results} The goal of this work is then to identify a mechanism that can allow a computational prediction of macroscopic quantum thermalization over finite timescales from finite-resolution data: in short, allowing predictions in principle in the observable regime of macroscopic statistical mechanics. Our main result is Theorem~\ref{thm:aperiodicityimpliesmacroscopicthermalization}, which does not require any limits, but allows taking $N\to\infty$ first if desired while retaining finite timescales and finite magnitudes of measurable quantities. In particular, we show that if an ensemble of nonequilibrium states of sufficient size can be prepared (the precise experimental protocols for which we leave to future work), a finite suppression of its return probability over a finite time interval is sufficient to establish macroscopic thermalization at almost all times over that interval as well as arbitrarily long time intervals, in almost all states in this nonequilibrium ensemble. Up to the ability to prepare or emulate this ensemble (as discussed in Sec.~\ref{sec:disc_computability}), a measurement of such a return probability can be readily carried out with finite computational resources, making this a robust operational criterion for macroscopic thermalization. The underlying mechanism is a quantum dynamical notion of \textit{partial ergodicity}, extending an operational notion in Ref.~\cite{dynamicalqentanglement}, connected to exploring orthogonal states in the Hilbert space~\cite{dynamicalqergodicity, dynamicalqentanglement}. This partial form of ergodicity provides a precise formulation of the idea that an initial nonequilibrium state must explore enough of the state space to reach the equilibrium region \textit{and} be guaranteed to remain there at most times, where conventional ergodic theory lacks the resolution to make finite-time predictions due to its insistence on exploring the full state space. The next subsection places this work in a more specific context, involving parallel motivations in quantum dynamics and statistical mechanics in addition to the explanation of macroscopic thermalization.


\subsection{Parallel motivations and summary}

\subsubsection{Operational quantum statistical mechanics}
\label{sec:operationalquantumstatmech}

Quantum statistical mechanics can be described at different formal levels. The most general level is \textit{typicality}~\cite{vonNeumannThermalization, tumulka_CT, CanonicalTypicalityPSW, NormalTypicality}, which applies a statistical approach to explaining quantum statistical mechanics itself. One looks at the behavior of uniform ensembles (with respect to the Haar measure) of states in, or observables acting on, the Hilbert space, and finds that almost all of them generally thermalize with (Haar) measure zero exceptions~\cite{vonNeumannThermalization, tumulka_CT, CanonicalTypicalityPSW, NormalTypicality}. Typicality is (largely, up to excluding measure-zero degenerate configurations of the energy spectrum for dynamics~\cite{vonNeumannThermalization, NormalTypicality}) system-independent, and applies to all quantum systems irrespective of their observed behavior, even while not providing an explanation of the latter. This is because any discrete set of states (or observables) of physical interest, such as an orthonormal basis of the Hilbert space (which can be prepared by projective measurements~\cite{NielsenChuang}) or any countable collection of them, has a Haar measure of exactly zero, 
and therefore behaves entirely independently of typicality.

A more refined layer is provided by that of energy eigenstate structure hypotheses, which posit the behavior of observables~\cite{JensenShankarETH, deutsch1991eth, srednicki1994eth, srednicki1999eth, rigol2008eth, DAlessio2016, deutsch2018eth} or states~\cite{Tasaki1998, ReimannRealistic, LindenetalEqb, ShortEqb, ShortDegenerate} of physical interest in the energy eigenbasis, focusing on individual energy eigenstates. This is closely related to the typicality layer: these hypotheses are essentially that such observables or states look typical in the energy eigenbasis (specifying the basis-specific structure missing from the Haar measure) in ``sufficiently complex'' systems, and therefore thermalize, with more refinements. Specifically, the former refers to the eigenstate thermalization hypothesis for ``local'' observables~\cite{JensenShankarETH, deutsch1991eth, srednicki1994eth, srednicki1999eth, rigol2008eth, DAlessio2016, deutsch2018eth}, and the latter to energy eigenstate delocalization for ``generic'' initial states~\cite{Tasaki1998, ReimannRealistic, LindenetalEqb, ShortEqb, ShortDegenerate}. By the energy time uncertainty principle, $\Delta E \cdot \Delta t \gtrsim 1$ (crudely invoked here as a Fourier relation, without using more precise speed limit formulations~\cite{MT, AAMT, ML, LevitinToffoli, GongHamazakiNoneqbBounds}), the focus on individual energy eigenstates $\Delta E \to 0$ leads to a divergent time uncertainty $\Delta t \to \infty$. Correspondingly, these approaches only constrain dynamics over infinite times, which as noted above (e.g. Eq.~\eqref{eq:infinitetimehyperbole}) cannot make predictions for any finite range of times especially in the thermodynamic limit. Questions of finite-time observable statistical mechanics remain elusive in this layer, as has been extensively recognized for both observable-based and state-based eigenstate structure hypothesis in more formal discussions~\cite{TasakiTypicalityThermalization, ShortDegenerate, GarciaPintosFiniteTimeEqb, dynamicalqthermalization}.

In recent works~\cite{dynamicalqthermalization, dynamicalpurestatethermalization}, it has been our goal to develop a third layer: the prediction of thermalization over finite times from a finite amount of accessible data that can be computed in various ways --- theoretically (analytical calculations), numerically (classical simulations), or experimentally (quantum simulations). A key feature of this approach is to avoid imposing any explicit limits (such as the thermodynamic limit or infinite time limit), so that the predictions apply quite generally and any desired limits may be taken after obtaining the relevant prediction. The requisite input data varies with the kind of thermalization one is interested in. One convenient organizing principle~\cite{dynamicalpurestatethermalization} is to separate different types of thermalization by the type of averaging involved. This originates in the fact that any local (microscopic) observable $\hat{a}$ usually has eigenvalues quite distinct from the thermal value (e.g. the average of eigenvalues), and one therefore needs different kinds of averaging over measurement outcomes (which necessarily lie among the eigenvalues~\cite{DiracQM, ShankarQM}) to obtain the thermal value. This detailed in the discussion below, summarized in Table~\ref{tab:operationalstatmech}.

\begin{table}[!t]
    \centering
    \centerline{
    \begin{tabular}{|c|c|c|c|}
    \hline
        \makecell{\textbf{Type of averaging} \\ \textbf{(and thermalization)}} & \textbf{Classical limit?} & \textbf{Mechanism type} & \textbf{ Computable diagnostic }  \\
        \hline
         \makecell{Spatial / Densities \\ (Macroscopic)} & Yes & State-dependent &  Aperiodicity [this work] \\
         \hline
         \makecell{Time / Ergodic \\ (Microscopic)} & Yes & Observable-dependent & Time-averaged autocorrelator~\cite{dynamicalqthermalization}\\
         \hline
         \makecell{Initial states / Mixing \\ (Microscopic)} & Yes & Observable-dependent & Autocorrelator relaxation~\cite{dynamicalqthermalization}\\
         \hline
         \makecell{Quantum expectation value \\ (Microscopic)} & No & \makecell{Observable-dependent \\ (Mildly state-dependent)} & Out-of-time-ordered correlator~\cite{dynamicalpurestatethermalization} \\
         \hline
    \end{tabular}}
    \caption{Operational quantum statistical mechanics, classified by the relevant type of averaging. From top to bottom, these types are sorted by decreasing relevance to the statistical mechanics of everyday experience, and increasing relevance to the technical arena of manipulating quantum information.}
    \label{tab:operationalstatmech}
\end{table}

\paragraph{Microscopic thermalization with averages} Our earlier developments in this layer remained within the domain of microscopic statistical mechanics, characterizing each observable by a computable dynamical property of that observable~\cite{dynamicalqthermalization, dynamicalpurestatethermalization}. Averaging a microscopic observable over time (for a given initial state) or over a large class of initial states (for a given time) is possible for both classical and quantum systems, and correspond to ergodic- and mixing- like behavior~\cite{HalmosErgodic, Sinai1976, SinaiCornfeld, KhinchinStatMech} for a given observable.  For these cases, the decay of the (classical or quantum) autocorrelator\footnote{For energy-dependent thermal values, this class of autocorrelators include complex-valued ``echoes'' with different durations of time evolution on each side of the operator.} of $\hat{a}$ over a fixed interval of time implies thermalization in almost all (classes of) initial states in any orthonormal basis over all longer timescales~\cite{dynamicalqthermalization} (see also a concise technical description in \cite{dynamicalpurestatethermalization}). This mechanism recovers the microscopic ``molecular chaos'' mechanism in Eqs.~\eqref{eq:KhinchinMolecularChaos} and \eqref{eq:KhinchinTimeAveragedThermalization} in the case of time averages under the classical limit, the thermodynamic limit,  the limit of infinitely long times, and the limit of infinitesimally small autocorrelators at these times, taken in whichever order (with the order stated here being the appropriate one to recover the classical formulation in Eqs.~\eqref{eq:KhinchinMolecularChaos} and \eqref{eq:KhinchinTimeAveragedThermalization} in their stated setting). More significantly, it also predicts weak versions of the diagonal and off-diagonal eigenstate thermalization hypothesis for $\hat{a}$, including for \textit{approximate} energy eigenstates spanning a finite energy range $\Delta E$, recovering the relevant parts of the eigenstate structure layer as special cases with $\Delta E \to 0$ while being compatible with thermalization over a finite timescale $T \gtrsim 1/\Delta E$.

\paragraph{Microscopic quantum thermalization} Allowing no averages other than the quantum expectation value $\langle \psi(t)\vert \hat{a}\lvert \psi(t)\rangle$ in the state of interest has no classical analogue, but such thermalization in almost all states in any orthonormal basis (for a large subspace) can be predicted by computing a distinctly quantum few-body out-of-time-ordered correlator~\cite{dynamicalpurestatethermalization}. This mechanism also involves a mild state-dependence as the out-of-time-ordered correlator is that of the observable and a large subspace of states, but this only requires specifying the state on a (nominally) finite number of qubits even in a thermodynamically large system, leaving the state of all other qubits quite general and arbitrary. All these microscopic correlators are computed in a single mixed state --- a maximally mixed state, which can be purified~\cite{NielsenChuang} by entanglement with a larger auxiliary system (that does not participate in the dynamics) to allow a single correlator in this single quantum state to determine the thermalization behavior of the microscopic observable in almost all initial states compatible with the subspace (i.e. fixed on a few qubits)s, over finite times.

\paragraph{Computability} It is worth emphasizing that the aforementioned observable-dependent computable thermalization statements apply to almost all states in an orthonormal basis. This is considerably stronger than typicality results in accessing a large fraction of states that can be experimentally prepared (e.g. a projective measurement projects onto a single orthonormal basis~\cite{NielsenChuang}), but stops short of predicting thermalization in all states. Here, it is also worth noting that a computational mode of predicting thermalization in all states may not be possible in the thermodynamic limit, according to analyses of the decidability of thermalization based on Turing machines~\cite{ThermalizationUndecidability1, ThermalizationUndecidability2}. Crucially, we may be fundamentally restricted to operational predictions only for ``almost all'' states by constraints of computability, which must be taken into account for macroscopic thermalization. An immediate consequence for our present purposes is that the observable-dependent strong eigenstate thermalization mechanism for macroscopic thermalization in Eqs.~\eqref{eq:Tasaki_infytimeaverage} and \eqref{eq:strongETH} likely cannot be made computationally accessible in general (in addition to its infinite time limitation), and is therefore not a suitable starting point for an operational approach to this form of thermalization.

\paragraph{Macroscopic thermalization} The most straightforward average --- over a large number of different particles or spatial locations --- takes us from microscopic to macroscopic statistical mechanics, which is the subject of this work. Here, the primary challenge is that, as noted in Sec.~\ref{sec:introconcentration}, almost all states are in thermal equilibrium by default, and the interesting class of nonequilibrium initial states forms an exponentially small fraction of the state space (i.e. the Hilbert space). Therefore, the observable dependent statements from previous works~\cite{dynamicalqthermalization, dynamicalpurestatethermalization} that constrain thermalization in almost all states (with the excluded states forming a small but \textit{finite} fraction for computation with finite resolution) do not provide any nontrivial predictions for macroscopic thermalization (at finite times) for these nonequilibrium states. Here, in contrast, noting that concentration is already a strong property of macroscopic observables that we essentially get for free, we will seek computable properties of the nonequilibrium states themselves that can decide finite-time thermalization in a large fraction of these states. In other words, we will adopt a state-dependent approach, and show that a relevant property of nonequilibrium initial states that establishes finite-time macroscopic thermalization is \textit{aperiodicity} (specified by Definition~\ref{def:aperiodicity}).

\subsubsection{State-dependent equilibration bounds}
\label{sec:statedependenteqbounds}

Having implicated state-dependent mechanisms as being of more relevance for (computable predictions of) macroscopic thermalization, it is worth reviewing previously known statements of this type. The primary statement of relevance to macroscopic thermalization is Eq.~\eqref{eq:Tasaki_effdim1}, adapted from Ref.~\cite{TasakiTypicalityThermalization}, which contains a measure of how delocalized the initial state is over energies. Formally, the initial state delocalization measure~\cite{Tasaki1998, ReimannRealistic, LindenetalEqb, ShortEqb, ShortDegenerate} can be expressed in terms of an ``effective dimension'' in the energy eigenbasis,
\begin{equation}
    d_{\effdim} = \frac{1}{\left\lvert\langle E_n\vert \psi\rangle \right\rvert^4},
    \label{eq:intro_effdim_def}
\end{equation}
estimating a representative number of energy eigenstates over which $\lvert \psi\rangle$ has significant support. As noted around Eq. (8), $d_{\effdim} \gg 1$ is sufficient to guarantee some degree of macroscopic thermalization over infinite times. In addition, the same effective dimension has general implications for \textit{equilibration} at all length scales, whether the equilibrium value is the thermal value or not. Specifically, $d_{\effdim} \gg 1$ also guarantees that the expectation value of any observable remains close to its long-time average at almost all times, over an \textit{infinite} time interval~\cite{Tasaki1998, ReimannRealistic, LindenetalEqb, ShortEqb, ShortDegenerate}.

However, as with eigenstate thermalization, it does not appear possible to make sharper statements over time intervals that are finite in the thermodynamic limit. The sharpest state-dependent (and observable-independent) bounds so far can restrict the time interval to somewhat longer than the inverse scale of consecutive energy level spacings~\cite{ShortDegenerate}. General stronger bounds on the equilibration time appear impossible as e.g. there exist observables that can only equilibrate over times almost as long as the inverse scale of energy level spacings in states that are highly delocalized in the energy eigenbasis (such as projector observables onto the optimal cyclic permutation basis of \cite{dynamicalqergodicity} in an initial basis state that is unbiased across all energy eigenstates, whose equilibration is limited by spectral rigidity). Once again, it appears that the fundamental obstacle is the energy-time uncertainty principle, where delocalization over states with $\Delta E = 0$ (or more generally $\Delta E = \delta E$, where $\delta E$ is the typical scale of consecutive energy level spacings) can only constrain thermalization dynamics over $T \to \infty$ (or more generally $T \gtrsim 1/\delta E$).

Here, where macroscopic thermalization is concerned, we will show that aperiodicity (Definition~\ref{def:aperiodicity}) can access finite thermalization timescales, unlike the effective dimension in Eq.~\eqref{eq:intro_effdim_def}, as it demonstrably contains much more information about eigenstate structure over larger energy scales. In particular, aperiodicity over a finite timescale $T$ implies the delocalization of the initial state not only over different energy eigenstates, but also over different larger energy shells of width $\Delta E \sim 1/T$ (as shown in Corollary~\ref{cor:energyshelldelocalization}). In this way, our approach retains compatibility with the energy-time uncertainty principle, while being smoothly connected to eigenstate delocalization results (especially Eq.~\eqref{eq:Tasaki_effdim1}) for $\Delta E = 0$ and $T \to \infty$. It may be interesting to analyze whether this state-dependent criterion also has nontrivial implications for the finite-time equilibration of more general non-macroscopic observables (without requiring strong computationally expensive observable-specific assumptions as in Ref.~\cite{GarciaPintosFiniteTimeEqb}, which nevertheless tends to require highly mixed initial states), which we leave for future work.



\subsubsection{The role of ergodic dynamics in thermalization}
\label{sec:ergodicitythermalization}

It is sometimes customary to invoke ergodic dynamics in discussions of statistical mechanics~\cite{LLStatMech, Reif, ReichlStatMech}, mostly on intuitive grounds. A lesser known tradition with more formal mathematical backing recognizes that ergodic dynamics generally makes no predictions for finite-time statistical mechanics~\cite{KhinchinStatMech, GallavottiErgodic, BricmontErgodicityCritique, GoldsteinErgodicityCritique, LebowitzAPS2021}, relying instead on typicality arguments as realized by the concentration mechanism to motivate the latter in the macroscopic regime. As noted in Sec.~\ref{sec:introconcentration} and also highlighted in Ref.~\cite{TasakiTypicalityThermalization}, even typicality is not sufficient to make finite-time predictions in this regime. We will show that a certain degree of \textit{partially ergodic dynamics}, precisely characterized in this work (Sec.~\ref{sec:theorem} and Sec.~\ref{sec:proof}), makes typicality work for observable macroscopic thermalization at finite times.

It is instructive to trace the flow of reasoning in the statements above pertaining to ergodicity and typicality (incidentally, both notions are due to Boltzmann~\cite{GallavottiErgodic, GoldsteinErgodicityCritique}) for more intuition. We have already argued in Sec.~\ref{sec:operationalquantumstatmech} that a state-dependent approach may be necessary for macroscopic thermalization. Ergodicity, strictly defined~\cite{HalmosErgodic, Sinai1976, SinaiCornfeld}, refers to almost all states in the state space exploring (any small neighborhood of) every state over infinitely long times. A more natural restriction in light of state-dependence is to demand the ergodicity of a single state, i.e. that the infinite time average of every observable in this state is thermal. A finite time restriction is not possible: due to the exponentially large size of the state space, such an exploration requires an exponentially long time, and has a similar order of limits issue as Eq.~\eqref{eq:infinitetimehyperbole}. This cannot be resolved merely by restricting ergodicity to observables of interest~\cite{KhinchinStatMech}, the primary context of Eq.~\eqref{eq:infinitetimehyperbole}.

\begin{figure}[!t]
    \centering
    \includegraphics[width=0.8\linewidth]{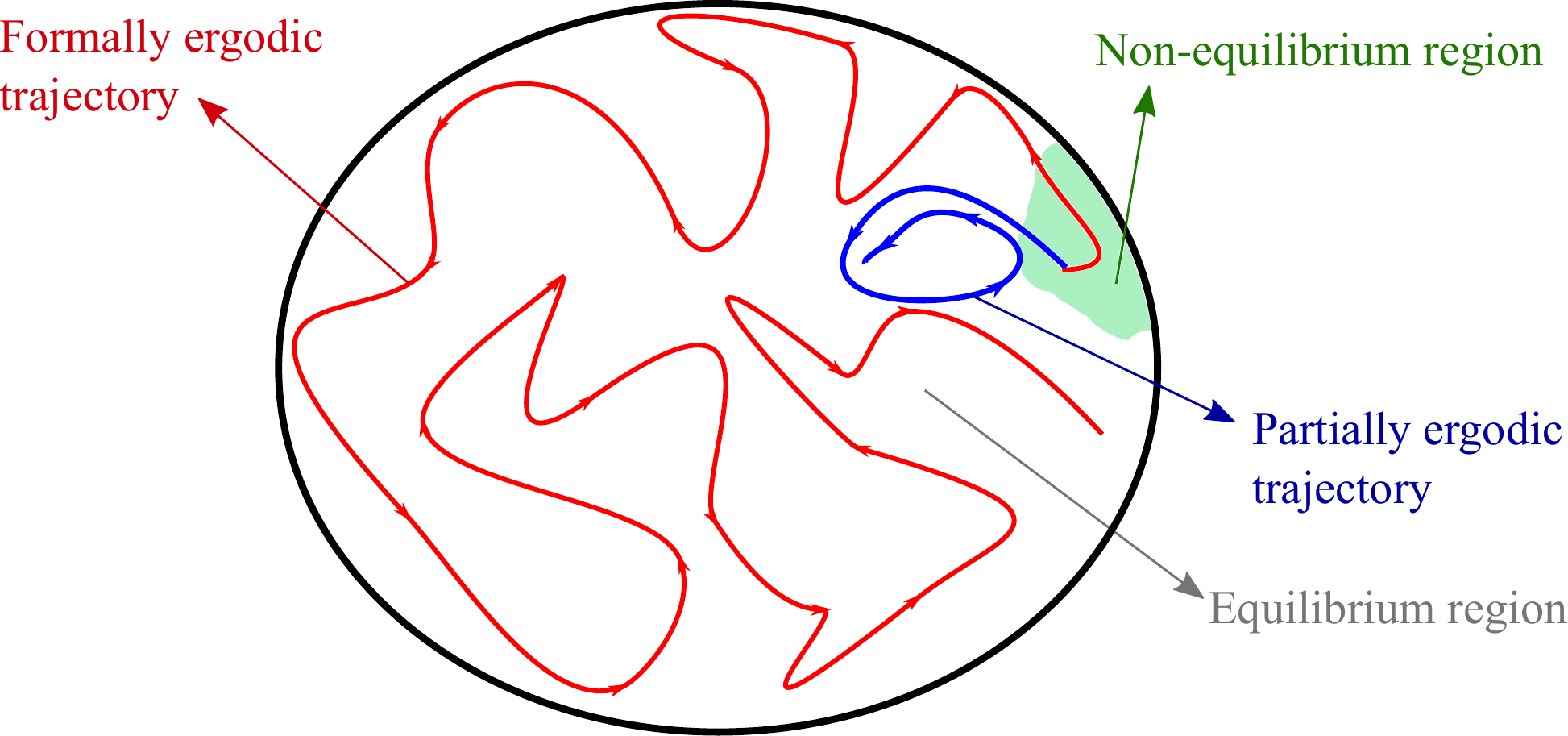}
    \caption{A depiction of the role of ergodic dynamics in macroscopic thermalization in a schematic state space. The equilibrium region takes up almost all of the phase space, and the nonequilibrium region (shaded) takes up a vanishing fraction (pictured as disproportionately large and connected here for convenience). Partial ergodicity (exploring a fraction of the phase space) is sufficient for macroscopic thermalization at almost all times along the trajectory, without requiring formal ergodicity (exploring the full phase space). Moreover, partial ergodicity is necessary to establish finite-time macroscopic thermalization, which conventional ergodicity cannot formally guarantee (for reasons not apparent in this depiction: formal ergodicity allows the state to remain entirely within the nonequilibrium region within finite times; see the text for details).}
    \label{fig:partialergodicity}
\end{figure}

Typicality is widely regarded as a satisfactory solution~\cite{KhinchinStatMech, BricmontErgodicityCritique, GoldsteinErgodicityCritique, LebowitzAPS2021}, but as noted in Sec.~\ref{sec:operationalquantumstatmech}, the typicality here is with respect to a measure that takes appreciable values over such a large space (the state space, e.g. the classical phase space or the quantum Hilbert space) in the thermodynamic limit that it cannot access nonequilibrium states of interest. Interestingly, this limitation does not appear to have been given due importance in analyses that point out the limitations of ergodicity but emphasize the explanatory nature of typicality~\cite{GoldsteinErgodicityCritique, LebowitzAPS2021}, but is somewhat indirectly more transparent in the quantum case~\cite{TasakiTypicalityThermalization}, via the considerations described in Sec.~\ref{sec:thermalizationbeyondconcentration}. If available, a simple resolution would be to determine a criterion that certifies when an initial state evolves into the typical region, as depicted in Fig.~\ref{fig:partialergodicity} (see Ref.~\cite{TasakiTypicalityThermalization} for similar depictions).

We have found it more natural to approach such a criterion directly in quantum systems, where a form of ergodic dynamics associated with exploring new orthogonal directions in the Hilbert space can be identified~\cite{dynamicalqergodicity, dynamicalqentanglement}. More significantly for our purposes, the number of new orthogonal directions explored can be quantified via continuous measures~\cite{dynamicalqentanglement} even without exploring the full Hilbert space, in particular for only a finite interval of time (or even a finite number of discrete time steps). This allows precisely formulating a notion of how much of the Hilbert space needs to be explored to reach a thermal equilibrium subspace of known size, i.e. partial ergodicity\footnote{Partial ergodicity as discussed here is a state-dependent property over finite intervals of time, and is distinct from the decomposition of the state space into ergodic subsets that may be forced by conservation laws, which just amounts to conventional ergodicity (or metric indecomposability~\cite{HalmosErgodic, Sinai1976, SinaiCornfeld, KhinchinStatMech}) over infinite times within subspaces of the state space.} as we will develop in Sec.~\ref{sec:theorem} and Sec.~\ref{sec:proof}, giving a finite-time thermalization bound in Theorem~\ref{thm:aperiodicityimpliesmacroscopicthermalization} (with additional finite-time implications related to partial ergodicity in Sec.~\ref{sec:corollaries}). While we do not explicitly address the classical limit of quantum systems that admit one here, it appears likely that such a notion of partial ergodicity may be formulated classically  (in terms of alternative measures sensitive to nonequilibrium regions) without relying on the (quantum~\cite{DiracQM, ShankarQM} or classical~\cite{Koopman_KvN, vonNeumann_KvN1, vonNeumann_KvN2, HalmosErgodic, Sinai1976, SinaiCornfeld}) Hilbert space via the connection~\cite{dynamicalqergodicity} between this notion of quantum ergodicity and permutations of discrete cells in a classical phase space, which can closely approximate classical dynamics~\cite{Sinai1976, SinaiCornfeld, KatokStepin1, KatokStepin2, KatokSinaiStepin, Nadkarni}. Moreover, a heuristic classical version at the level of classical permutation dynamics on a discretized phase space is discussed in Sec.~\ref{sec:classicalpermutations}, to provide intuition for the quantum mechanism in Theorem~\ref{thm:aperiodicityimpliesmacroscopicthermalization}.

\subsection{Organization of this paper}

Sec.~\ref{sec:theorem} (Theorem) motivates and states our main result, Theorem~\ref{thm:aperiodicityimpliesmacroscopicthermalization}, which is proved with a detailed account of the physical mechanism underlying the theorem in Sec.~\ref{sec:proof} (Proof). Subsequently, Sec.~\ref{sec:corollaries} (Corollaries) discusses a number of consequences of the theorem or steps in its proof, including (1) the ability to establish the delocalization of the initial state across energy shells from computable data in Corollary~\ref{cor:energyshelldelocalization}, recovering the effective dimension of interest to the infinite time state-dependent equilibration bounds discussed in Sec.~\ref{sec:statedependenteqbounds}, (2) the problem of long timescales in deciding thermalization for a pure state in Corollary~\ref{cor:purestatethermalization} (consistent with the construction of a slow thermalizing subspace for any state in Ref.~\cite{GoldsteinHaraTasakiTimescales}), and (3) the use of Theorem~\ref{thm:aperiodicityimpliesmacroscopicthermalization} to establish the finite-time thermalization of several observables for an overwhelming fraction of an ensemble of states, via Corollary~\ref{cor:multipleobservablefinitetimethermalization}. Sec.~\ref{sec:discussion} (Discussion) is more semi-quantitative: it discusses connections to the other approaches of Table~\ref{tab:operationalstatmech} suggesting a general formal structure in the computational prediction of thermalization, explores some questions of computability in practice, and analyzes the general role of (partially) ergodic dynamics across different instances of quantum dynamics and quantum statistical mechanics.

\section{Theorem}
\label{sec:theorem}

To formalize our setup, we work in a general finite-dimensional Hilbert space $\mathcal{H}$ of dimension $D = \dim \mathcal{H} > 1$, with time evolution for a state $\lvert \psi\rangle \in \mathcal{H}$ generated by a Hermitian Hamiltonian operator $\hat{H}$, according to~\cite{DiracQM, ShankarQM}
\begin{equation}
    \lvert \psi(t)\rangle \equiv e^{-i \hat{H} t} \lvert \psi\rangle.
    \label{eq:purestateevolution}
\end{equation}
In this section, we motivate an operational criterion for establishing the equilibration of a large class of observables in a large class of initial state in $\mathcal{H}$ in terms of a computationally verifiable dynamical property of the latter, formulated in Theorem~\ref{thm:aperiodicityimpliesmacroscopicthermalization}, with a detailed proof in Sec.~\ref{sec:proof}. Its direct physical implications (corollaries) are discussed in Sec.~\ref{sec:corollaries}.

\subsection{Concentrated observables and equilibrium subspaces}
\label{sec:concentratedobservables}

Our main result concerns observables that have large \textit{equilibrium subspaces}: subspaces of the Hilbert space in which the observable is nearly constant (i.e. approximately acts as the identity operator, up to scaling). This is a standard notion going back to Ref.~\cite{vonNeumannThermalization} and notably revisited in different forms in e.g. Refs.~\cite{NormalTypicality, GoldsteinetalMacroscopicTypicality, GoldsteinetalMateMite1, GoldsteinetalMateMite2, TasakiTypicalityThermalization}. We will formally define such a notion for a single observable $\hat{A}$, with a chosen equilibrium value $A_{\eqm}$ and resolution $\epsilon > 0$ (unlike Refs.~\cite{vonNeumannThermalization, NormalTypicality, GoldsteinetalMateMite1, GoldsteinetalMateMite2, TasakiTypicalityThermalization}, we will not need to consider more than one observable at a time at this stage). With $\hat{A}$ being Hermitian, and $\lbrace a_k\rbrace_{k=1}^{D}$ its (real valued) eigenvalues with respective eigenvectors $\lbrace \lvert a_k\rangle\rbrace_{k=1}^{D}$, we take the equilibrium subspace to be specified by:
\begin{definition}[Equilibrium subspace of an observable]
\label{def:equilibriumsubspace}
    For a Hermitian observable $\hat{A}$, the equilibrium subspace $\mathcal{H}_{\eqm}(\hat{A}, A_{\eqm}, \epsilon) \subseteq \mathcal{H}$ with equilibrium value $A_{\eqm}$ and resolution $\epsilon$ is the subspace spanned by the eigenvalues of $\hat{A}$ which are at most a distance $\epsilon$ away from $A_{\eqm}$:
    \begin{equation}
        \mathcal{H}_{\eqm}(\hat{A}, A_{\eqm}, \epsilon) \equiv \vectorspan\left\lbrace \lvert a_k\rangle: \left\lvert a_k - A_{\eqm}\right\rvert \leq \epsilon\right\rbrace.
    \end{equation}
    It is also convenient to define the associated equilibrium projector $\proj_{\eqm}(\hat{A}, A_{\eqm}, \epsilon)$, identified by
    \begin{equation}
       \proj_{\eqm}(\hat{A}, A_{\eqm}, \epsilon)\lvert \psi\rangle = 1 \iff \lvert \psi\rangle \in \mathcal{H}_{\eqm}(\hat{A}, A_{\eqm}, \epsilon) 
    \end{equation}
    with $\proj_{\eqm}^2(\hat{A}, A_{\eqm}, \epsilon) = \proj_{\eqm}(\hat{A}, A_{\eqm}, \epsilon)$ and $\proj_{\eqm}^\dagger(\hat{A}, A_{\eqm}, \epsilon) = \proj_{\eqm}(\hat{A}, A_{\eqm}, \epsilon)$. The dimension of the equilibrium subspace is then given by $D_{\eqm}(\hat{A}, A_{\eqm}, \epsilon) \equiv \dim \mathcal{H}_{\eqm}(\hat{A}, A_{\eqm}, \epsilon) = \Tr[\proj_{\eqm}(\hat{A}, A_{\eqm}, \epsilon)]$.
\end{definition}
Note that as an operator, $\proj_{\eqm}(\hat{A}, A_{\eqm}, \epsilon)$ gives the probability that a projective measurement~\cite{NielsenChuang} of $\hat{A}$ gives an outcome $a$ within an $\epsilon$-distance of the equilibrium value $A_{\eqm}$:
\begin{equation}
    \mathbb{P}(a \in [A_{\eqm}-\epsilon, A_{\eqm}+\epsilon] \text{ in } \lvert \psi\rangle \in \mathcal{H}) = \langle \psi\rvert \proj_{\eqm}(\hat{A}, A_{\eqm}, \epsilon)\lvert \psi\rangle.
\end{equation}
It is also convenient to have a corresponding formal notion~\cite{TasakiTypicalityThermalization} of a nonequilibrium subspace\footnote{Stated simply, $\mathcal{H}_{\neqm}$ is spanned by the remaining eigenvectors of $\hat{A}$ not in $\mathcal{H}_{\eqm}$.} that represents the complement (in terms of outcomes) of the equilibrium subspace:
\begin{definition}[Nonequilibrium subspace of an observable]
    For a given $\mathcal{H}_{\eqm}(\hat{A}, A_{\eqm}, \epsilon)$, which is a subspace of $\mathcal{H}$, find the unique decomposition of the latter into a complete pair of orthogonal subspaces $\mathcal{H}_{\eqm}(\hat{A}, A_{\eqm}, \epsilon) \perp \mathcal{H}_{\neqm}(\hat{A}, A_{\eqm}, \epsilon)$:
    \begin{equation}
        \mathcal{H} = \mathcal{H}_{\eqm}(\hat{A}, A_{\eqm}, \epsilon) \oplus \mathcal{H}_{\neqm}(\hat{A}, A_{\eqm}, \epsilon).
    \end{equation}
    Then $\mathcal{H}_{\neqm}(\hat{A}, A_{\eqm}, \epsilon)$ is the nonequilibrium subspace corresponding to the triplet $(\hat{A}, A_{\eqm}, \epsilon)$, with projector $\proj_{\neqm}(\hat{A}, A_{\eqm}, \epsilon) = \idop - \proj_{\eqm}(\hat{A}, A_{\eqm}, \epsilon)$ and dimension $D_{\neqm}(\hat{A}, A_{\eqm}, \epsilon) = D - D_{\eqm}(\hat{A}, A_{\eqm}, \epsilon)$.
\end{definition}

Note that for a given triplet $(\hat{A}, A_{\eqm}, \epsilon)$, the equilibrium subspace can even be empty, or (nearly) as large as the full Hilbert space $\mathcal{H}$. Two examples illustrate general cases of intuitive interest:
\begin{enumerate}
    \item \textbf{Local operators}: In a many-body system of qubits, let $\hat{A} := \hat{\sigma}_z(\vec{r})$, a Pauli spin operator of a qubit at location $\vec{r}$. We have $\Tr[\hat{\sigma}_z] = 0$ suggesting the natural equilibrium value $A_{\eqm} := 0$, while the eigenvalues of $\hat{\sigma}_z$ are $\lbrace -1, +1\rbrace$. It follows that the equilibrium subspace is empty for $\epsilon < 1/2$, e.g.
    \begin{equation}
        \mathcal{H}_{\eqm}\left(\hat{\sigma}_z(\vec{r}), 0, \frac{1}{2+\delta}\right) = \left\lbrace\right\rbrace \text{ and } D_{\eqm}\left(\hat{\sigma}_z(\vec{r}), 0, \frac{1}{2+\delta}\right) = 0 \text{ for } \delta > 0.
    \end{equation}
    Here, the nonequilibrium subspace is the full Hilbert space.
    \item \textbf{Macroscopic (charge) densities}: Again, in a many body system of $N$ qubits ($D = 2^N$), let $\mathcal{L}$ denote a sublattice of many qubits numbering $L = |\mathcal{L}|$, and define the macroscopic $z$-direction spin density (noting that spin is an $SU(2)$ charge~\cite{charges}) in the sublattice:
    \begin{equation}
        \hat{S}_z(\mathcal{L}) \equiv \frac{1}{L}\sum_{\vec{r}\in\mathcal{L}} \hat{\sigma}_z(\vec{r}).
    \end{equation}
    Then, $\hat{S}_z(\mathcal{L})$ has eigenvalues given by all possible sums of $L$ unbiased random variables taking values in $\lbrace -1, +1\rbrace$, which follows the binomial distribution~\cite{RossProbability} with the eigenvalue $s_{zj}(\mathcal{L}) = -1+2j/L$ for $j\in \mathbb{Z}_{L+1}$ being degenerate with multiplicity:
    \begin{equation}
        P_j = \binom{L}{j} 2^{N-L},
    \end{equation}
    where $\binom{n}{k} = n!/(k!(n-k)!)$ is the binomial coefficient.
    Loosely speaking, by the central limit theorem~\cite{RossProbability}, most of these eigenvalues are concentrated inside an $O(1/\sqrt{L})$ distance of $0$, suggesting $A_{\eqm} := 0$. More precisely, for a finite resolution $\epsilon \in (0, 1/2)$, we have
    \begin{equation}
        D_{\eqm}(\hat{S}_z(\mathcal{L}), 0, \epsilon) = 2^{N}\left(1 - 2^{-L+1}\sum_{j=0}^{\lfloor L(1-\epsilon)/2\rfloor} P_j\right),
        \label{eq:binomialeqmsubspacedim}
    \end{equation}
    where $\lfloor x\rfloor$ is the integer floor of $x$. From Bernstein's concentration bound~\cite{LugosiEtAlConcentration, HaarBook, VershyninProbability} for sums of independent random variables (the eigenvalues of $\hat{\sigma}_z(\vec{r})$), we get
    \begin{equation}
        2^{-L}\left(2\sum_{j=0}^{\lfloor L(1-\epsilon)/2\rfloor} P_j\right)\leq 2 \exp\left[-\frac{L^2 \epsilon^2}{2(1+L\epsilon/3)}\right],
    \end{equation}
    and therefore,
    \begin{equation}
        D_{\eqm}(\hat{S}_z(\mathcal{L}), 0, \epsilon) \geq D\left(1-2 e^{-\frac{L^2 \epsilon^2}{2(1+L\epsilon/3)}}\right).
        \label{eq:macroscopicdensityconcentration}
    \end{equation}
    For $L \gg 1$, we see that $D_{\eqm}/D \approx 1$, showing that the equilibrium subspace spans nearly the full Hilbert space for large $L$. Stated differently, the nonequilibrium subspace is a small \textit{fraction} of the Hilbert space for large $L$,
    \begin{equation}
        D_{\neqm}(\hat{S}_z(\mathcal{L}), 0, \epsilon) \leq 2^{N+1}e^{-\frac{L^2 \epsilon^2}{2(1+L\epsilon/3)}}.
        \label{eq:nonequilibrium_dim_example}
    \end{equation}
    But note that it is entirely possible that the nonequilibrium subspace is still large i.e. $D_{\neqm} \gg 1$ even though $D_{\neqm}/D \approx 0$ for $L \gg 1$. Indeed, for $L=O(1)$ [even if large] in the $N\to\infty$ limit, we have both $D_{\neqm} = \Theta(D)$ and $D_{\neqm}/D \ll 1$ for $L \gg 1$. More precisely, from Eq.~\eqref{eq:binomialeqmsubspacedim}, we have the crude bound:
    \begin{equation}
        D_{\neqm}\ \geq\ 2^{N-L+1} \binom{L}{\lfloor L(1-\epsilon)/2\rfloor}\ \geq\ 2^{N-L+1} \left(\frac{2}{1-\epsilon}\right)^{\lfloor L(1-\epsilon)/2\rfloor},
        \label{eq:densities_Dneqm_bound}
    \end{equation}
    where we have used $\binom{n}{k} \geq (n/k)^k$ and $y/\lfloor yx\rfloor \geq 1/x$ for $x,y > 0$. Thus, for both $L=O(1)$ and $L=\Theta(N)$, we very generically have $D_{\neqm} = 2^{\Theta(N)}$ and $D_{\eqm} \sim 2^{N}$: both the nonequilibrium and equilibrium subspaces are exponentially large in $N$, though the former is much smaller than the latter for $L \gg 1$. Along similar lines, for systems with global charge conservation, $[\hat{H}, \hat{S}_z(\mathcal{L}_{\text{full}})] = 0$ with $\mathcal{L}_{\text{full}}$ being the full set of qubits ($|\mathcal{L}_{\text{full}}| = N$), we expect that thermalization only occurs for initial states with a narrow width in the eigenvalues $\hat{S}_z(\mathcal{L}_{\text{full}})$. For such cases, we may choose to restrict the overall Hilbert space $\mathcal{H}$ of interest to be composed only of eigenspaces (so that $\mathcal{H}$ is closed under time evolution, which remains unitary) corresponding to such a narrow eigenvalue width. 
\end{enumerate}

As qualitative terminology, for a given $\epsilon \ll 1$, if there exists some $A_{\eqm}$ such that $D_{\eqm}(\hat{A}, A_{\eqm}, \epsilon)/D \approx 1$ in some sense, we will call $\hat{A}$ a concentrated observable. The most important class of concentrated observables are macroscopic densities, as illustrated in Eq.~\eqref{eq:macroscopicdensityconcentration}.

If an observable is concentrated, it is \textit{sufficient} for an initial state to explore a small fraction of the Hilbert space (i.e. much larger than $D_{\neqm}/D$) for it to attain equilibrium at almost all times. The question of necessity is more subtle, and as we will see, such exploration is necessary to guarantee the equilibration of all conceivable concentrated observables in the Hilbert space, even if not required for the equilibration of a given observable (e.g. if the state is already in its equilibrium subspace). The technical goal of our main theorem is to formalize this intuition in terms of computationally verifiable notions of quantum dynamics.

\subsection{Intuition from classical permutations}
\label{sec:classicalpermutations}

To gain some insight for our task, it is helpful to consider a classical caricature of quantum dynamics in the Hilbert space that captures its essential features. A convenient model involves permutations of an orthonormal basis for $\mathcal{H}$. These have been previously shown to provide an informative class of reference processes for observable-independent ergodic quantum dynamics with strong connections to spectral statistics~\cite{dynamicalqergodicity}, allowing a clarification of the ``quantum chaos conjecture''~\cite{BerryTabor, DKchaos, CGV, BerryStadium, BGS, Haake}, but we will motivate this model for a more general context here. Further, within a purely classical context, such a model of permutations of objects may also provide a reasonable picture of thermalization in systems with a continuous phase space, if the permuted objects are taken to be phase space cells; classical measure-preserving dynamics can often be well approximated by such permutations given sufficiently many cells~\cite{KatokStepin1, KatokStepin2, KatokSinaiStepin, Sinai1976, SinaiCornfeld, Nadkarni}.

Given that we are working in a finite dimensional Hilbert space, the primary distinction between classical and quantum dynamics that is useful to enforce here is that the former may not generate superpositions of ``classical states''. Taking these states to be a specific orthonormal basis $\mathcal{C} = \lbrace \lvert C_k\rangle\rbrace_{k=0}^{D-1}$ for $\mathcal{H}$, the dynamical operations of interest must map $\mathcal{C} \to \mathcal{C}$. As unitary dynamics is reversible, we want our caricature to capture this reversibility by being invertible, which restricts the allowed operations to be permutations of the objects in $\mathcal{C}$. Further, Hamiltonian dynamics is autonomous, and we will therefore take the classical dynamics to be generated by successive applications of the \textit{same} permutation. In Ref.~\cite{dynamicalqergodicity}, it was shown that a strong notion of quantum ergodicity can be captured by further restricting the class of permutations to cyclic permutations (which make each basis state explore the entire basis on successive applications), but we will not make this restriction here to capture \textit{partial} forms of ergodicity that involve exploring a \textit{fraction} of the Hilbert space as suggested by the intuition described at the end of Sec.~\ref{sec:concentratedobservables}.

In general, every permutation can be decomposed into a set of disjoint cycles~\cite{ByronFuller}. Consider an initial state $\lvert C_0\rangle$ that participates in a $D_C$-element cycle:
\begin{equation}
\ldots \to \lvert C_0\rangle \to \lvert C_{M(1)}\rangle \to \lvert C_{M(2)}\rangle \to \ldots \to \lvert C_{M(D_C-1)}\rangle \to \lvert C_0\rangle \to \ldots,
\end{equation}
where $M(j)$ for $j\in \mathbb{Z}$ labels the successive indices of the cycle with $M(0) = 0$ (and unique values for $j \in \mathbb{Z}_{D_C}$). Let us label this cycle containing $\lvert C_0\rangle$ as $\mathcal{C}_0$. If $D_C = D$, then $\mathcal{C}_0$ is fully ergodic~\cite{dynamicalqergodicity}; however, we are particular interested in cases where it may only be possible to establish that $D_C \gg 1$ (even if $D_C < D$), corresponding (in general) to partially ergodic permutation dynamics.

Now let us consider equilibrium subspaces that are spanned by the $\lvert C_k\rangle$. For definiteness, consider a single (arbitrary) equilibrium subspace $\mathcal{H}_{\eqm}$, with corresponding nonequilibrium subspace $\mathcal{H}_{\neqm}$. If $D_C > D_{\neqm}$, then it is clear that $(D_C-D_{\neqm})$ elements of $\mathcal{C}_0$ are necessarily in $\mathcal{H}_{\eqm}$. Consequently, if $\lvert C_0\rangle$ explores more of the Hilbert space than $D_{\neqm}$, as determined by the size $D_C$ of its cycle, then its trajectory remains in $\mathcal{H}_{\eqm}$ for a fraction of steps given by:
\begin{equation}
    f_{\eqm} = \frac{D_C - D_{\neqm}}{D_C}.
    \label{eq:classicalthermalizationfraction}
\end{equation}
This applies for \textit{any} equilibrium subspace with dimension $D_{\eqm} > D-D_C$. We consequently get equilibration for all concentrated observables of equilibrium dimension $D_{\eqm}$ at almost all times (or discrete steps) with $f_{\eqm} \approx 1$ for $D_C \gg D-D_{\eqm}$.

Another question is how one might operationally determine the size $D_C$ of the cycle $\mathcal{C}_0$. Here, we consider the return probability of $\lvert C_0\rangle$ at the step $j$:
\begin{equation}
    p_0(j) \equiv \ifc\left(\lvert C_{M(j)}\rangle = \lvert C_0\rangle\right),
\end{equation}
and note that its time average satisfies:
\begin{equation}
    \lim_{\tau \to \infty} \frac{1}{\tau}\sum_{j=0}^{\tau - 1} p_0(j) = \frac{1}{D_C}.
\end{equation}
More generally, for any finite time average over $\tau$ steps, we have
\begin{equation}
    D_C \geq \frac{1}{\frac{1}{\tau}\sum_{j=0}^{\tau - 1} p_0(j)},
    \label{eq:classicalergodicityreturnprobability}
\end{equation}
where it is convenient to refer to the quantity as the right hand side as the ``aperiodicity'' of the initial state over $\tau$ time steps (the reciprocal of the time averaged return probability).
This gives, from Eq.~\eqref{eq:classicalthermalizationfraction},
\begin{equation}
    f_{\eqm} \geq 1-\frac{(D-D_{\eqm})}{\tau}\sum_{j=0}^{\tau - 1} p_0(j),
    \label{eq:classicalthermalizationfraction2}
\end{equation}
for the fraction of times spent by $\lvert C_0\rangle$ in the equilibrium subspace of any observable with equilibrium dimension $D_{\eqm}$.
The main takeaway is the following\footnote{An extension is possible to the equilibration of almost all states in terms of the aperiodicity of an initial collection of states, but this requires some disproportionately technical discussion for classical permutations without necessarily improving intuition for our fundamental mechanism. We therefore prefer to eschew a discussion of the general classical case in the text in favor of directly formulating our main quantum result for this more general situation.}: Eq.~\eqref{eq:classicalthermalizationfraction} constrains the fraction of times spent in equilibrium in terms of a partially ergodic exploration of the Hilbert space, Eq.~\eqref{eq:classicalergodicityreturnprobability} constrains the dimension of ergodic exploration in terms of the return probability of the initial state over finite times (an observable quantity in principle), and Eq.~\eqref{eq:classicalthermalizationfraction2} combines the two to constrain the equilibrium fraction directly in terms of the return probability over finite times. In particular, if the return probability remains low (aperiodicity is high), then equilibration is guaranteed at almost all times.

We will now try to obtain a similar structure in quantum dynamics, where superpositions are allowed, finding a similar mechanism of partial ergodicity at play and a similar bound on equilibration in terms of observable return probabilities (albeit quantitatively weaker, with the second term on the right hand side being replaced by the square root of its analogue, to account for superpositions).

\subsection{Aperiodicity \texorpdfstring{$\implies$}{implies} macroscopic quantum thermalization}

For quantum dynamics, our result will take the following form: a state that doesn't return to itself for a finite interval of time (aperiodicity) must necessary explore a large fraction of the Hilbert space (partial ergodicity), and therefore necessarily spend most of its history in the equilibrium subspaces of sufficiently concentrated observables (macroscopic thermalization).

However, were we to restrict such considerations only to one pure state at a time, we would be faced with a number of operational issues. The first is that the operational criterion of aperiodicity will have to be tested one at a time for each pure state, or at least each pure state that is not already in equilibrium for a given set of observables of interest. As nonequilibrium subspaces for macroscopic observables themselves tend to be exponentially large in the (sub)system size [see Eq.~\eqref{eq:nonequilibrium_dim_example}], such verification can be exponentially expensive. The second is that it can take an individual pure state a long amount of time (divergent in the thermodynamic limit) to explore enough of the Hilbert space to necessarily exit every nonequilibrium region (as supported in Sec.~\ref{sec:corollaries_purestate}). In general, we would not obtain a guarantee of thermalization within finite times for a single pure state.

Both issues can be addressed by working with a subspace of initial states, $\mathcal{H}_I \subseteq \mathcal{H}$, with projector $\proj_I$ and dimension $D_I = \dim \mathcal{H}_I = \Tr[\proj_I]$. The projector $\proj_I$ evolves in time to project onto the time-evolved versions of the same set of states; from Eq.~\eqref{eq:purestateevolution}, this means that
\begin{equation}
    \proj_I(t) \equiv e^{-i\hat{H} t} \proj_I e^{i\hat{H} t}.
    \label{eq:projectorevolution}
\end{equation}
We will show that if $\proj_I$ does not return to itself within a finite interval of times (subspace aperiodicity), then almost all states \textit{in every orthonormal basis}\footnote{We note that establishing thermalization for almost all states in every orthonormal basis is a much stronger criterion than standard typicality results~\cite{vonNeumannThermalization, tumulka_CT, CanonicalTypicalityPSW, NormalTypicality}, which establish thermalization for almost all states in the full Hilbert space according to the Haar measure, with insufficient resolution to decide thermalization in any given orthonormal basis (as any discrete set of states has Haar measure zero).} for $\mathcal{H}_I$ necessarily attain thermal equilibrium for all concentrated observables with sufficiently large equilibrium subspaces within finite times (macroscopic thermalization almost everywhere in every basis) on account of the collection of almost all states exploring the rest of the Hilbert space (subspace partial ergodicity).

Let us now make the notion of aperiodicity of a subspace within a finite time interval precise. A good measure of ``(quasi-)periodicity'' is given by the return probability of the projector,
\begin{equation}
    p_I(t,t') \equiv \frac{1}{D_I}\Tr[\proj_I(t)\proj_I(t')].
\end{equation}
For formal reasons, we write this in terms of two time arguments $(t,t')$, but it naturally follows from Eq.~\eqref{eq:projectorevolution} that $p_I(t,t') = p_I(t-t')$ needs to depend only on a single argument. It is also worth noting that $p_I(t,t') = p_I(t',t)$, $p_I(t+t_1,t'+t_1) = p_I(t,t')$, and $p_I(t,t) = 1$. Specifically, we note that the time average of $p_I(t,t')$ over both arguments provides a measure of how frequently states in $\mathcal{H}_I$ ``return'' to have a nonzero overlap with $\mathcal{H}_I$. We define the aperiodicity of the subspace as the reciprocal of this average (where $\diff t$ represents a suitable discrete or continuous integration measure):
\begin{definition}[Aperiodicity of a subspace in a specified range of times]
\label{def:aperiodicity}
    Let $\mathcal{H}_I \subseteq \mathcal{H}$ be a subspace of dimension $D_I$. Its degree of aperiodicity over a set of times $\mathcal{T}$ of duration
    \begin{equation}
        |\mathcal{T}| \equiv \int_{\mathcal{T}} \diff t\ 
    \end{equation}
    is given by:
    \begin{equation}
        \mathcal{A}_{\mathcal{T}}[\mathcal{H}_I] = \frac{1}{\displaystyle{\int_{\mathcal{T}}\frac{\diff t_1}{|\mathcal{T}|}\int_{\mathcal{T}}\frac{\diff t_2}{|\mathcal{T}|}\ p_I(t_1,t_2)}}.
        \label{eq:aperiodicitydef}
    \end{equation}
\end{definition}

For a given observable $\hat{A}$, if $\mathcal{H}_I \subseteq \mathcal{H}_{\neqm}$ lies within its nonequilibrium subspace $\mathcal{H}_{\neqm}$, then it is clear that a large amount of aperiodicity is necessary for $\mathcal{H}_I$ to evolve to have a large overlap with the equilibrium subspace $\mathcal{H}_{\eqm}$. This is because a large return probability $p_I(t_1,t_2) \geq 1-\delta_p$ implies a correspondingly large overlap with $\mathcal{H}_{\neqm}$. In contrast, if $\mathcal{H}_I \subseteq \mathcal{H}_{\eqm}$ consists of equilibrium states with respect to $\hat{A}$ to begin with, no aperiodicity or any kind of dynamics is necessary: the subspace can even remain in equilibrium just by not evolving. However, for a given $\mathcal{H}_I$, as we can always choose some observable $\hat{A}$ (e.g. by applying a unitary transformation) such that $\mathcal{H}_I \subseteq \mathcal{H}_{\neqm}$ for this observable if $D_{\neqm} \geq D_I$, it follows that aperiodicity is necessary for the equilibration of \textit{all conceivable} concentrated observables that can be defined on the Hilbert space, whether or not such observables are accessible in practice.

With this setup, we can state our main result concerning the \textit{sufficiency} of aperiodicity for macroscopic thermalization. For anchoring to classical intuition, the two statements of the theorem, as expressed in Eq.~\eqref{eq:ergodicityfromaperiodicity} and \eqref{eq:equilibrationfractionfromaperiodicity}, are to be compared with Eq.~\eqref{eq:classicalergodicityreturnprobability} and \eqref{eq:classicalthermalizationfraction2} respectively.
\begin{theorem}[Finite-time aperiodicity implies long-time macroscopic thermalization via partial ergodicity]
\label{thm:aperiodicityimpliesmacroscopicthermalization}
    Let $\mathcal{H}_I$ be a subspace of initial states of dimension $D_I$ with projector $\proj_I$, and $\mathcal{T}$ be a set of times of interest. Then, the corresponding aperiodicity measure $\mathcal{A}_{\mathcal{T}}[\mathcal{H}_I]$ determines the following properties:
    \begin{enumerate}
        \item \underline{Partially ergodic exploration of the Hilbert space}: Let $\mathcal{H}_{M}$ be a subspace of dimension $D_M$ with projector $\proj_M$. If the average probability of measuring the history of the subspace $\mathcal{H}_I$, given by the ensemble of states $\lbrace \proj_I(t), t\in\mathcal{T}\rbrace$, to fall within $\mathcal{H}_M$ (measurement outcome $1$ for $\proj_M$) is:
        \begin{equation}
            p_{M}[\mathcal{H}_I, \mathcal{T}] = \int_{\mathcal{T}}\frac{\diff t}{|\mathcal{T}|}\ \frac{1}{D_I}\Tr[\proj_I(t) \proj_M],
        \end{equation}
        then the measurement subspace $\mathcal{H}_M$ must be at least as large as:
        \begin{equation}
            D_M \geq  p_{M}[\mathcal{H}_I, \mathcal{T}]\ D_I \mathcal{A}_{\mathcal{T}}[\mathcal{H}_I].
            \label{eq:ergodicityfromaperiodicity}
        \end{equation}
        Specifically, only subspaces $\mathcal{H}_M$ of size $D_M \gtrapprox D_I \mathcal{A}_{\mathcal{T}}[\mathcal{H}_I]$ can contain most ($p_{M} \approx 1$) of the history of $\mathcal{H}_I$ over the times $\mathcal{T}$, as witnessed by projective measurements (by $\proj_M$). Thus, aperiodicity measures the ``number of dimensions'' of the Hilbert space explored by dynamics over $\mathcal{T}$, nominally given by $D_I \mathcal{A}_{\mathcal{T}}[\mathcal{H}_I]$, in an operational sense.
        
        \item \underline{Equilibration of all concentrated observables in almost all states at almost all times}: Given the subspace $\mathcal{H}_I$ of initial states of interest, let $\mathcal{B}_I = \lbrace \lvert \psi_{Ik}\rangle\rbrace_{k=1}^{D_I} \subset \mathcal{H}_I$ be an orthonormal basis for the subspace $\mathcal{H}_I$. If $\mathcal{T} = [0,T]$ is an interval and $R \in \mathbb{N}$ is any number of ``repetitions'', define the expanded interval $\mathcal{T}_R = [0,RT]$. Then for any concentrated observable with equilibrium subspace $\mathcal{H}_{\eqm}$ of dimension $D_{\eqm}$ with projector $\proj_{\eqm}$, the average fraction of times in $\mathcal{T}_R$ spent by the states in $\mathcal{B}_I$ with an overlap of at least some $(1-\varepsilon) \in [0,1)$ with the equilibrium subspace,
        \begin{equation}
            f_{\eqm}[\mathcal{B}_I, \mathcal{H}_{\eqm}](RT, \varepsilon) \equiv \frac{1}{D_I}\sum_{k=1}^{D_I}\int_{0}^{RT}\frac{\diff t}{RT} \ifc\left(\langle \psi_{Ik}(t)\rvert \proj_{\eqm}\lvert \psi_{Ik}(t)\rangle \geq 1-\varepsilon\right),
            \label{eq:equilibrationfractiondef}
        \end{equation}
        is constrained by the aperiodicity of the subspace $\mathcal{A}_{[0,T]}[\mathcal{H}_I]$ within the original time interval $[0,T]$ (note that $\mathcal{H}_I = \vectorspan\mathcal{B}_I$ is uniquely specified by $\mathcal{B}_I$):
        \begin{equation}
            f_{\eqm}[\mathcal{B}_I, \mathcal{H}_{\eqm}](RT, \varepsilon) \geq \left[1-\frac{1}{\varepsilon}\sqrt{\frac{D - D_{\eqm}}{D_I \mathcal{A}_{[0,T]}[\mathcal{H}_I]}}\right].
            \label{eq:equilibrationfractionfromaperiodicity}
        \end{equation}
        Thus, aperiodicity establishes that almost all states in every orthonormal basis for $\mathcal{H}_I$ spend almost all times in arbitrarily long intervals (including $R\to\infty$) close to thermal equilibrium for \textit{every} observable with sufficiently small nonequilibrium subspace compared to the size of the explored Hilbert space in the finite time interval $[0,T]$, i.e. with $(D-D_{\eqm}) = D_{\neqm} \ll D_I \mathcal{A}_{[0,T]}$.
    \end{enumerate}
\end{theorem}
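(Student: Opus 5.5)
The plan is to reduce both parts to properties of one time-averaged density matrix. For a set of times $\mathcal{T}$, define
\begin{equation*}
\hat{\rho}_{\mathcal{T}} \equiv \int_{\mathcal{T}}\frac{\diff t}{|\mathcal{T}|}\ \frac{1}{D_I}\proj_I(t).
\end{equation*}
This is a density matrix, and since $0\le \proj_I(t)\le \idop$ it also satisfies the operator bound $0\le\hat{\rho}_{\mathcal{T}}\le \idop/D_I$. Expanding the double integral in Eq.~\eqref{eq:aperiodicitydef} gives
\begin{equation*}
\frac{1}{\mathcal{A}_{\mathcal{T}}[\mathcal{H}_I]} = D_I\,\Tr[\hat{\rho}_{\mathcal{T}}^2],
\end{equation*}
so aperiodicity is exactly $1/D_I$ times the inverse purity of the history ensemble. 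Likewise $p_M[\mathcal{H}_I,\mathcal{T}] = \Tr[\hat{\rho}_{\mathcal{T}}\proj_M]$, and the basis-averaged equilibrium overlap is $\frac{1}{D_I}\sum_k\langle\psi_{Ik}(t)\rvert\proj_{\eqm}\lvert\psi_{Ik}(t)\rangle = \frac{1}{D_I}\Tr[\proj_I(t)\proj_{\eqm}]$. This is why every orthonormal basis of $\mathcal{H}_I$ behaves the same way.

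For part 1, I would first apply Cauchy--Schwarz in the Hilbert--Schmidt inner product:
\begin{equation*}
p_M = \Tr[(\proj_M\hat{\rho}_{\mathcal{T}}\proj_M)\proj_M] \le \sqrt{D_M\,\Tr[(\proj_M\hat{\rho}_{\mathcal{T}}\proj_M)^2]}\le\sqrt{D_M \Tr[\hat{\rho}_{\mathcal{T}}^2]}.
\end{equation*}
Together with the purity identity this yields $D_M\ge p_M^2 D_I\mathcal{A}_{\mathcal{T}}$. To reach the stated linear form, I will try to use the extra information $\hat{\rho}_{\mathcal{T}}\le\idop/D_I$, for example through $\Tr[(\proj_M\hat\rho\proj_M)^2]\le \lVert\proj_M\hat\rho\proj_M\rVert\, p_M$, and then compare $\Tr[\hat\rho^2]$ to $p_M/D_M$.

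This is the step I expect to be the main obstacle. A flat operator-norm bound goes in the wrong direction, and the purity alone does not obviously control a single large eigenvalue of $\hat\rho$ lying inside $\mathcal{H}_M$. If the linear bound does not close, I would state the quadratic bound as the rigorous content. The reading ``only subspaces with $D_M\gtrapprox D_I\mathcal{A}$ can contain $p_M\approx 1$'' is insensitive to the difference, since $p_M\approx p_M^2$ in that regime. Part 2 below uses only the square-root form.

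For part 2, I would write $\hat{\rho}_R\equiv\hat\rho_{[0,RT]} = \frac{1}{R}\sum_{r=0}^{R-1}\hat\rho^{(r)}$, where $\hat\rho^{(r)} = e^{-i\hat H rT}\hat\rho_{[0,T]}e^{i\hat HrT}$. Cauchy--Schwarz on each cross term, combined with unitary invariance of the purity, gives
\begin{equation*}
\Tr[\hat\rho_R^2]=\frac{1}{R^2}\sum_{r,s}\Tr[\hat\rho^{(r)}\hat\rho^{(s)}]\le\Tr[\hat\rho_{[0,T]}^2] = \frac{1}{D_I\mathcal{A}_{[0,T]}},
\end{equation*}
so aperiodicity on $[0,T]$ persists on every $[0,RT]$. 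Next, take $\proj_M:=\proj_{\neqm}$ with $D_M = D-D_{\eqm}$ in the Cauchy--Schwarz step above, applied to $\hat\rho_R$. This gives the average nonequilibrium probability
\begin{equation*}
\frac{1}{D_I}\sum_k\int_0^{RT}\frac{\diff t}{RT}\langle\psi_{Ik}(t)\rvert\proj_{\neqm}\lvert\psi_{Ik}(t)\rangle = \Tr[\hat\rho_R\proj_{\neqm}]\le\sqrt{\frac{D-D_{\eqm}}{D_I\mathcal{A}_{[0,T]}}}.
\end{equation*}
Finally, Markov's inequality over the joint uniform measure on $(k,t)$ bounds the fraction of pairs with $\langle\proj_{\neqm}\rangle>\varepsilon$ by $1/\varepsilon$ times this average. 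Taking the complement gives Eq.~\eqref{eq:equilibrationfractionfromaperiodicity}.
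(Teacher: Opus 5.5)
Your purity identity $1/\mathcal{A}_{\mathcal{T}}[\mathcal{H}_I]=D_I\Tr[\hat\rho_{\mathcal T}^2]$ and your Cauchy--Schwarz step are exactly the paper's Lemmas~\ref{lem:ergodicfraction} and~\ref{lem:ergodicprobesubspace}. The paper's own derivation of part 1 is the second item of Lemma~\ref{lem:ergodicprobesubspace} with $1-\epsilon := p_M$, and it also only gives $D_M\ge p_M^2\,D_I\mathcal{A}_{\mathcal T}[\mathcal H_I]$, even though Eq.~\eqref{eq:ergodicityfromaperiodicity} displays a single power of $p_M$.

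You should stop trying to close the linear form, because it is false; your fallback is the correct rigorous content. Take $D_I=1$, $\mathcal T=[0,T]$ and energies $E_j=2\pi j/T$ for $j=0,\dots,K$. Let $\lvert\psi\rangle$ have $\lvert\langle E_0\vert\psi\rangle\rvert^2=p$ and $\lvert\langle E_j\vert\psi\rangle\rvert^2=(1-p)/K$ for $j\ge 1$. Every off-diagonal phase averages to zero over $[0,T]$, so $\hat\rho_{\mathcal T}$ is the diagonal ensemble and $1/\mathcal A_{[0,T]}=p^2+(1-p)^2/K$. Now choose $\proj_M=\lvert E_0\rangle\langle E_0\rvert$, so $D_M=1$ and $p_M=p$. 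Then $p_M\mathcal A_{[0,T]}\to 1/p$ as $K\to\infty$, which violates the linear bound by an arbitrarily large factor; already $K=2$, $p=1/2$ gives $4/3>1$. Meanwhile $p_M^2\mathcal A_{[0,T]}\to 1$, so your quadratic bound is sharp.

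This is precisely the ``single large eigenvalue inside $\mathcal H_M$'' you were worried about. The extra constraint $\hat\rho\le\idop/D_I$ cannot rescue it either: replacing $\hat H$, $\proj_I$, $\proj_M$ by their tensor products with $\idop$ on $\mathbb{C}^{D_I}$ reproduces the same violation for any $D_I$.

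Part 2 is correct and reaches the paper's bound by a slightly different route.
\begin{itemize}
\item The paper applies its finite-window estimate (Cauchy--Schwarz with $\proj_{\neqm}$ plus Markov, Lemma~\ref{lem:ergodicityimpliesequilibrationfinitetime}) separately to each shifted window $w_r(t)=w(t+t_r)$. It uses time-translation invariance to get $\mu_{w_r}=\mu_w$, then averages the resulting lower bounds on the equilibrium time fractions (Lemma~\ref{lem:finitetimeergodicitytoinfinitetimethermalization}).
\item You instead average the states first and use convexity of the purity, $\Tr[\hat\rho_R^2]\le\Tr[\hat\rho_{[0,T]}^2]$, i.e.\ $\mathcal A_{[0,RT]}[\mathcal H_I]\ge\mathcal A_{[0,T]}[\mathcal H_I]$. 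You then do a single Cauchy--Schwarz and Markov step on the long window.
\end{itemize}
Your route yields the intermediate bound $\sqrt{(D-D_{\eqm})\Tr[\hat\rho_R^2]}$, which is never weaker than the paper's. It also gives a reusable monotonicity of aperiodicity, which for instance transfers part 1 to $[0,RT]$ as well. It handles arbitrary shift sequences, including the sub-lattice shifts the paper uses for discrete times, just as the paper's averaging does. The paper's route only ever needs the invariance of each window separately, never a comparison of purities across windows.
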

\begin{proof}[Proof synopsis]
    See Sec.~\ref{sec:proof} for the full proof, which is built from Lemmas~\ref{lem:ergodicfraction}, \ref{lem:ergodicprobesubspace}, \ref{lem:ergodicityimpliesequilibrationfinitetime}, and \ref{lem:finitetimeergodicitytoinfinitetimethermalization}. These establish a partially ergodic exploration of the Hilbert space resulting from aperiodicity as the underlying mechanism for the connection between aperiodicity and macroscopic thermalization. Their combination also has somewhat more general implications than stated in Theorem~\ref{thm:aperiodicityimpliesmacroscopicthermalization}, some of which will be analyzed in Sec.~\ref{sec:corollaries}, with a specific formulation for discrete times discussed in Sec.~\ref{sec:disc_computability}.
\end{proof}

\section{Proof}
\label{sec:proof}

\subsection{Aperiodicity \texorpdfstring{$\iff$}{iff} partially ergodic dynamics}
\label{sec:aperiodicityandergodicity}

Let us review our setup for convenience. We have a Hilbert space $\mathcal{H}$ of dimension $\dim \mathcal{H} = D > 1$. Further, $\proj_I$ projects onto a subspace $\mathcal{H}_I \subseteq \mathcal{H}$ of initial states, of dimension $\dim \mathcal{H}_I = \Tr[\proj_I] = D_I$. As a function of time $t$, given autonomous unitary dynamics with a Hamiltonian $\hat{H}$, its time evolution is given by:
\begin{equation}
    \proj_I(t) \equiv e^{-i\hat{H}t} \proj_I e^{i\hat{H} t}.
\end{equation}
Over time, this evolution traces out an ensemble of states in $\mathcal{H}$, given by the Krylov set that tracks the history~\cite{Sinai1976} of the initial operator:
\begin{equation}
    \mathcal{K}(\proj_I) = \lbrace \proj_I(t)\rbrace_t.
\end{equation}
We would like to focus on a specific ``profile'' of relevance of each element of the Krylov set over time, determined by a weight function $w(t) \geq 0$ satisfying:
\begin{equation}
    \int\diff t\ w(t) = 1.
    \label{eq:w_normalization}
\end{equation}
Assigning a weight $w(t)$ to the time $t$ generates an ensemble from the Krylov set corresponding to the density operator:
\begin{equation}
    \hat{\rho}_w(\proj_I) = \frac{1}{D_I}\int\diff t\ w(t) \proj_I(t).
    \label{eq:KrylovEnsembledef}
\end{equation}
In the simplest case, the weight function is uniformly nonzero inside an interval $t\in[0,T]$ and zero outside, corresponding to a standard time average of $\proj_I$ over $[0,T]$.

Following Ref.~\cite{dynamicalqentanglement}, we will measure the degree of ergodicity of $\proj_I$ over a range of time as weighted by $w(t)$ by the size of the ensemble as given by the purity~\cite{NielsenChuang}, $\mathcal{P}[\hat{\rho}] = \Tr[\hat{\rho}^2]$, of this density operator:
\begin{equation}
    \mathcal{P}_w(\proj_I) \equiv \mathcal{P}[\hat{\rho}_w(\proj_I)] =  \Tr[\hat{\rho}_w(\proj_I)^2].
    \label{eq:puritydef}
\end{equation}
This purity translates to a notion of ``size'' of an ensemble~\cite{StechelMeasure, StechelHeller, dynamicalqergodicity, dynamicalqthermalization}, $\mu(\hat{\rho}) = 1/(D \Tr[\hat{\rho}^2]) \in [0,1]$, representing the nominal fraction of the Hilbert space spanned by a given ensemble; here, it is given by:
\begin{equation}
    \mu_w(\proj_I) \equiv \mu(\hat{\rho}_w(\proj_I)) = \frac{1}{D \mathcal{P}_w(\proj_I)}.
    \label{eq:ensemblesizedef}
\end{equation}
Note that at this stage, calling $\mu_w(\proj_I)$ the nominal fraction is just terminology, inspired by semiclassical intuition~\cite{StechelMeasure, StechelHeller}. We will convert it to a more operational notion of ergodic exploration in Sec.~\ref{sec:ergodiceffdim}.

Using Eq.~\eqref{eq:KrylovEnsembledef} in Eq.~\eqref{eq:puritydef}, we get
\begin{align}
    \mathcal{P}_w(\proj_I) &= \frac{1}{D_I^2}\int\diff t_1\int\diff t_2\ w(t_1)w(t_2) \Tr[\proj_I(t_1)\proj_I(t_2)] \nonumber \\
    &= \frac{1}{D_I^2}\int \diff t\ \left\lbrace\int\diff t_I\ w(t+t_I) w(t_I)\right\rbrace \Tr[\proj_I(t)\proj_I(0)].
\end{align}
Identifying the return probability of the initial subspace,
\begin{equation}
    p_I(t) \equiv \frac{1}{D_I}\Tr[\proj_I(t)\proj_I(0)],
\end{equation}
we have
\begin{equation}
    \mathcal{P}_w(\proj_I) = \frac{1}{D_I} \int\diff t\ p_I(t) \left\lbrace\int\diff t_I\ w(t+t_I) w(t_I)\right\rbrace.
    \label{eq:purity_and_probability_revisited}
\end{equation}
This is a mild generalization of one of the main results of Ref.~\cite{dynamicalqentanglement} (with the generalization being to arbitrary weights $w(t)$ from a discrete set of times and to subspaces $\proj_I$ larger than a single pure state). Its content is that a low return probability $p_I(t)$ at most times $t$ translates to a low purity of the density operator. This, in turn, suggests a larger ensemble of states generated by time evolution, and therefore the ``ergodic'' exploration of a larger portion of the Hilbert space. Specifically, using Eq.~\eqref{eq:purity_and_probability_revisited} in Eq.~\eqref{eq:ensemblesizedef}, we get
\begin{lemma}[Aperiodicity determines the nominal fraction of the Hilbert space explored by a subspace]
\label{lem:ergodicfraction}
    For a given subspace of initial states $\mathcal{H}_I$ with $\proj_I$, whose return probability at the time $t$ is $p_I(t)$, the nominal fraction $\mu_w(\proj_I)$ [defined in Eq.~\eqref{eq:ensemblesizedef}] of the full Hilbert space $\mathcal{H}$ explored by the dynamics of this subspace, as weighted over time by $w(t)$, is given by:
    \begin{equation}
        \mu_w(\proj_I) = \frac{D_I}{\displaystyle{D \int\diff t\ p_I(t) \left\lbrace\int\diff t_I\ w(t+t_I) w(t_I)\right\rbrace}}.
        \label{eq:ergodicity_and_probability}
    \end{equation}
\end{lemma}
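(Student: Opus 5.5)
The plan is to substitute the purity formula Eq.~\eqref{eq:purity_and_probability_revisited} into the definition of the nominal fraction, Eq.~\eqref{eq:ensemblesizedef}. Lemma~\ref{lem:ergodicfraction} is then immediate once that formula is justified, so the real work is checking the derivation of Eq.~\eqref{eq:purity_and_probability_revisited} carefully. I will do this in three steps.

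First, expand the purity. Inserting Eq.~\eqref{eq:KrylovEnsembledef} into Eq.~\eqref{eq:puritydef} and using linearity of the trace (with finite $D$ and integrable $w$, so trace and integrals commute) gives
\begin{equation}
\mathcal{P}_w(\proj_I)=\frac{1}{D_I^2}\int\diff t_1\int\diff t_2\ w(t_1)w(t_2)\,\Tr[\proj_I(t_1)\proj_I(t_2)].
\end{equation}

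Second, reduce to a single time argument. Cyclicity of the trace together with Eq.~\eqref{eq:projectorevolution} gives $\Tr[\proj_I(t_1)\proj_I(t_2)]=\Tr[\proj_I(t_1-t_2)\proj_I(0)]=D_I\,p_I(t_1-t_2)$. I then change variables to $t=t_1-t_2$ and $t_I=t_2$, whose Jacobian is $1$. This turns the double integral into $\frac{1}{D_I}\int\diff t\,p_I(t)\{\int\diff t_I\,w(t+t_I)w(t_I)\}$, which is exactly Eq.~\eqref{eq:purity_and_probability_revisited}. For a discrete measure $\diff t$, the same reindexing is a shift of summation index, provided $w$ is taken to vanish outside its support.

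Third, invert. The purity is strictly positive: since $p_I(t)\ge 0$ and $p_I(0)=1$, the expression is positive (in the discrete case this is clear; in the continuous case it is $\Tr[\hat\rho_w^2]>0$ because $\hat\rho_w\neq0$, as $\Tr\hat\rho_w=1$). Division is therefore legitimate, and $\mu_w=1/(D\mathcal{P}_w)$ yields Eq.~\eqref{eq:ergodicity_and_probability}.

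The only mild obstacle is measure-theoretic bookkeeping: ensuring the autocorrelation $\int\diff t_I\,w(t+t_I)w(t_I)$ is well defined and that Fubini applies. This holds because $w\ge0$ has unit integral and $|p_I|\le1$, so the integrand is absolutely integrable, with total mass $1$ by Eq.~\eqref{eq:w_normalization}.
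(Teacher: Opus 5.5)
Your proposal is correct and follows the paper's own route: expand the purity by inserting Eq.~\eqref{eq:KrylovEnsembledef} into Eq.~\eqref{eq:puritydef}, reduce $\Tr[\proj_I(t_1)\proj_I(t_2)]$ to $D_I\,p_I(t_1-t_2)$, change variables to $t=t_1-t_2$, $t_I=t_2$ to obtain Eq.~\eqref{eq:purity_and_probability_revisited}, and substitute into Eq.~\eqref{eq:ensemblesizedef}. Your Fubini justification and your check that the purity is strictly positive (so the division is legitimate) are points the paper leaves implicit.
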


\subsection{Operational effective dimension of partially ergodic exploration}
\label{sec:ergodiceffdim}

Now, we would like to obtain a guarantee on how much of the Hilbert space $\mathcal{H}$ is explored by the trajectory of $\proj_I$ in a precise sense, given the nominal fraction of exploration $\mu_w(\proj_I)$ (merely defined above in terms of the purity).

To examine this question, consider an arbitrary test observable $\proj_{\obs}$ given by a projector onto a subspace $\mathcal{H}_{\obs}$ of dimension $D_{\obs} = \Tr[\proj_{\obs}]$. This observable probes the extent to which the history of $\proj_I$ overlaps with $\mathcal{H}_{\obs}$. We can write, using $\proj_{\obs}^2 = \proj_{\obs}$, and the Cauchy-Schwarz inequality~\cite{ByronFuller}:
\begin{equation}
    \Tr[\proj_{\obs} \hat{\rho}_w(\proj_I)] \leq \sqrt{D_{\obs} \Tr\left[\left\lbrace \hat{\rho}_w(\proj_I)\right\rbrace^2\right]} = \sqrt{D_{\obs} \mathcal{P}_w(\proj_I)}. \label{eq:probeprojinequality}
\end{equation}
This is formally similar to Eq.~\eqref{eq:Tasaki_effdim1} in Sec.~\ref{sec:intro} that underlies state-dependent infinite-time considerations~\cite{TasakiTypicalityThermalization}, but differs in a crucial way that will allow us to access finite time results. Where Eq.~\eqref{eq:Tasaki_effdim1} already performs the time average and eliminates dynamics via Eq.~\eqref{eq:Tasaki_infytimeaverage} before applying the Cauchy-Schwarz inequality, the dynamics remains fully implicit in $\hat{\rho}_w(\proj_I)$ in Eq.~\eqref{eq:probeprojinequality} here and is yet to be accessed in our approach.

Notably, Eq.~\eqref{eq:probeprojinequality} remains a statement about the \textit{invariant} (basis-independent) $w(t)$-weighted history of the subspace $\mathcal{H}_I$ as witnessed by the projective observable $\proj_{\obs}$. Its intuitive content can be expressed in two ways: (1) if the initial subspace explores a large fraction of the Hilbert space, then it \textit{necessarily} has negligible overlap with any small subspace $\mathcal{H}_{\obs}$, (2) $D_{\obs}$ needs to be a sufficiently large fraction of the size of the Hilbert space explored by the initial subspace for $\mathcal{H}_{\obs}$ to be able to access a notable fraction of its history. Therefore, in comparison with Eq.~\eqref{eq:Tasaki_effdim1} and \eqref{eq:intro_effdim_def} in Sec.~\ref{sec:intro}, $\mathcal{P}_w(\proj_I)$ plays an operational role as the inverse effective dimension of the invariant \textit{finite-time} history of the subspace (as opposed to just in a basis) if $w(t)$ has support in a finite time interval. This can be formalized by re-expressing Eq.~\eqref{eq:probeprojinequality} in terms of the nominal exploration fraction $\mu_w(\proj_I)$ as follows:
\begin{lemma}[Probing the effective dimension of ergodic exploration]
\label{lem:ergodicprobesubspace}
    Consider the density operator $\hat{\rho}_w(\proj_I)$ representing the history of $\proj_I$ weighted by $w(t)$, exploring a nominal fraction $\mu_w(\proj_I)$ of the Hilbert space $\mathcal{H}$ as in Lemma~\ref{lem:ergodicfraction}.
    \begin{enumerate}
        \item The overlap of this history with any subspace $\mathcal{H}_{\obs} \subseteq \mathcal{H}$ with projector $\proj_{\obs}$ of dimension $D_{\obs} = \dim \mathcal{H}_{\obs} = \Tr[\proj_{\obs}]$ is constrained by:
    \begin{equation}
        \Tr[\proj_{\obs} \hat{\rho}_w(\proj_I)] \leq \sqrt{\frac{D_{\obs}}{D \mu_w(\proj_I)}}.
        \label{eq:smallergodicprobeconstraint}
    \end{equation}
    \item Consequently, if $\proj_{M}$ projects onto a subspace $\mathcal{H}_M$ of dimension $D_M$ containing at least a $(1-\epsilon)$ probability of the $w(t)$-weighted history of $\proj_I$ for $\epsilon \in [0,1]$,
    \begin{equation}
        \Tr[\proj_M \hat{\rho}_w(\proj_I)] \geq (1-\epsilon),
    \end{equation}
    then (by assigning $\mathcal{H}_{\obs} := \mathcal{H}_M$ above) this subspace must have a dimension of at least
    \begin{equation}
        D_M \geq (1-\epsilon)^2 D \mu_w(\proj_I),
        \label{eq:largeergodicprobeconstraint}
    \end{equation}
    where $D_M \equiv \dim \mathcal{H}_M = \Tr[\proj_M]$.
    \end{enumerate}
    Thus, $\mu_w(\proj_I)$ directly constrains the fraction of the Hilbert space explored by the trajectory of $\proj_I$ as visible to probes based on projective measurements.
\end{lemma}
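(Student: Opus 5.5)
The first part is a rewriting of the Cauchy--Schwarz bound in Eq.~\eqref{eq:probeprojinequality} using the definition of $\mu_w(\proj_I)$ in Eq.~\eqref{eq:ensemblesizedef}. The second part then follows by contraposition. The plan has three steps.

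First, I would verify Eq.~\eqref{eq:probeprojinequality} carefully. I would view operators as vectors under the Hilbert--Schmidt inner product $\langle \hat{X},\hat{Y}\rangle = \Tr[\hat{X}^\dagger \hat{Y}]$. Since $\proj_{\obs} = \proj_{\obs}^2$ is Hermitian, Cauchy--Schwarz gives
\begin{equation}
\bigl\lvert\Tr[\proj_{\obs}\hat{\rho}_w(\proj_I)]\bigr\rvert \leq \sqrt{\Tr[\proj_{\obs}^2]\,\Tr[\hat{\rho}_w(\proj_I)^2]} = \sqrt{D_{\obs}\,\mathcal{P}_w(\proj_I)}.
\end{equation}
The left-hand side is already nonnegative, because $\hat{\rho}_w(\proj_I)$ is positive semidefinite. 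This holds since it is a $w(t)\geq 0$ weighted integral of the projectors $\proj_I(t)$. The same fact, together with the normalization in Eq.~\eqref{eq:w_normalization} and $\Tr[\proj_I(t)] = D_I$, shows that $\Tr[\hat{\rho}_w(\proj_I)] = 1$. So $\hat{\rho}_w(\proj_I)$ is a genuine density operator, and the trace against $\proj_{\obs}$ is an honest measurement probability. I would then substitute $\mathcal{P}_w(\proj_I) = 1/(D\,\mu_w(\proj_I))$ from Eq.~\eqref{eq:ensemblesizedef}, which yields Eq.~\eqref{eq:smallergodicprobeconstraint} directly.

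Second, for part 2, I would set $\mathcal{H}_{\obs} := \mathcal{H}_M$ and chain the hypothesis with part 1:
\begin{equation}
(1-\epsilon) \leq \Tr[\proj_M\hat{\rho}_w(\proj_I)] \leq \sqrt{\frac{D_M}{D\,\mu_w(\proj_I)}}.
\end{equation}
Both sides are nonnegative, since $\epsilon\in[0,1]$. Squaring preserves the inequality, and multiplying through by $D\,\mu_w(\proj_I)>0$ gives Eq.~\eqref{eq:largeergodicprobeconstraint}.

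The only point needing any care is positivity. Squaring is legitimate only because $1-\epsilon\geq 0$. In addition, $\mu_w(\proj_I)$ must be finite and positive, which requires the purity to be nonzero. This holds because $\Tr[\hat{\rho}^2]\geq 1/D$ for any density operator. Beyond this, I expect no real obstacle, since the lemma is essentially a restatement of an inequality already displayed.
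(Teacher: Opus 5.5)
Your proposal is correct and follows the paper's argument: Cauchy--Schwarz in the Hilbert--Schmidt inner product with $\proj_{\obs}^2=\proj_{\obs}$ gives Eq.~\eqref{eq:probeprojinequality}, and substituting $\mathcal{P}_w(\proj_I)=1/(D\mu_w(\proj_I))$ gives part 1; setting $\mathcal{H}_{\obs}:=\mathcal{H}_M$ and squaring then gives part 2. The positivity checks you add are left implicit in the paper but are correct: that $\hat{\rho}_w(\proj_I)$ is a genuine density operator, that $1-\epsilon\geq 0$, and that $\mathcal{P}_w(\proj_I)\geq 1/D>0$.
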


The second part of Lemma~\ref{lem:ergodicprobesubspace} in combination with Lemma~\ref{lem:ergodicfraction} gives the first part of Theorem~\ref{thm:aperiodicityimpliesmacroscopicthermalization}.

\subsection{Highly ergodic histories are mostly in macroscopic equilibrium}

Now, let us focus on the nonequilibrium subspace (see Sec.~\ref{sec:concentratedobservables} for context) $\mathcal{H}_{\neqm}$ with projector $\proj_{\neqm}$ of dimension $D_{\neqm} = \dim \mathcal{H}_{\neqm} = \Tr[\proj_{\neqm}]$, which is usually (but not necessarily) small compared to $D$. Its complement is the equilibrium subspace $\mathcal{H}_{\eqm}$ with projector $\proj_{\eqm}$ of dimension $D_{\eqm} = \mathcal{H}_{\eqm} = \Tr[\proj_{\eqm}]$, satisfying $\mathcal{H} = \mathcal{H}_{\eqm} \oplus \mathcal{H}_{\neqm}$, and correspondingly $\idop = \proj_{\eqm} + \proj_{\neqm}$ with $D = D_{\eqm} + D_{\neqm}$.

Then, using the complementary relationship between the equilibrium and nonequilibrium subspaces and connecting to partially ergodic dynamics via Eq.~\eqref{eq:smallergodicprobeconstraint} in the first part of Lemma~\ref{lem:ergodicprobesubspace} (specifically, assigning $\proj_{\obs} := \proj_{\neqm}$), we have
\begin{equation}
    \Tr[\proj_{\eqm} \hat{\rho}_w(\proj_I)]  = 1-\Tr[\proj_{\neqm} \hat{\rho}_w(\proj_I)] \geq 1-\sqrt{\frac{D_{\neqm}}{D\mu_w(\proj_I)}}.
    \label{eq:equilibriumsubspaceergodicityconstraint}
\end{equation}

Now, we can decompose $\proj_I$ as an equal ensemble of $D_I$ orthonormal pure states $\lbrace \lvert \psi_{Ik}\rangle\rbrace_{k=1}^{D_I}$,
\begin{equation}
    \proj_I = \sum_{k=1}^{D_I} \lvert \psi_{Ik}\rangle \langle \psi_{Ik}\rvert
\end{equation}
the choice of the latter being non-unique (highly degenerate). Then, from Eq.~\eqref{eq:KrylovEnsembledef}, $\hat{\rho}_w(\proj_I)$ can be written as an average over the histories of each such pure state:
\begin{equation}
    \hat{\rho}_w(\proj_I) = \int\diff t\ w(t)\frac{1}{D_I}\sum_{k=1}^{D_I} \lvert \psi_{Ik}(t)\rangle \langle \psi_{Ik}(t)\rvert,
\end{equation}
where $\lvert \psi_{Ik}(t)\rangle = e^{-i\hat{H} t}\lvert \psi_{Ik}\rangle$.

For the pure states themselves, Eq.~\eqref{eq:equilibriumsubspaceergodicityconstraint} then implies
\begin{equation}
    \frac{1}{D_I}\sum_{k=1}^{D_I} \int\diff t\ w(t)\langle \psi_{Ik}(t)\rvert \proj_{\neqm}\lvert \psi_{Ik}(t)\rangle \leq \sqrt{\frac{D_{\neqm}}{D\mu_w(\proj_I)}}.
    \label{eq:nonequilibriumexplorationfraction}
\end{equation}
The left hand side is a weighted average over nonnegative quantities $\langle \psi_{Ik}(t)\rvert \proj_{\eqm}\lvert \psi_{Ik}(t)\rangle \geq 0$, allowing the use of Markov's inequality~\cite{RossProbability} to show that most of the trajectories have a small overlap with the nonequilibrium subspace if a large fraction of the Hilbert space is explored. Specifically, given some (small) resolution $\lambda > 0$, define the $D_I$ different temporal domains
\begin{equation}
    \mathcal{T}_k = \left\lbrace t:\ \langle \psi_{Ik}(t)\rvert \proj_{\neqm}\lvert \psi_{Ik}(t)\rangle \geq \lambda \right\rbrace\;\; \text{ for } k \in \mathbb{N}_{D_I},
\end{equation}
where $\mathbb{N}_n = \lbrace 1,\ldots,n\rbrace$ denotes the first $n$ natural numbers.
Then, it follows from Eq.~\eqref{eq:nonequilibriumexplorationfraction} that
\begin{equation}
    \frac{1}{D_I}\sum_{k=1}^{D_I} \int_{\mathcal{T}_k}\diff t\ w(t)\ \leq\ \frac{1}{\lambda} \sqrt{\frac{D_{\neqm}}{D\mu_w(\proj_I)}}.
    \label{eq:weightednonequilibriumtimefraction}
\end{equation}
Consistent with intuition, we note that a larger value of $\mu_w(\proj_I)$ (corresponding to exploring a larger fraction of the Hilbert space) suppresses the (weighted) amount of time spent with a non-negligible overlap of at least $\lambda$ with the nonequilibrium subspace.

For convenience, define the average weighted time spent by the $\lvert \psi_{Ik}\rangle$ with an overlap of at least $(1-\lambda)$ in the equilibrium subspace:
\begin{equation}
    T_w\left[\left\lbrace \lvert \psi_{Ik}\rangle\right\rbrace_{k=1}^{D_I} \right](\lambda) \equiv \frac{1}{D_I}\sum_{k=1}^{D_I} \int \diff t\ w(t)\ \ifc\left(\langle \psi_{Ik}(t)\rvert \proj_{\eqm}\lvert \psi_{Ik}(t)\rangle \geq 1-\lambda\right),
        \label{eq:equilibriumtimefraction_def}
\end{equation}
Then we have from Eq.~\eqref{eq:weightednonequilibriumtimefraction}:
\begin{lemma}[High ergodicity implies macroscopic equilibrium for almost all states and times]
\label{lem:ergodicityimpliesequilibrationfinitetime}
    Let $\lbrace \lvert \psi_{Ik}\rangle\rbrace_{k=1}^{D_I} \subset \mathcal{H}$ be a set of $D_I \leq D$ orthonormal pure states. Form the projector $\proj_I = \sum_{k=1}^{D_I} \lvert \psi_{Ik}\rangle \langle \psi_{Ik}\rvert$, and let its $w(t)$-weighted history explore a nominal fraction $\mu_w(\proj_I)$ of the Hilbert space $\mathcal{H}$. Given the equilibrium subspace $\mathcal{H}_{\eqm} \subseteq \mathcal{H}$ of dimension $D_{\eqm} = \dim \mathcal{H}_{\eqm}$, for any $\lambda \in (0,1]$, the weighted fraction of times $T_w\left[\left\lbrace \lvert \psi_{Ik}\rangle\right\rbrace_{k=1}^{D_I} \right](\lambda)$ spent on average by the states $\lvert \psi_{Ik}\rangle$ with an overlap of at least $1-\lambda$ with $\mathcal{H}_{\eqm}$, as defined in Eq.~\eqref{eq:equilibriumtimefraction_def}, is constrained by:
    \begin{equation}
        T_w\left[\left\lbrace \lvert \psi_{Ik}\rangle\right\rbrace_{k=1}^{D_I} \right](\lambda)\ \geq\ \left[1-\frac{1}{\lambda} \sqrt{\frac{D - D_{\eqm}}{D\mu_w(\proj_I)}}\right].
    \end{equation}
\end{lemma}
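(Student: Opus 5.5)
The argument is essentially a complement count layered on Eq.~\eqref{eq:weightednonequilibriumtimefraction}, which in turn rests on the first part of Lemma~\ref{lem:ergodicprobesubspace} with $\proj_{\obs} := \proj_{\neqm}$.

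\emph{Step 1: rewrite the indicator.} Since $\proj_{\eqm} + \proj_{\neqm} = \idop$ and each $\lvert \psi_{Ik}(t)\rangle$ is normalized, we have $\langle \psi_{Ik}(t)\rvert \proj_{\eqm}\lvert \psi_{Ik}(t)\rangle = 1 - \langle \psi_{Ik}(t)\rvert \proj_{\neqm}\lvert \psi_{Ik}(t)\rangle$. Hence the event $\langle \psi_{Ik}(t)\rvert \proj_{\eqm}\lvert \psi_{Ik}(t)\rangle \geq 1-\lambda$ is exactly the complement of the event that the nonequilibrium overlap exceeds $\lambda$. This gives
\begin{equation}
T_w(\lambda) \geq 1 - \frac{1}{D_I}\sum_{k=1}^{D_I}\int_{\mathcal{T}_k}\diff t\, w(t),
\end{equation}
where we used the normalization $\int \diff t\, w(t) = 1$ from Eq.~\eqref{eq:w_normalization}. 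The inequality is not an equality only because $\mathcal{T}_k$ uses $\geq\lambda$, so the boundary case $\langle \psi_{Ik}(t)\rvert \proj_{\neqm}\lvert \psi_{Ik}(t)\rangle = \lambda$ is counted in both sets. This double counting only helps the bound. To avoid it altogether, I would instead define $\mathcal{T}_k$ with a strict inequality; Markov's inequality applies equally well in that case.

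\emph{Step 2: Markov's inequality.} For every $t$, the nonnegative quantity $\langle \psi_{Ik}(t)\rvert \proj_{\neqm}\lvert \psi_{Ik}(t)\rangle$ is at least $\lambda\,\ifc(t\in\mathcal{T}_k)$. I then average this pointwise bound against $w(t)/D_I$ and sum over $k$. The left-hand side of Eq.~\eqref{eq:nonequilibriumexplorationfraction} therefore bounds $\lambda$ times the weighted measure of the sets $\mathcal{T}_k$, which is Eq.~\eqref{eq:weightednonequilibriumtimefraction}.

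\emph{Step 3: source of Eq.~\eqref{eq:nonequilibriumexplorationfraction}.} I would recall where this bound comes from. By linearity of the trace, its left-hand side equals $\Tr[\proj_{\neqm}\hat{\rho}_w(\proj_I)]$, because $\hat{\rho}_w(\proj_I)$ is the $w$-weighted average of the pure-state projectors. Lemma~\ref{lem:ergodicprobesubspace} bounds this by $\sqrt{D_{\neqm}/(D\mu_w(\proj_I))}$, and finally $D_{\neqm} = D - D_{\eqm}$. Combining the three steps yields the stated inequality.

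There is no real obstacle. The only point that needs care is a measure-theoretic one: the indicator inside $T_w$ must be integrable against $w$. This holds because $t\mapsto \langle \psi_{Ik}(t)\rvert \proj_{\eqm}\lvert \psi_{Ik}(t)\rangle$ is continuous, since the dynamics is generated by a finite-dimensional Hamiltonian. Its superlevel sets are therefore closed, hence measurable.
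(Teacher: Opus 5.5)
Your proposal is correct and follows the paper's route exactly. The paper applies Lemma~\ref{lem:ergodicprobesubspace}(1) with $\proj_{\obs} := \proj_{\neqm}$ and decomposes $\hat{\rho}_w(\proj_I)$ into pure-state histories to get Eq.~\eqref{eq:nonequilibriumexplorationfraction}, then uses Markov's inequality for Eq.~\eqref{eq:weightednonequilibriumtimefraction}, and finally takes the complement using $\int \diff t\, w(t) = 1$; your explicit treatment of the boundary case $\langle \psi_{Ik}(t)\rvert \proj_{\neqm}\lvert \psi_{Ik}(t)\rangle = \lambda$ and of measurability only fills in details the paper leaves implicit.
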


\subsection{Finite-time ergodicity to long-time thermalization}

Lemma~\ref{lem:ergodicityimpliesequilibrationfinitetime} shows that thermalization occurs for a large $w$-weighted fraction of times provided one has highly ergodic dynamics over the same $w$-weighted range of times. In practice, if $w$ has compact support, this constrains thermalization in the same compact set of times in which ergodicity can be established. As it turns out, it is straightforward to extrapolate from this compact set to all times.

Intuitively, this is because of the time-translation invariance and unitarity of Hamiltonian dynamics~\cite{DiracQM, ShankarQM}: time evolution across $[t_1,t_1+t]$ is isomorphic to time evolution across $[0,t]$ up to a unitary transformation $e^{-i\hat{H}t_1}$. In particular, if $f(\lvert \phi\rangle, \lvert \chi\rangle, \ldots)$ is a multi-argument function on $\mathcal{H}$ invariant under simultaneous unitary transformations of its arguments:
\begin{equation}
   f(\hat{U}\lvert \phi\rangle, \hat{U}\lvert \chi\rangle, \ldots) = f(\lvert \phi\rangle, \lvert \chi\rangle, \ldots)\;\; \text{ for all }\ \hat{U}: \hat{U}^{-1} = \hat{U}^\dagger,
\end{equation}
then
\begin{equation}
    f(e^{-i\hat{H} (t+t_1)}\lvert \phi\rangle, e^{-i\hat{H} t_1}\lvert \chi\rangle, \ldots) = f(e^{-i\hat{H} t}\lvert \phi\rangle, \lvert \chi\rangle, \ldots).
\end{equation}
Therefore, as our measures of ergodicity (from Sec.~\ref{sec:aperiodicityandergodicity}) such as $\mathcal{P}_w(\proj_I)$ and $\mu_w(\proj_I)$ are unitarily invariant functions of $\proj_I(t)$ (which can be viewed as joint functions of multiple time-evolved states in the subspace $\mathcal{H}_I$), they are invariant under time translations. Specifically, a certain degree of ergodicity in $[0,t]$ implies the \textit{same} degree of ergodicity in $[t_1, t_1+t]$.

To see this more formally, let $S_R \equiv \left(t_r\right)_{r=1}^{R}$ be a discrete sequence of time shifts with $R \in \mathbb{N}$ and $t_{r+1} > t_r$. Given the weighting function $w(t)$, define its $r$-th time translated variant:
\begin{equation}
    w_r(t) \equiv w(t+t_r).
\end{equation}
Then, from Eqs.~\eqref{eq:purity_and_probability_revisited} and \eqref{eq:ergodicity_and_probability}, on account of
\begin{equation}
    \int\diff t_I\ w_r(t+t_I) w_r(t_I) = \int\diff t_I\ w(t+t_I+t_r) w(t_I+t_r) = \int\diff t_I\ w(t+t_I) w(t_I),
\end{equation}
we have
\begin{equation}
    \mathcal{P}_{w_r}(\proj_I) = \mathcal{P}_w(\proj_I);\ \mu_{w_r}(\proj_I) = \mu_w(\proj_I)
\end{equation}
for any such sequence of time shifts, for any $w(t)$ and $\proj_I$.

More generally, we are interested in patching together different weights to form a larger interval of time. Consider the averaged weight function:
\begin{equation}
    w_{S_R}(t) \equiv \frac{1}{R}\sum_{r=1}^{R} w_r(t).
    \label{eq:averageshiftedweightdef}
\end{equation}
For example, let $w(t) = \ifc(t\in[0,T])/T$, and $t_r = (r-1)T$. Then $w_{S_R}(t) = \ifc(t\in [0,RT])/(RT)$, which expands the time interval of interest from $[0,T]$ to $[0,RT]$ for any $R \in \mathbb{N}$; note that we can also take $R \to \infty$ to extrapolate to $[0,\infty)$ from $[0,T]$. Then, the $w_{S_R}(t)$ weighted fraction of times spent in the equilibrium subspace, defined in Eq.~\eqref{eq:equilibriumtimefraction_def}, decomposes into a simple average:
\begin{equation}
    T_{w_{S_R}}\left[\left\lbrace \lvert \psi_{Ik}\rangle\right\rbrace_{k=1}^{D_I} \right](\lambda) = \frac{1}{R}\sum_{k=1}^{R}T_{w_r}\left[\left\lbrace \lvert\psi_{Ik}\rangle\right\rbrace_{k=1}^{D_I} \right](\lambda).
\end{equation}
Applying Lemma~\ref{lem:ergodicityimpliesequilibrationfinitetime} to each term on the right hand side independently gives:
\begin{lemma}[Finite time ergodicity implies long time thermalization]
\label{lem:finitetimeergodicitytoinfinitetimethermalization}
    For the same setting as Lemma~\ref{lem:ergodicityimpliesequilibrationfinitetime} based on the weight function $w(t)$, for any sequence of time shifts $S_R$ and an average shifted weight $w_{S_R}(t)$ defined in Eq.~\eqref{eq:averageshiftedweightdef}, we have the bound:
    \begin{equation}
        T_{w_{S_R}}\left[\left\lbrace \lvert \psi_{Ik}\rangle\right\rbrace_{k=1}^{D_I} \right](\lambda)\ \geq\ \left[1-\frac{1}{\lambda} \sqrt{\frac{D - D_{\eqm}}{D\mu_w(\proj_I)}}\right].
    \end{equation}
\end{lemma}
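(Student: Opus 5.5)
The plan is to reduce Lemma~\ref{lem:finitetimeergodicitytoinfinitetimethermalization} to $R$ separate applications of Lemma~\ref{lem:ergodicityimpliesequilibrationfinitetime}, one for each shifted weight $w_r$, and then average the results.

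First I would record that each $w_r(t) = w(t+t_r)$ is again an admissible weight: it is nonnegative, and by translation invariance of the integration measure it satisfies the normalization \eqref{eq:w_normalization}. The same two facts then hold for $w_{S_R}$, since it is a convex combination of the $w_r$.

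Second, I would use linearity of the definition \eqref{eq:equilibriumtimefraction_def} in $w$. The indicator function inside the integral depends only on $t$, the state label $k$, and $\lambda$, and not on the weight. Hence substituting $w_{S_R} = \frac{1}{R}\sum_r w_r$ and exchanging the finite sum with the integral and the average over $k$ gives
\begin{equation}
T_{w_{S_R}}(\lambda) = \frac{1}{R}\sum_{r=1}^{R} T_{w_r}(\lambda).
\end{equation}
Here the summation index should be $r$; the paper's display has a typographical slip writing $k$.

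Third, I would apply Lemma~\ref{lem:ergodicityimpliesequilibrationfinitetime} to each $T_{w_r}(\lambda)$ with weight $w_r$. This yields the lower bound $1-\lambda^{-1}\sqrt{(D-D_{\eqm})/(D\mu_{w_r}(\proj_I))}$.

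The only substantive point, and the step I expect to carry the weight of the argument, is showing that $\mu_{w_r}(\proj_I)=\mu_w(\proj_I)$. This follows from Lemma~\ref{lem:ergodicfraction}. There, $\mu_w$ depends on $w$ only through the autocorrelation $\int \diff t_I\, w(t+t_I)w(t_I)$, and this is invariant under the shift $t_I\to t_I+t_r$. Equivalently, $\hat\rho_{w_r}(\proj_I) = e^{i\hat H t_r}\hat\rho_w(\proj_I)e^{-i\hat H t_r}$ has the same purity as $\hat\rho_w(\proj_I)$, because purity is unitarily invariant. I would state the identity in this second, unitary form to avoid any questions about exchanging integrals.

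With this identity, all $R$ lower bounds coincide, so their average gives the claimed bound. No assumption on the ordering or spacing of the shifts $t_r$ is needed. For the application in Theorem~\ref{thm:aperiodicityimpliesmacroscopicthermalization}, one takes $w=\ifc(t\in[0,T])/T$ and $t_r=(r-1)T$, so that $w_{S_R}$ is the uniform weight on $[0,RT]$ up to measure-zero endpoints.

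Combining with Lemma~\ref{lem:ergodicfraction} then gives $D\mu_w(\proj_I) = D_I\mathcal{A}_{[0,T]}[\mathcal{H}_I]$, since the double time average of $p_I$ is exactly the inverse aperiodicity. Setting $\lambda:=\varepsilon$ produces \eqref{eq:equilibrationfractionfromaperiodicity}.
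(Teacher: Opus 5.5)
Your proposal is correct and follows the paper's proof essentially verbatim: you use linearity of $T_w$ in $w$, apply Lemma~\ref{lem:ergodicityimpliesequilibrationfinitetime} to each shifted weight, and use the shift invariance $\mu_{w_r}(\proj_I)=\mu_w(\proj_I)$. Your unitary form $\hat{\rho}_{w_r}(\proj_I)=e^{i\hat{H}t_r}\hat{\rho}_w(\proj_I)e^{-i\hat{H}t_r}$ is just a cleaner restatement of the paper's autocorrelation argument.

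One small caveat concerns your closing application, and you inherit it from the paper. With $w_r(t)=w(t+t_r)$, $w=\ifc(t\in[0,T])/T$ and $t_r=(r-1)T$, the averaged weight is uniform on $[-(R-1)T,T]$, not on $[0,RT]$. To obtain $[0,RT]$ you should take $t_r=-(R-r)T$ instead, which is permitted because, as you note, the lemma holds for arbitrary real shifts.
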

Combining Lemma~\ref{lem:finitetimeergodicitytoinfinitetimethermalization} (which derives from Lemma~\ref{lem:ergodicityimpliesequilibrationfinitetime}, which in turn derives from the first part of Lemma~\ref{lem:ergodicprobesubspace}) with Lemma~\ref{lem:ergodicfraction} for $w(t) = \ifc(t\in [0,T])$ and $S_R = \left((r-1)T\right)_{r=1}^{R}$ gives the second statement of Theorem~\ref{thm:aperiodicityimpliesmacroscopicthermalization}.

\section{Corollaries}
\label{sec:corollaries}

\subsection{Initial state delocalization over energy shells}
\label{sec:corollaries_delocalization}

Now, we will derive a connection between Theorem~\ref{thm:aperiodicityimpliesmacroscopicthermalization} and a collection of results linking equilibration to the delocalization of initial states in the energy eigenbasis~\cite{Tasaki1998, ReimannRealistic, LindenetalEqb, ShortEqb, ShortDegenerate, TasakiTypicalityThermalization}. This connection is a corollary of (in some parts, a very mild generalization of) Lemma~\ref{lem:ergodicfraction}.

The intuition behind this connection is straightforward in its eigenstate-specific form (which we will later generalize). As energy eigenstates are stationary under time evolution, no superposition of a subset of energy eigenstates can exit the subspace spanned by the subset. Therefore, the initial state must be delocalized over at least as many energy eigenstates as the number of dimensions of the Hilbert space that it has been verified to ergodically explore. The main technical task here is to sharpen this statement to refer to the specific ``effective dimension''\footnote{In Ref.~\cite{LindenetalEqb}, the effective dimension is directly defined in terms of the purity, but applied to the infinite-time averaged density operator, which amounts to Eq.~\eqref{eq:effdimdef} that subsequent works~\cite{ShortEqb, ShortDegenerate, TasakiTypicalityThermalization} take as the direct definition. This allows interpreting it~\cite{LindenetalEqb} as the number of states explored over infinite time, but such an infinite time exploration cannot be computationally verified. Our goal here is to connect it instead to finite-time partially ergodic exploration, which can be observed operationally~\cite{dynamicalqentanglement} through entanglement measurements or quantum teleportation protocols.} measure of energy delocalization~\cite{LindenetalEqb, ShortEqb, ShortDegenerate, TasakiTypicalityThermalization} of a density operator $\hat{\rho}$:
\begin{equation}
    d_{\effdim}[\hat{\rho}] \equiv \frac{1}{\sum_n \langle E_n\rvert \hat{\rho}\lvert E_n\rangle^2},
    \label{eq:effdimdef}
\end{equation}
loosely meaning that $\hat{\rho}$ is significantly distributed over $d_{\effdim}[\hat{\rho}]$ energy levels.

To carry out this task, let us write Eq.~\eqref{eq:ergodicity_and_probability} in Lemma~\ref{lem:ergodicfraction}, which continues to apply for any initial density operator $\hat{\rho}_I$ in place of $\proj_I/D_I$, in terms of the energy eigenbasis:
\begin{equation}
    D \mu_w(D_I \hat{\rho}_I) = \frac{1}{\displaystyle{\sum_{n,m} \left\lvert \langle E_n\rvert \hat{\rho}_I\lvert E_m\rangle \right\rvert^2 \widetilde{w}(E_n-E_m) \widetilde{w}(E_m-E_n) }}.
\end{equation}
Here, $\widetilde{w}(t) \equiv \int\diff t \ w(t) e^{-iEt}$. Recall that $\widetilde{w}(0) =1$ [Eq.~\eqref{eq:w_normalization}] and $w(t) \geq 0$ by assumption. If $w(t) \in \mathbb{R}$ (containing this assumption), then $\widetilde{w}(-E) = \widetilde{w}^\ast(E)$, giving us a denominator that is a sum of non-negative terms:
\begin{equation}
    D \mu_w(D_I \hat{\rho}_I) = \frac{1}{\displaystyle{\sum_{n,m} \left\lvert \langle E_n\rvert \hat{\rho}_I\lvert E_m\rangle \right\rvert^2 \left\lvert \widetilde{w}(E_n-E_m) \right\rvert^2 }}.
    \label{eq:energybandinitialstate}
\end{equation}
Now, similar to the connection between finite-time autocorrelators and (weak) eigenstate thermalization in Ref.~\cite{dynamicalqthermalization}, dropping the $n \neq m$ terms from the denominator (as initially done in semiclassical results~\cite{Zelditch}) gives the inequality:
\begin{equation}
    D \mu_w(D_I \hat{\rho}_I) \leq \frac{1}{\sum_n \langle E_n\rvert \hat{\rho}_I\lvert E_n\rangle^2} = d_{\effdim}[\hat{\rho}_I].
    \label{eq:effdim_fraction_constraint1}
\end{equation}
Therefore, e.g. taking $w(t) = \ifc(t\in [0,T])/T$, we get from the definition of aperiodicity [Definition~\ref{def:aperiodicity}] for an initial density operator $\hat{\rho}_I = \proj_I/D_I$ corresponding to the subspace $\mathcal{H}_I$:
\begin{equation}
    d_{\effdim}[\hat{\rho}_I] \geq D_I \mathcal{A}_{[0,T]}[\mathcal{H}_I],
    \label{eq:effdim_fraction_constraint2}
\end{equation}
technically formalizing the intuition that a finite time partially ergodic exploration of the Hilbert space is sufficient to guarantee a corresponding (at least partial) delocalization of the initial state in the energy eigenbasis.

It is worth briefly commenting on the connection between these considerations and equilibration dynamics. Having $d_{\effdim}[\hat{\rho}_I] \gg 1$ (even if e.g. $O(1)$) is sufficient to guarantee the equilibration of all (including local) observables over infinite times (and exponentially long times that diverge to infinity in the thermodynamic limit) to a resolution $\varepsilon \ll 1$ in generic systems~\cite{ShortEqb, ShortDegenerate}, due to the guaranteed destructive interference between a large number of phases $e^{-i(E_n-E_m)t}$ in the energy eigenbasis over such long timescales. Such general guarantees of equilibration also do not exist for finite timescales~\cite{ShortDegenerate}, except possibly with assumptions on the joint matrix element structure of specific observables and the initial state in the energy eigenbasis typically for highly mixed (rather than pure) states~\cite{GarciaPintosFiniteTimeEqb}, which do not appear to be operationally verifiable in general. For example, in typical systems (e.g. with random matrix spectral statistics), projectors onto a basis of discrete Fourier transforms of energy eigenstates can only equilibrate after exponentially long times~\cite{dynamicalqergodicity} (divergent in the thermodynamic limit) for states of the same basis, ruling out observable-independent finite-time equilibration results for highly delocalized states. However, observable-dependent (and state-independent) operational criteria~\cite{dynamicalqthermalization, dynamicalpurestatethermalization} can establish finite-time equilibration for almost all initial states with at worst rare exceptions (amounting to at most a finite but small fraction of a basis). In our view, this makes observable-dependent criteria most directly relevant for finite-time microscopic statistical mechanics, with initial-state-dependent criteria as developed in this work primarily being necessitated for macroscopic statistical mechanics, when the manifold of interesting states is sufficiently small to be beyond the reach of finite-resolution guarantees stemming from observable-dependent criteria.

On top of connecting to the effective dimension quantity used in infinite-time state-dependent equilibration results, our main corollary of interest for delocalization in the energy-eigenbasis concerns a stronger statement regarding the eigenstate structure of a \textit{pure} initial state $\lvert \psi_I\rangle$, over \textit{finite} energy scales $\Delta E \sim 1/T$, that follows from aperiodicity over the finite time scale $T$. This is analogous to finite-energy-resolution statements on energy-band thermalization, stronger than (weak) eigenstate thermalization, that can be obtained from the decay of the observable's autocorrelator~\cite{dynamicalqthermalization}. In particular, our result is that if $\lvert \psi_I\rangle$ (corresponding to $\mathcal{H}_I = \lbrace \lvert \psi_I\rangle\rbrace$ with $D_I = 1$) explores $d_{\effdim} \sim \mathcal{A}_{[0,T]}[\lbrace \lvert \psi_I\rangle\rbrace]$ dimensions of the Hilbert space over a time $T$, then it must be spread out over at least $d_{\effdim}$ \textit{different} energy shells of width $\Delta E \sim 1/T$, rather than just energy eigenstates (which may be consecutive in principle, spanning an exponentially small energy window $\sim d_{\effdim} e^{-N}$ for finite $d_{\effdim}$, and therefore does not constrain the macroscopic delocalization structure of the initial state without our stronger claim).

To formalize this, we return to Eq.~\eqref{eq:energybandinitialstate}, and consider an energy width $\Delta E \geq 0$ such that $|\widetilde{w}(E :|E| \leq \Delta E)|^2 \geq W$, for some $W \in (0,1]$. Then we obtain
\begin{equation}
    D \mu_w(\hat{\rho}_I) \leq \frac{1}{\displaystyle{W \sum_{\substack{n,m:\\|E_n-E_m|\leq \Delta E}} \left\lvert \langle E_n\rvert \hat{\rho}_I\lvert E_m\rangle \right\rvert^2  }}.
    \label{eq:delocalizationshell1}
\end{equation}
Now, partition the energy spectrum into $n_E$ disjoint ``energy shells'' $\lbrace \mathcal{E}_r\rbrace_{r=1}^{n_E}$ (such that $\bigcup_{r=1}^{n_E} \mathcal{E}_r = \lbrace E_n\rbrace_{n=1}^D$, and $\mathcal{E}_r \cap \mathcal{E}_s = \lbrace \rbrace$ for $r \neq s$) of respective energy widths $\lbrace \delta E_r\rbrace_{r=1}^{n_E}$, such that $\delta E_r \leq \Delta E$ (i.e. $\lvert E_n-E_m\rvert \leq \delta E_r$ for all $E_n, E_m \in \mathcal{E}_r$ with at least one pair of $(n,m)$ saturating the inequality), and with projectors $\lbrace \proj(\mathcal{E}_r)\rbrace_{r=1}^{n_E}$. Then, we have
\begin{equation}
    \sum_{\substack{n,m:\\|E_n-E_m|\leq \Delta E}} \left\lvert \langle E_n\rvert \hat{\rho}_I\lvert E_m\rangle \right\rvert^2 \geq \sum_{r=1}^{n_E} \sum_{\substack{m,n:\\ E_m,E_n \in \mathcal{E}_r}} \left\lvert \langle E_n\rvert \hat{\rho}_I\lvert E_m\rangle \right\rvert^2.
    \label{eq:delocalizationshell2}
\end{equation}
At this stage, it is convenient to explicitly take our state to be a pure state, $\hat{\rho}_I = \lvert \psi_I\rangle \langle \psi_I\rvert$. Then, we have (amounting to $|\rho_{nm}|^2 = \rho_{nn}\rho_{mm}$ for the matrix elements $\rho_{nm}$ of a pure state $\hat{\rho} = \hat{\rho}^2$ in any orthonormal basis):
\begin{align}
    \sum_{r=1}^{n_E} \sum_{\substack{m,n:\\ E_m,E_n \in \mathcal{E}_r}} \left\lvert \langle E_n\rvert \hat{\rho}_I\lvert E_m\rangle \right\rvert^2 = \sum_{r=1}^{n_E} \sum_{\substack{m,n:\\ E_m,E_n \in \mathcal{E}_r}} \langle E_n\vert \psi_I\rangle \langle \psi_I\vert E_m\rangle\langle E_m\vert \psi_I\rangle \langle \psi_I\vert E_n\rangle &= \sum_{r=1}^{n_E} \left(\sum_{\substack{n: E_n \in \mathcal{E}_r}} \langle \psi_I\vert E_n\rangle\langle E_n\vert \psi_I\rangle\right)^2 \nonumber \\
    &=\sum_{r=1}^{n_E} \left(\langle \psi_I\rvert \proj(\mathcal{E}_r)\lvert \psi_I\rangle\right)^2.
    \label{eq:delocalizationshell3}
\end{align}
Let $P(\lvert \psi_I\rangle, \mathcal{E}_r) = \langle \psi_I\rvert \proj(\mathcal{E}_r)\lvert \psi_I\rangle$, the overlap of the state with the $r$-th energy shell, and note that $\sum_{r=1}^{n_E} P(\lvert \psi_I\rangle, \mathcal{E}_r)  = 1$. Define the ``effective participation dimension'' of $\lvert \psi_I\rangle$ with respect to the partition $\lbrace \mathcal{E}_r\rbrace_{r=1}^{n_E}$ as:
\begin{equation}
    d_{\effdim}\left(\lvert \psi_I\rangle, \lbrace \mathcal{E}_r\rbrace_{r=1}^{n_E}\right) \equiv \frac{1}{\displaystyle{\sum_{r=1}^{n_E} P^2(\lvert \psi_I\rangle, \mathcal{E}_r)}}.
    \label{eq:shellparticipationratio_def}
\end{equation}
Then, on using Eqs.~\eqref{eq:delocalizationshell2} and \eqref{eq:delocalizationshell3} in Eq.~\eqref{eq:delocalizationshell1}, we get
\begin{equation}
    d_{\effdim}\left(\lvert \psi_I\rangle, \lbrace \mathcal{E}_r\rbrace_{r=1}^{n_E}\right) \geq W D \mu_w(\lvert \psi_I\rangle \langle \psi_I\rvert).
    \label{eq:energyshelldelocalization1}
\end{equation}
Loosely, this means that $\lvert \psi_I\rangle$, if it explores a fraction $\mu_w(\lvert \psi_I\rangle\langle \psi_I\rvert)$ of the Hilbert space under $w(t)$-weighted dynamics, must be delocalized over at least $W D \mu_w(\lvert \psi_I\rangle\langle \psi_I\rvert)$ shells in any partition of the spectrum into distinct energy shells of width at most $\Delta E$.


It is worth emphasizing a contrast between states and observables: where the analogous ``energy-band thermalization'' property of observables with highly degenerate eigenspaces, inferred from their decaying autocorrelators, implies that the observable uniformly approaches its thermal value in each energy shell~\cite{dynamicalqthermalization} with only small deviations (as in eigenstate thermalization), much larger fluctuations between energy shells are allowed for pure states by the participation constraint in Eq.~\eqref{eq:energyshelldelocalization1} while $d_{\effdim} < n_E$ (e.g. some energy shells may have significant participation and others may have very little). For random states, this nonuniformity may often not be visible for large $\Delta E$ (small $n_E$)~\cite{vonNeumannThermalization}, but becomes increasingly relevant as $\Delta E \to 0$ ($n_E \to D$). Also note that Eq.~\eqref{eq:energyshelldelocalization1} is a considerably stronger statement than the constraint on effective dimension in Eq.~\eqref{eq:effdim_fraction_constraint1}, which is only implied as a special case when $\Delta E = 0$ (with $W = 1$). For example, Eq.~\eqref{eq:energyshelldelocalization1} gets considerably closer than Eq.~\eqref{eq:effdim_fraction_constraint1} towards providing an operational criterion showing that a given initial state of interest is delocalized over macroscopic energy scales, for which numerical evidence (consistent with energy-dependent random delocalization) is seen in small systems of interest in Ref.~\cite{FoiniDymarskyPappalardiInitialStateETH}.

To express this in terms of the observable aperiodicity of the initial state as in Definition~\ref{def:aperiodicity}, we again take $w(t) = \ifc(t\in[0,T])/T$, corresponding to $|\widetilde{w}(E)| = \sinc(ET/2)$, and note that $W = \sinc^2(\Delta E T/2)$ is a valid choice for any $\Delta E < 2\pi/T$. Then, we get
\begin{corollary}[Aperiodicity or partial ergodicity implies energy shell delocalization for a pure state]
\label{cor:energyshelldelocalization}
For any given $\Delta E > 0$, consider a partition of the energy levels $\lbrace E_n\rbrace_{n=1}^{D}$ of the Hamiltonian $\hat{H}$ in the Hilbert space $\mathcal{H}$ into $n_E$ disjoint energy shells $\lbrace \mathcal{E}_r\rbrace_{r=1}^{n_E}$ with respective projectors $\proj(\mathcal{E}_r)$ (satisfying $\proj(\mathcal{E}_r)\proj(\mathcal{E}_s) = \proj(\mathcal{E}_r)\delta_{rs}$ and $\sum_r \proj(\mathcal{E}_r) = \idop$) such that each shell has a width of at most $\Delta E$:
\begin{equation}
    \text{for each $r$, }\ |E_n-E_m| \leq \Delta E \text{ for all } E_n, E_m \in \mathcal{E}_r.
    \label{eq:partitionconstraint}
\end{equation}
Then, for any normalized initial state $\lvert \psi_I\rangle$ satisfying $\langle \psi_I\vert \psi_I\rangle = 1$, its nominal fraction $\mu_{\ifc(t\in[0,T])/T}(\lvert \psi_I\rangle\langle\psi\rvert)$ of Hilbert space explored (Lemma~\ref{lem:ergodicfraction}) or equivalently its aperiodicity $\mathcal{A}_{[0,T]}[\lbrace \lvert \psi_I\rangle\rbrace]$ (Definition~\ref{def:aperiodicity}) in the time interval $t \in [0,T]$ constrains its participation dimension $d_{\effdim}\left(\lvert \psi_I\rangle, \lbrace \mathcal{E}_r\rbrace_{r=1}^{n_E}\right)$ among such energy shells [defined in Eq.~\eqref{eq:shellparticipationratio_def}] according to:
\begin{align}
    d_{\effdim}\left(\lvert \psi_I\rangle, \lbrace \mathcal{E}_r\rbrace_{r=1}^{n_E}\right) &\geq \frac{4 \sin^2(\Delta E T/2)}{\Delta E^2 T^2} D\mu_{\ifc(t\in[0,T])/T}(\lvert \psi_I\rangle\langle\psi\rvert) \nonumber \\
    &= \frac{4 \sin^2(\Delta E T/2)}{\Delta E^2 T^2}\mathcal{A}_{[0,T]}[\lbrace \lvert \psi_I\rangle\rbrace],
    \label{eq:energyshelldelocalization2}
\end{align}
for every partition satisfying Eq.~\eqref{eq:partitionconstraint} for every $\Delta E < 2\pi/T$. For $\Delta E \to 0$, this reduces to constraints on the conventional effective dimension $d_{\effdim}[\lvert \psi_I\rangle\langle\psi_I\rvert]$ of the pure initial state, defined in Eq.~\eqref{eq:effdimdef}, as per Eqs.~\eqref{eq:effdim_fraction_constraint1} and \eqref{eq:effdim_fraction_constraint2}.
\end{corollary}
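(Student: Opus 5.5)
The proof specializes the chain of inequalities already set up before the statement (Eqs.~\eqref{eq:energybandinitialstate}--\eqref{eq:energyshelldelocalization1}) to the uniform weight $w(t)=\ifc(t\in[0,T])/T$ and a pure initial state ($D_I=1$).

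\textbf{Step 1: the Fourier transform of the weight.} Compute
\begin{equation}
    \widetilde{w}(E)=\frac{1}{T}\int_0^T \diff t\, e^{-iEt}=e^{-iET/2}\,\frac{\sin(ET/2)}{ET/2},
\end{equation}
so that $|\widetilde{w}(E)|^2=4\sin^2(ET/2)/(E^2T^2)$. Since $w$ is real and nonnegative, Eq.~\eqref{eq:energybandinitialstate} applies with $\hat{\rho}_I=\lvert\psi_I\rangle\langle\psi_I\rvert$. Its denominator is a sum of nonnegative terms.

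\textbf{Step 2: a uniform lower bound on the kernel.} I need $|\widetilde{w}(E)|^2\geq W$ for all $|E|\leq\Delta E$, with $W=4\sin^2(\Delta E T/2)/(\Delta E^2T^2)$. This follows because $x\mapsto |\sin x/x|$ is even and monotonically decreasing on $[0,\pi]$. The condition $\Delta E<2\pi/T$ means $\Delta E T/2<\pi$, so the minimum over $|E|\leq\Delta E$ is attained at $|E|=\Delta E$ and is strictly positive. This monotonicity is the only point needing any care; it is standard, e.g.\ because $\tan x\geq x$ on $[0,\pi/2)$ and $\sin x/x$ is positive and decreasing on $[\pi/2,\pi)$.

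\textbf{Step 3: restricting to pairs within a shell.} Drop every pair $(n,m)$ with $|E_n-E_m|>\Delta E$ from the denominator; this only decreases it. On the remaining pairs, bound the kernel from below by $W$, which gives Eq.~\eqref{eq:delocalizationshell1}. Every pair lying in a common shell $\mathcal{E}_r$ satisfies $|E_n-E_m|\leq\Delta E$ by Eq.~\eqref{eq:partitionconstraint}, and the shells are disjoint. So restricting further to within-shell pairs is again a restriction of nonnegative terms, which is Eq.~\eqref{eq:delocalizationshell2}.

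\textbf{Step 4: factorizing for a pure state.} For a pure state, $|\langle E_n\rvert\hat\rho_I\lvert E_m\rangle|^2=|\langle E_n|\psi_I\rangle|^2|\langle E_m|\psi_I\rangle|^2$. The within-shell double sum therefore factorizes into $\sum_r P^2(\lvert\psi_I\rangle,\mathcal{E}_r)$, as in Eq.~\eqref{eq:delocalizationshell3}. Combining this with Steps 2 and 3 yields
\begin{equation}
    d_{\effdim}\left(\lvert\psi_I\rangle,\lbrace\mathcal{E}_r\rbrace_{r=1}^{n_E}\right)\geq W\,D\mu_w(\lvert\psi_I\rangle\langle\psi_I\rvert),
\end{equation}
which is the first line of Eq.~\eqref{eq:energyshelldelocalization2}.

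\textbf{Step 5: identifying the aperiodicity.} For the second line, note that with $D_I=1$ and this $w$, Lemma~\ref{lem:ergodicfraction} gives
\begin{equation}
    D\mu_w=\frac{1}{\int_0^T\frac{\diff t_1}{T}\int_0^T\frac{\diff t_2}{T}\,p_I(t_1-t_2)}.
\end{equation}
This is exactly $\mathcal{A}_{[0,T]}[\lbrace\lvert\psi_I\rangle\rbrace]$ by Definition~\ref{def:aperiodicity}.

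\textbf{Step 6: the limit $\Delta E\to0$.} As $\Delta E\to 0$, $W\to1$. The admissible partitions then force each shell to be a single degenerate level, and $\sum_r P^2$ reduces to the sum over distinct levels. In the nondegenerate case this recovers Eq.~\eqref{eq:effdim_fraction_constraint1} and Eq.~\eqref{eq:effdim_fraction_constraint2}. With degeneracies, only the version in which degenerate eigenstates are grouped into a single shell is recovered, and I would state that caveat.
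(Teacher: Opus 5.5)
Your proposal is correct and follows the paper's own derivation essentially step for step. The shared steps are the kernel bound $W=\sinc^2(\Delta E\,T/2)$ on $|E|\le\Delta E$, discarding out-of-shell pairs, the pure-state factorization into $\sum_r P^2$, and $D\mu_w=\mathcal{A}_{[0,T]}$ for $D_I=1$; your monotonicity argument for $\sin x/x$ on $[0,\pi)$ supplies a detail the paper leaves implicit.

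The degeneracy caveat in Step 6 is unnecessary. Singleton shells in any eigenbasis are admissible for every $\Delta E$. In any case the grouped bound implies Eq.~\eqref{eq:effdim_fraction_constraint1}, because $\sum_{E}\bigl(\sum_{n:E_n=E}|c_n|^2\bigr)^2\geq\sum_n|c_n|^4$ with $c_n=\langle E_n\vert\psi_I\rangle$.
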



\subsection{The thermal fate of a pure state}
\label{sec:corollaries_purestate}

Ideally, we would like to use aperiodicity to determine if all macroscopic observables thermalize within some finite time in a given pure initial state. But such a determination cannot be fully made in general: the aperiodicity of a pure state in a $D$-dimensional Hilbert space only certifies that the state largely explores some subset of the $(D-1)$ other (orthonormal) states in the Hilbert space. Over finite times, this subset may entirely lie within the nonequilibrium subspace of a specific observable, and therefore prevent the certification of thermalization of all macroscopic observables.

To see this quantitatively, we can use the Mandelstam-Tamm (MT)~\cite{MT, AAMT, QSLreview1, GongHamazakiNoneqbBounds} formulation of the energy-time uncertainty principle. Let us focus on a single initial state $\lvert \psi_I\rangle$, with energy standard deviation:
\begin{equation}
    \sigma_{EI} = \sqrt{\langle \psi_I\rvert \hat{H}^2\lvert \psi_I\rangle -  (\langle \psi_I\rvert \hat{H}\lvert \psi_I\rangle)^2}.
\end{equation}
Then, the MT energy-time uncertainty principle sets a speed limit on the return probability:
\begin{equation}
    p_I(t,t') = \left\lvert \langle \psi_I(t')\vert \psi_I(t)\rangle\right\rvert^2 \geq \begin{dcases}
        \cos^2[\sigma_{EI} (t-t')],&\ \text{ for } |t-t'| \leq \frac{\pi}{2\sigma_{EI}}, \\
        0,&\ \text{ otherwise }.
    \end{dcases}
\end{equation}
Consequently, by Eq.~\eqref{eq:aperiodicitydef}, the aperiodicity in a time interval $\mathcal{T} = [0,T]$ satisfies (extending $T = \pi/\sigma_{EI}$ bounds to $T < \pi/\sigma_{EI}$ as well, as the resulting bound on aperiodicity is a monotonically increasing function of $T$):
\begin{equation}
    \mathcal{A}_{[0,T]}[\lbrace \lvert \psi_I\rangle\rbrace] \leq \frac{8 \max\lbrace \pi^2,\sigma_{EI}^2 T^2\rbrace}{4-\pi^2 + 4\pi\max\lbrace \pi, \sigma_{EI} T\rbrace}.
    \label{eq:aperiodicityspeedlimit}
\end{equation}
Asymptotically, we have $\mathcal{A}_{[0,T]}[\lbrace \lvert \psi_I\rangle\rbrace] \lesssim 2\sigma_{EI} T/\pi$, at times $T \gg 1$.

In the time interval $[0,RT]$ for $R > 1$, we would like Theorem~\ref{thm:aperiodicityimpliesmacroscopicthermalization} to imply that the pure state spends at least a fraction $\tau_{\eqm}$ of times with an overlap of at least $(1-\varepsilon)$ with some equilibrium subspace $\proj_{\eqm}$ of dimension $D_{\eqm}$. We recall from the discussion immediately following Definition~\ref{def:aperiodicity} that such a guarantee is necessary for the equilibration of \textit{all} conceivable concentrated observables. Then, from Eq.~\eqref{eq:equilibrationfractionfromaperiodicity} in the statement of Theorem~\ref{thm:aperiodicityimpliesmacroscopicthermalization}, this corresponds to imposing:
\begin{equation}
    \tau_{\eqm} \leq  \left[1-\frac{1}{\varepsilon}\sqrt{\frac{D - D_{\eqm}}{\mathcal{A}_{[0,T]}[\lbrace \lvert \psi_I\rangle\rbrace]}}\right],
\end{equation}
or, after some rearrangement,
\begin{equation}
    \mathcal{A}_{[0,T]}[\lbrace \lvert \psi_I\rangle\rbrace] \geq \frac{D-D_{\eqm}}{\varepsilon^2(1-\tau_{\eqm})^2}.
    \label{eq:aperiodicityrequired}
\end{equation}
From the qualitative behavior of Eqs.~\eqref{eq:aperiodicityspeedlimit} and \eqref{eq:aperiodicityrequired}, we conclude that we need to establish aperiodicity of a pure state over a timescale of at least $T \sim D_{\neqm}/\sigma_{EI}$ to constrain macroscopic thermalization in observables whose nonequilibrium subspace has dimension at most $D_{\neqm}$. Recalling from Eq.~\eqref{eq:densities_Dneqm_bound} that for macroscopic observables such as charge densities, $D_{\neqm} \sim 2^{\Theta(N)}$ is exponentially large in the system size $N$, while $\sigma_{EI}$ is typically at most polynomially large~\cite{BukovSelsPolkovnikov}, we require an exponentially long time $T \sim 2^{\Theta(N)}$ as well. Via Corollary~\ref{cor:energyshelldelocalization}, this corresponds to the ``moderate'' energy distribution ($d_{\effdim} \sim \exp(N) \ll D$) of the initial state required in Ref.~\cite{TasakiTypicalityThermalization} to show equilibration over infinite times. The formal asymptotic bound underlying these conclusions is:

\begin{corollary}[Pure states need macroscopically sustained aperiodicity for Theorem~\ref{thm:aperiodicityimpliesmacroscopicthermalization} to guarantee macroscopic thermalization]
\label{cor:purestatethermalization}
    As $D_{\neqm} \to \infty$, the length $T$ of the time interval $[0,T]$ required for the aperiodicity of an initial state $\lvert \psi_I\rangle$ with energy spread (standard deviation) $\sigma_{EI}$ to guarantee the macroscopic thermalization of all concentrated observables with a nonequilibrium subspace of dimension at most $D_{\neqm}$, via Theorem~\ref{thm:aperiodicityimpliesmacroscopicthermalization}, with resolution $\varepsilon$ for at least a fraction $\tau_{\eqm}$ of times satisfies the asymptotic bound:
    \begin{equation}
        T \gtrsim \frac{\pi D_{\neqm}}{2\sigma_{EI} \epsilon^2(1-\tau_{\eqm})^2}.
    \end{equation}
    In particular, this time scales linearly with $D_{\neqm}$, and is therefore macroscopically long relative to $\sigma_{EI}^{-1}$ if $D_{\neqm}$ is macroscopically large.
\end{corollary}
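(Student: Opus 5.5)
The plan is to combine two inequalities already in hand. The first is the requirement Eq.~\eqref{eq:aperiodicityrequired}. It comes from demanding that the second statement of Theorem~\ref{thm:aperiodicityimpliesmacroscopicthermalization}, with $D_I=1$, certify an equilibrium fraction of at least $\tau_{\eqm}$ at resolution $\varepsilon$ for every equilibrium subspace whose nonequilibrium complement has dimension at most $D_{\neqm}$. The second is the Mandelstam-Tamm ceiling on the aperiodicity, Eq.~\eqref{eq:aperiodicityspeedlimit}. Since the first bound forces $\mathcal{A}_{[0,T]}$ to grow at least linearly in $D_{\neqm}$, and the second caps $\mathcal{A}_{[0,T]}$ by a function that grows linearly in $T$, inverting the second will give the claimed lower bound on $T$.

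The first step is to verify Eq.~\eqref{eq:aperiodicityspeedlimit}. Insert the MT lower bound on $p_I(t_1,t_2)$, which depends only on $s=t_1-t_2$, into Definition~\ref{def:aperiodicity}. Reduce the double average over $[0,T]^2$ to a single integral against the triangular kernel, $\frac{1}{T^2}\int_{-T}^{T}\diff s\,(T-|s|)\,p_I(s)$. For $T\ge\pi/\sigma_{EI}$ the bound is supported on $|s|\le \pi/(2\sigma_{EI})$, and integrating $\cos^2(\sigma_{EI}s)(T-|s|)$ gives an explicit quantity of order $\pi T/(2\sigma_{EI})-O(\sigma_{EI}^{-2})$. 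Dividing by $T^2$ and taking the reciprocal yields the stated rational bound. For $T<\pi/\sigma_{EI}$ I would use the monotonicity in $T$ noted in the text and substitute $\max\{\pi,\sigma_{EI}T\}$. The consequence needed is the large-$T$ form $\mathcal{A}_{[0,T]}\lesssim 2\sigma_{EI}T/\pi$.

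The second step chains the two bounds:
\[
\frac{D_{\neqm}}{\varepsilon^2(1-\tau_{\eqm})^2}\le \mathcal{A}_{[0,T]}[\lbrace\lvert\psi_I\rangle\rbrace]\le \frac{8\max\lbrace\pi^2,\sigma_{EI}^2T^2\rbrace}{4-\pi^2+4\pi\max\lbrace\pi,\sigma_{EI}T\rbrace}.
\]
As $D_{\neqm}\to\infty$ the left side diverges. The right side is bounded, by roughly $8\pi^2/(4+3\pi^2)$, whenever $\sigma_{EI}T\le\pi$, so eventually we must have $\sigma_{EI}T>\pi$, and indeed $\sigma_{EI}T\to\infty$. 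In that regime the right side equals $\frac{2\sigma_{EI}T}{\pi}\bigl(1+O(1/(\sigma_{EI}T))\bigr)$. Solving for $T$ then gives $T\gtrsim \pi D_{\neqm}/\bigl(2\sigma_{EI}\varepsilon^2(1-\tau_{\eqm})^2\bigr)$, with the $\epsilon$ in the statement read as $\varepsilon$. Linearity in $D_{\neqm}$ follows immediately.

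The main obstacle is the first step: keeping the MT integral honest for a general $T$ and getting the exact constants. Everything else is rearrangement. One further point needs a remark. The reason Eq.~\eqref{eq:aperiodicityrequired} must be met for every such equilibrium subspace is the necessity argument after Definition~\ref{def:aperiodicity}: one can rotate an observable so that $\lvert\psi_I\rangle$ lies in its nonequilibrium subspace. So the corollary really concerns what Theorem~\ref{thm:aperiodicityimpliesmacroscopicthermalization} is able to guarantee, not whether thermalization itself fails.
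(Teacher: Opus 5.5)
Your proposal is correct and takes the same route as the paper. You insert the Mandelstam--Tamm lower bound on the return probability into Definition~\ref{def:aperiodicity} to get the ceiling \eqref{eq:aperiodicityspeedlimit}; your triangular-kernel integral does evaluate to $(4\pi\sigma_{EI}T+4-\pi^2)/(8\sigma_{EI}^2)$, which reproduces the paper's constants. You then chain this with the requirement \eqref{eq:aperiodicityrequired} from the second part of Theorem~\ref{thm:aperiodicityimpliesmacroscopicthermalization} at $D_I=1$ and invert at large $\sigma_{EI}T$. Your remark that the divergence of the left side forces $\sigma_{EI}T\to\infty$ is a slightly more careful version of the paper's ``qualitative behavior'' step, not a different argument.
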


\subsection{Finite-time thermalization of nonequilibrium states}

To obtain genuine finite-time thermalization results, we need a stronger form of aperiodicity: not just of a pure state, but of a large ensemble of pure states i.e. the subspace $\mathcal{H}_I$, with its dimension $D_I$ not vanishingly small compared to the largest nonequilibrium dimension $D_{\neqm}$ associated with the macroscopic observables of interest. The aperiodicity of a subspace implies the aperiodicity of each pure state within it, but also implies that at any given instant, almost all constituent pure states have a negligible overlap with this large subspace. This allows the aperiodicity of a subspace to constrain macroscopic thermalization of almost all constituent pure states within a finite time interval, where the aperiodicity of each constituent pure state alone would not suffice as seen in Corollary~\ref{cor:purestatethermalization}. Mathematically, this is possible because of the large dimension $D_I$  of the subspace (assumed non-vanishing compared to $D_{\neqm}$) that multiplies the aperiodicity in Eq.~\eqref{eq:equilibrationfractionfromaperiodicity} of Theorem~\ref{thm:aperiodicityimpliesmacroscopicthermalization}, whereas $D_I = 1$ for a pure state.

Further, while Theorem~\ref{thm:aperiodicityimpliesmacroscopicthermalization} constrains the average time spent in equilibrium for any given macroscopic observable, one is also interested in showing the simultaneous equilibration of several macroscopic observables. It is convenient to collect all observables of interest, assumed to be finite in number (as one can imagine that a practical measurement or observation cannot address more than a finite number), in a two dimensional array
\begin{equation}
    \hat{\mathcal{M}} = \left\lbrace \hat{A}_{m\ell}:\ m \in \mathbb{N}_M,\ \ell \in \mathbb{N}_{L_m},\ \left([\hat{A}_{m\ell}, \hat{A}_{nk}] = 0 \iff m = n\right)\right \rbrace
    \label{eq:observablearraydef}
\end{equation}
based on their commutation relations. In particular, we have $M$ different subsets of observables (with the $m$-th subset containing $L_m$ observables), each mutually commuting within such a subset, but not commuting across subsets. Let $\mathcal{M}_{\eqm}$ be a corresponding array of equilibrium values $A_{\eqm, m\ell}$. For each commuting subset, due to the shared eigenbases of the observables involved, we can define the joint equilibrium subspace as the intersection of those of each member observable (i.e. going back to Definition~\ref{def:equilibriumsubspace}, we want the subspace spanned by eigenvectors whose eigenvalues for all observables are within some $\epsilon$ of the respective equilibrium values):
\begin{equation}
    \mathcal{H}_{\eqm,m}(\hat{\mathcal{M}}, \mathcal{M}_{\eqm}, \epsilon) \equiv \bigcap_{\ell \in \mathbb{N}_{L_m}} \mathcal{H}_{\eqm}(\hat{A}_{m\ell}, A_{\eqm, m\ell}, \epsilon)
\end{equation}
of dimension $D_{\eqm}(\hat{\mathcal{M}}, \mathcal{M}_{\eqm}, \epsilon)$, and the corresponding nonequilibrium subspace $\mathcal{H}_{\neqm, m}$ of dimension
\begin{equation}
    D_{\neqm, m}(\hat{\mathcal{M}}, \mathcal{M}_{\eqm}, \epsilon) \equiv \dim \mathcal{H}_{\neqm,m}(\hat{\mathcal{M}}, \mathcal{M}_{\eqm}, \epsilon) \leq \sum_{\ell \in \mathbb{N}_{L_m}} D_{\neqm}(\hat{A}_{m\ell}, A_{\eqm,m\ell}, \epsilon).
\end{equation}
We can directly use this equilibrium subspace in Theorem~\ref{thm:aperiodicityimpliesmacroscopicthermalization} for each commuting subset of observables. However, between distinct commuting subsets, the equilibrium subspaces have more complicated overlaps in the Hilbert space, and there is no common \textit{subspace} of equilibrium (though non-subspace equilibrium subsets of the Hilbert space can be defined in a straightforward manner~\cite{GoldsteinetalMateMite1, GoldsteinetalMateMite2}). While one may argue for approximate commutation over macroscopic scales~\cite{vonNeumannThermalization, GoldsteinetalMacroscopicTypicality, GoldsteinetalMateMite1} or form a positive density by additively combining equilibrium projectors~\cite{TasakiTypicalityThermalization} to aggregate such equilibrium subspaces, we will find it convenient to keep these subspaces separate. It is also useful to define the maximum nonequilibrium dimension for $\hat{\mathcal{M}}$,
\begin{equation}
    D_{\neqm, \max}(\hat{\mathcal{M}}, \mathcal{M}_{\eqm}, \epsilon) \equiv \max_{m \in \mathbb{N}_M} D_{\neqm, m}(\hat{\mathcal{M}}, \mathcal{M}_{\eqm}, \epsilon),
    \label{eq:maxneqdimdef}
\end{equation}
and only retain the dependence on $\hat{\mathcal{M}}$ of the various equilibrium-subspace related quantities defined here, with the equilibrium values and resolution being understood to be implicit, e.g. $D_{\neqm, \max}(\hat{\mathcal{M}})$.

It is worth emphasizing the justification for considering noncommuting macroscopic observables. By the uncertainty principle~\cite{DiracQM, ShankarQM}, it is not possible to simultaneously measure any two noncommuting observables, e.g. $\hat{A}_{m\ell}$ and $\hat{A}_{nk}$ for $m \neq n$, to be close to unique values to \textit{arbitrary precision} due to their inherent spread in the state. However, especially in the macroscopic regime, such uncertainties may be negligibly small compared to the range of accessible values of these observables. It is therefore reasonable to expect that one can simultaneously measure all observables in $\hat{\mathcal{M}}$ to some finite resolution that is much larger than their expected uncertainty in most states of interest --- this is what happens regularly in observed statistical mechanics~\cite{vonNeumannThermalization, GoldsteinetalMacroscopicTypicality}.

Our specific technical goal here is to constrain the fraction of pure states in any basis in $\mathcal{H}_I$ that thermalize all such observables to a \textit{finite} resolution $\epsilon$ at at least a fraction $(1-\delta_{\tau})$ of times. To formalize this notion, we need a generalization of the equilibrium fraction in Eq.~\eqref{eq:equilibrationfractiondef} to our full array of observables:
\begin{equation}
            F_{\eqm}[\mathcal{B}_I, \hat{\mathcal{M}}](RT, \varepsilon) \equiv \frac{1}{D_I}\sum_{k=1}^{D_I}\int_{0}^{RT}\frac{\diff t}{RT} \ifc\left(\langle \psi_{Ik}(t)\rvert \proj_{\eqm,m}\lvert \psi_{Ik}(t)\rangle \geq 1-\varepsilon, \text{ for all } m \in \mathbb{N}_M\right).
            \label{eq:multiequilibrationfractiondef}
        \end{equation}
The conditional statement $\mathrm{X}$ inside $\ifc(\mathrm{X})$ here amounts to a logical $\mathrm{AND}$ of different conditional statements $\mathrm{X}_m$ for each $m$, satisfied by the intersection of the sets of states satisfying each of the latter. For sets $A, B \subseteq U$, using
\begin{equation}
    U \setminus (A \cap B) = \left(U \setminus A\right) \cup (U \setminus B),
\end{equation}
we have (with $|Y|$ denoting the cardinality or a measure of $Y$, and observing that the union may have common elements)
\begin{equation}
    |A \cap B| \geq |U| - (|U|-|A|) - (|U|-|B|).
\end{equation}
Therefore, extending this to the intersection of multiple sets, we get the bound
\begin{equation}
    F_{\eqm}[\mathcal{B}_I, \hat{\mathcal{M}}](RT, \varepsilon)  \geq 1-\sum_{m\in\mathbb{N}_M} \left(1-f_{\eqm}[\mathcal{B}_I, \hat{\mathcal{M}}](RT, \varepsilon) \right).
    \label{eq:multiequilibrationtosingleequilibrationsum}
\end{equation}
Using Eq.~\eqref{eq:equilibrationfractionfromaperiodicity} of Theorem~\ref{thm:aperiodicityimpliesmacroscopicthermalization} in Eq.~\eqref{eq:multiequilibrationtosingleequilibrationsum} and subsequently Eq.~\eqref{eq:maxneqdimdef} termwise, we get:
\begin{equation}
     F_{\eqm}[\mathcal{B}_I, \hat{\mathcal{M}}](RT, \varepsilon)\geq \left[1-\frac{1}{\varepsilon}\sqrt{\frac{M^2 D_{\neqm, \max}(\hat{\mathcal{M}})}{D_I \mathcal{A}_{[0,T]}[\mathcal{H}_I]}}\right].
     \label{eq:multiequilibrationfractionbound}
\end{equation}
We note that the required degree of aperiodicity for this to be close to $1$ scales as $\mathcal{A} \propto M^2$ for $M$ noncommuting subsets, but only as $\mathcal{A}\propto L_m$ for the subset with the largest nonequilibrium dimension. Therefore, ensuring that different noncommuting subsets indeed do not commute produces a better bound than if one splits a commuting subset further into multiple independent subsets.

Denoting the number of basis states that attain thermal equilibrium for all observables in $\hat{\mathcal{M}}$ for at least a fraction $(1-\delta_{\tau}) \in [0,1)$ of times in $[0,RT]$ by:
\begin{align}
    n_{\eqm}[\mathcal{B}_I, &\hat{\mathcal{M}}](RT, \varepsilon, \delta_{\tau}) \nonumber \\
    &\equiv \sum_{k=1}^{D_I} \ifc\left[\int_{0}^{RT}\frac{\diff t}{RT} \ifc\left(\langle \psi_{Ik}(t)\rvert \proj_{\eqm,m}\lvert \psi_{Ik}(t)\rangle \geq 1-\varepsilon, \text{ for all } m \in \mathbb{N}_M\right) \geq (1-\delta_\tau)\right],
    \label{eq:basismultiequilibrationfraction_def}
\end{align}
then Markov's inequality~\cite{RossProbability} applied to Eq.~\eqref{eq:multiequilibrationfractiondef} gives:
\begin{equation}
    \frac{n_{\eqm}[\mathcal{B}_I, \hat{\mathcal{M}}](RT, \varepsilon, \delta_{\tau})}{D_I}  \geq 1- \frac{1-F_{\eqm}[\mathcal{B}_I, \hat{\mathcal{M}}](RT, \varepsilon)}{\delta_{\tau}} = \frac{F_{\eqm}[\mathcal{B}_I, \hat{\mathcal{M}}](RT, \varepsilon)+\delta_{\tau}-1}{\delta_{\tau}}.
    \label{eq:multibasisequilibrationfractionbound}
\end{equation}
We can therefore formulate our finite-time thermalization statement, by combining Eq.~\eqref{eq:multibasisequilibrationfractionbound} with Eq.~\eqref{eq:multiequilibrationfractionbound}, as:
\begin{corollary}[Finite-time macroscopic thermalization of all orthonormal bases for an initial subspace]
\label{cor:multipleobservablefinitetimethermalization}
Let $\hat{\mathcal{M}}$ be an array formed from $M$ subsets each comprising of mutually commuting observables for a system with Hilbert space $\mathcal{H}$, such that the subsets do not (necessarily) commute with each other, as in Eq.~\eqref{eq:observablearraydef}.
For a subspace $\mathcal{H}_I$ of initial states of dimension $D_I$, showing aperiodicity $\mathcal{A}_{[0,T]}[\mathcal{H}_I]$ in the time interval $[0,T]$, for any orthonormal basis $\mathcal{B}_I$ for $\mathcal{H}_I$, the number of basis vectors in $\mathcal{B}_I$ for which all observables in $\hat{\mathcal{M}}$ attain thermal equilibrium to resolution $\epsilon$ for at least a fraction $(1-\delta_{\tau})$ of times in the interval $[0,RT]$ for any $R \in \mathbb{N}$, as defined in Eq.~\eqref{eq:basismultiequilibrationfraction_def}, is constrained as a fraction of $D_I$ by:
    \begin{equation}
        \frac{n_{\eqm}[\mathcal{B}_I, \hat{\mathcal{M}}](RT, \varepsilon, \delta_{\tau})}{D_I} \geq \left[1-\frac{1}{\varepsilon \delta_{\tau}}\sqrt{\frac{M^2 D_{\neqm, \max}(\hat{\mathcal{M}})}{D_I \mathcal{A}_{[0,T]}[\mathcal{H}_I]}}\right],
    \end{equation}
    which requires only a finite degree of aperiodicity over finite $T$ (for finite $\varepsilon$, $\delta_{\tau}$) to constrain this fraction to be arbitrarily close to $1$ if $D_I = \Theta(D_{\neqm,,\max}(\hat{\mathcal{M}}))$ in the thermodynamic limit, while $M$ remains finite.
\end{corollary}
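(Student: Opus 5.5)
The plan is to combine three ingredients: the second part of Theorem~\ref{thm:aperiodicityimpliesmacroscopicthermalization}, applied once per commuting subset; a union bound across the $M$ subsets; and Markov's inequality over the basis vectors of $\mathcal{B}_I$.

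First, fix $m\in\mathbb{N}_M$. The second part of Theorem~\ref{thm:aperiodicityimpliesmacroscopicthermalization} only needs some subspace $\mathcal{H}_{\eqm}$ with projector $\proj_{\eqm}$. It never uses that this subspace comes from a single observable, because Lemmas~\ref{lem:ergodicprobesubspace}--\ref{lem:finitetimeergodicitytoinfinitetimethermalization} hold for an arbitrary projector. So I will apply it with $\proj_{\eqm}:=\proj_{\eqm,m}$ and $D-D_{\eqm}:=D_{\neqm,m}(\hat{\mathcal{M}})$. This gives
\begin{equation*}
f_m \equiv f_{\eqm}[\mathcal{B}_I,\mathcal{H}_{\eqm,m}](RT,\varepsilon)\geq 1-\frac{1}{\varepsilon}\sqrt{\frac{D_{\neqm,m}(\hat{\mathcal{M}})}{D_I\mathcal{A}_{[0,T]}[\mathcal{H}_I]}}\geq 1-\frac{1}{\varepsilon}\sqrt{\frac{D_{\neqm,\max}(\hat{\mathcal{M}})}{D_I\mathcal{A}_{[0,T]}[\mathcal{H}_I]}},
\end{equation*}
where the last step uses Eq.~\eqref{eq:maxneqdimdef} and monotonicity of the square root.

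Second, I will make the union bound of Eq.~\eqref{eq:multiequilibrationtosingleequilibrationsum} rigorous pointwise. For each fixed $k$ and $t$, the indicator of the joint event, that $\langle\psi_{Ik}(t)\rvert\proj_{\eqm,m}\lvert\psi_{Ik}(t)\rangle\geq1-\varepsilon$ holds for all $m$, is at least $1-\sum_m\bigl(1-\ifc_m\bigr)$, where $\ifc_m$ is the indicator of the single-$m$ event. Averaging this inequality over $k$ and over $t\in[0,RT]$ with weight $1/(RT)$ is linear. It therefore yields $F_{\eqm}\geq 1-\sum_m(1-f_m)$, with each $f_m$ carrying its own $m$-dependent equilibrium projector. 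Inserting the bound from the first step termwise gives $1-F_{\eqm}\leq \frac{M}{\varepsilon}\sqrt{D_{\neqm,\max}/(D_I\mathcal{A}_{[0,T]})}$, which is Eq.~\eqref{eq:multiequilibrationfractionbound} with the factor $M$ moved inside the root as $M^2$.

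Third, for each $k$ I define the failure fraction $g_k\in[0,1]$. This is the fraction of times in $[0,RT]$ at which the joint event fails, so that $\frac{1}{D_I}\sum_k g_k = 1-F_{\eqm}$. A basis vector is excluded from $n_{\eqm}$ exactly when $g_k>\delta_\tau$. By Markov's inequality applied to the uniform distribution over $k$, the fraction of excluded $k$ is at most $(1-F_{\eqm})/\delta_\tau$. This gives Eq.~\eqref{eq:multibasisequilibrationfractionbound}, and substituting the second step yields the stated bound.

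The closing remark then follows by inspection. For fixed $\varepsilon,\delta_\tau,M$ and $D_I=\Theta(D_{\neqm,\max})$, the ratio $M^2D_{\neqm,\max}/(D_I\mathcal{A})$ is $O(1/\mathcal{A})$, which a finite aperiodicity over a finite $T$ makes arbitrarily small.

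I do not expect a real obstacle. The only points needing care are that the theorem is invoked with a projector not tied to a single observable, that the per-$m$ bounds are combined before taking the maximum, and that Markov's strict versus non-strict inequality goes the right way ($\mathbb{P}(g>\delta)\leq\mathbb{P}(g\geq\delta)\leq\mathbb{E}g/\delta$).
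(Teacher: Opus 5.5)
Your proposal is correct and follows the paper's proof essentially step for step. The paper also applies Theorem~\ref{thm:aperiodicityimpliesmacroscopicthermalization} to each joint equilibrium projector $\proj_{\eqm,m}$, combines the $M$ bounds through the union bound of Eq.~\eqref{eq:multiequilibrationtosingleequilibrationsum} together with $D_{\neqm,m}\leq D_{\neqm,\max}$ to get Eq.~\eqref{eq:multiequilibrationfractionbound}, and then uses Markov's inequality over the basis vectors to get Eq.~\eqref{eq:multibasisequilibrationfractionbound}. Your pointwise-indicator form of the union bound and your remark on strict versus non-strict inequality in Markov's bound only make explicit what the paper leaves implicit.
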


\section{Discussion}
\label{sec:discussion}

\subsection{Thermalization by forgetfulness: a unifying mechanism?}

Aperiodicity (Definition~\ref{def:aperiodicity}), our proposed operational probe of macroscopic thermalization via Theorem~\ref{thm:aperiodicityimpliesmacroscopicthermalization} and especially Corollary~\ref{cor:multipleobservablefinitetimethermalization}, is essentially a measure of how much a subspace of initial states forgets itself over a finite amount of time. This simple fact of forgetfulness appears to have several intriguing connections: to seemingly very different mechanisms of microscopic thermalization (summarized in Table~\ref{tab:operationalstatmech}), to harnessing dissipation as a resource for computability~\cite{DissipativeQuantumComputation}, and to the fundamental problem of spectral statistics associated with ``quantum chaos''~\cite{Haake}. Elaborating on these connections is the subject of this section.


\subsubsection{Probes, mechanisms, and predictions}
\label{sec:disc_probesmechanismspredictions}

It is worth exploring what appears to be a common predictive structure that appears to emerge in our approach to operational quantum statistical mechanics (Sec.~\ref{sec:operationalquantumstatmech}), as developed here and in Refs.~\cite{dynamicalqthermalization, dynamicalpurestatethermalization}, across the different mechanisms in Table~\ref{tab:operationalstatmech}.
A general template of implication chains in which we can attempt to place the results of Refs.~\cite{dynamicalqthermalization, dynamicalpurestatethermalization} and the present work appears to be:
\begin{align}
\begin{split}
    \text{Computable probe behavior} \iff \text{Invariant } &\text{mechanism} \implies \text{A form of finite-time thermalization.} \\
    &\rotatebox[origin=c]{270}{$\implies$} \\
    \text{Energy eigenstate } &\text{structure} \implies \text{A form of infinite-time thermalization.} 
    \end{split}
     \label{eq:thermalizationimplicationtemplate}
\end{align}
All of these approaches allow completely filling in the first row. The second row is usually known from standard results, and some of these approaches provide a direct implication from the first to the second row. Specifically regarding the second row, we have not yet managed to achieve a direct connection to eigenstate behavior (or extrapolating to infinite times) in the main result of Ref.~\cite{dynamicalpurestatethermalization}, which addresses a genuinely quantum form of thermalization with no classical counterpart, while the connection to the appropriate forms of eigenstate structure follows in a similar manner in both the present work and Ref.~\cite{dynamicalqthermalization} (the latter revisited from a different eigenstate-independent perspective in \cite{dynamicalpurestatethermalization}), which have some classical counterparts. The ``invariant mechanism'' itself can be dropped for all practical purposes of computation, being an intermediary rather than an endpoint of the implication chain; its main role is to anchor the connection between the probe and the thermalization prediction, at minimum for intuition, and perhaps act as a generator for future connections and implications. The implication from the mechanism to the finite-time thermalization prediction is one-way in the sense that the thermalization of a given complete orthonormal basis of states (for observable-dependent criteria) or a given observable (for state-dependent ones) does not \textit{necessarily} require the mechanism. However, the mechanism does become \textit{necessary} if respectively demanding the thermalization of all bases~\cite{dynamicalqthermalization, dynamicalpurestatethermalization} or all conceivable observables (see the discussion preceding Theorem~\ref{thm:aperiodicityimpliesmacroscopicthermalization}) compatible with the setup in the Hilbert space.

Now, it is worth examining how our results fit into the template \eqref{eq:thermalizationimplicationtemplate}. As in Table~\ref{tab:operationalstatmech}, these implication chains are either state- or observable-dependent, but generally provide a clean separation between an input probe (correlator or return probability), an invariant mechanism, and an output prediction (different forms of thermalization across all observables and/or almost all states).

    \paragraph{\small Macroscopic thermalization:}\hspace{-1em} Our present results give the state-dependent implication chain:
    \begin{align}
    \begin{split}
    \text{Aperiodicity} \xLeftrightarrow[\text{Lemma~\ref{lem:ergodicfraction}}]{\text{Theorem~\ref{thm:aperiodicityimpliesmacroscopicthermalization}(1)}} \text{Partial } &\text{ergodicity} \xRightarrow[\text{Corollary~\ref{cor:multipleobservablefinitetimethermalization}}]{\text{Theorem~\ref{thm:aperiodicityimpliesmacroscopicthermalization}(2)}} \text{Finite-time macroscopic thermalization.}  \\
     &\rotatebox[origin=c]{270}{$\xRightarrow{\hphantom{Text}}$} \text{\ssmall{Corollary~\ref{cor:energyshelldelocalization}}} \\
    \text{Energy-shell/eigenbasis } &\text{delocalization} \xRightarrow[\text{\cite{Tasaki1998, ReimannRealistic, LindenetalEqb, ShortEqb, ShortDegenerate}}]{\text{\cite{TasakiTypicalityThermalization}}} \substack{\text{\small Infinite-time macroscopic thermalization} \\ \text{\small (or infinite-time universal equilibration).}} 
    \end{split}
    \label{eq:macroscopicimplicationchain}
    \end{align}
    Here, as with subsequent results, the thermalization implication is for almost all states in \textit{every} basis (of the relevant subspace or the full Hilbert space, depending on context), stronger than typicality (see also the discussion in Sec.~\ref{sec:operationalquantumstatmech}. Note that even without the second line involving energy eigenstate structure, the ``finite-time'' thermalization implication in the first line includes the extrapolation to infinite times due to the time-translation invariance of the Hamiltonian, as explicit in the statements of Theorem~\ref{thm:aperiodicityimpliesmacroscopicthermalization} and Corollary~\ref{cor:multipleobservablefinitetimethermalization}.
    \paragraph{\small Microscopic thermalization with statistical averages:}\hspace{-1em} Here, Ref.~\cite{dynamicalqthermalization} develops the following observable-dependent chain (adapted from ~\cite{dynamicalpurestatethermalization}), based on the behavior of an observable's autocorrelator, for predicting thermalization with (ergodic) time averages and (mixing) state averages:
    \begin{align}
    \begin{split}
    \text{Autocorrelator decay} \iff \text{Energy-band } &\text{thermalization} \implies \text{Finite-time averaged thermalization.}  \\
    &\rotatebox[origin=c]{270}{$\implies$} \\
    \text{Eigenstate } &\text{thermalization} \implies \text{Infinite-time averaged thermalization.} 
    \end{split}
    \label{eq:microscopicaveragesimplicationchain} 
    \end{align}
    Once again, the ``finite-time'' thermalization implication the first line directly allows arbitrarily long times, and therefore carries direct infinite-time implications even without requiring the bridge to eigenstate structure in the second line.
    \paragraph{\small Microscopic quantum thermalization:}\hspace{-1em} Here, the relevant criterion~\cite{dynamicalpurestatethermalization} is again largely observable-independent, but with some state dependence. One considers the behavior of an out-of-time-ordered correlator (OTOC) $G_{4}(t) = \Tr[\proj_A\proj_{\rho}(t)\proj_A\proj_{\rho}(t)]/\Tr[\proj_{\rho}]$ between an eigenspace $\proj_A$ of a few-body observable $\hat{A}$, and a subspace $\proj_{\rho}$ of all states that reduce to the same state on a finite number of qubits, which captures additional quantum properties compared to the two point correlator $G_{2}(t) = \Tr[\proj_A \proj_{\rho}(t)]/\Tr[\proj_{\rho}]$. If the former, which universally satisfies $G_4(t) \geq G_2^2(t)$, decays to near its minimum and factorize into the square of the latter, i.e. $G_4(t_1) \approx G_2^2(t_1)$ at any time $t_1$, this is shown~\cite{dynamicalpurestatethermalization} to correspond to an alignment of the eigenspaces of the observable with the subspace of initial states implying instantaneous quantum thermalization at $t_1$. The implication chain is therefore:
    \begin{align}
    \begin{split}
    \text{OTOC factorization} \iff \text{Hilbert subspace } &\text{alignment} \implies \text{Instantaneous finite-time thermalization.} \\
    &\rotatebox[origin=c]{270}{$\longleftrightarrow$}?  \\
    \text{Energy eigenstate structure} &/ \text{Infinite-time implications?}
    \end{split}
    \label{eq:microscopicquantumimplicationchain}
    \end{align}
    OTOC factorization at specific times may be computed with finite resources in contrast to a direct computation of quantum thermalization in almost all initial pure states (even in a subspace of interest), making the above implication powerful for establishing quantum thermalization in almost all pure states. However, the lack of an implication that can extrapolate to energy eigenstates and infinite times (e.g. the second row of the template~\eqref{eq:thermalizationimplicationtemplate}) appears to be a significant open problem for deciding whether general system-independent computational modes of predicting quantum thermalization for arbitrarily long times from finite-time data are even possible, as opposed to system-specific calculations of OTOC factorization (which may extrapolate to long times in specific classes of systems).

Interestingly, the ``computable probe behavior'' in all three cases appears to be a computable form of memorylessness (autocorrelators or return probabilities). In \eqref{eq:macroscopicimplicationchain}, aperiodicity measures the tendency of the initial subspace of states to forget itself over a finite interval of time. In \eqref{eq:microscopicaveragesimplicationchain}, the decay of the autocorrelator measures the tendency of an observable to lose correlations with its initial value across the Hilbert space, similar to Khinchin's ``molecular chaos'' mechanism for classical systems~\cite{KhinchinStatMech} (Eq.~\eqref{eq:KhinchinMolecularChaos}). In \eqref{eq:microscopicquantumimplicationchain}, the factorization of the OTOC is a more subtle, distinctly quantum form of memorylessness: from the Hilbert subspace alignment mechanism discussed in Ref.~\cite{dynamicalpurestatethermalization}, it follows that this can be regarded as the eigenspace of the observable losing correlations with any particular directions in the subspace of states, therefore treating the entire subspace uniformly.

Further, the ``invariant mechanisms'' involve an alignment of states or observables: directly for the case of microscopic quantum thermalization~\eqref{eq:microscopicquantumimplicationchain}; as an alignment of the observable with energy bands induced by the Hamiltonian in the space of operators for energy-band thermalization~\cite{dynamicalqthermalization} in the case of microscopic thermalization with averages~\eqref{eq:microscopicaveragesimplicationchain}; and in our present case of macroscopic thermalization~\eqref{eq:macroscopicimplicationchain}, partial ergodicity refers to the alignment of the history of a subspace of states with respect to itself over time, i.e. whether the dynamics generates sufficiently mutually orthogonal states to explore the Hilbert space~\cite{dynamicalqentanglement, dynamicalqergodicity}. We find these parallels interesting especially because the technical mechanisms underlying each implication chain appear, on the surface, to be formally different (except for the implication from the mechanism to the energy eigenstate row in both \eqref{eq:macroscopicimplicationchain} and \eqref{eq:microscopicaveragesimplicationchain}, for which both Corollary~\ref{cor:energyshelldelocalization} and Ref.~\cite{dynamicalqthermalization} respectively use essentially the same technique). It would be interesting to explore if they are more related than is presently apparent.


\subsubsection{Computability of forgetfulness}
\label{sec:disc_computability}

Having structured our result as a computational mode of prediction similar to other comparable results in Sec.~\ref{sec:disc_probesmechanismspredictions}, it is important to examine if the required input data is plausibly computable with finite resources. For this purpose, in the case of microscopic thermalization, Refs.~\cite{dynamicalqthermalization, dynamicalpurestatethermalization} sketched concrete measurement protocols compatible with quantum simulators~\cite{QuantumSimulationReview2024} to show the formal computability of their respective probes of forgetfulness with finite resources.

In our present case of macroscopic thermalization, the relevant probe of forgetfulness (aperiodicity) is a return probability of a subspace of states $\mathcal{H}_I$, which is one of the most straightforward objects to measure \textit{in principle}~\cite{NielsenChuang} in a quantum system given access to the projector $\proj_I$: prepare the system in the density operator $\hat{\rho}_I = \proj_I/D_I$, evolve it in time, and make a projective measurement onto $\proj_I$. Because of the anticipated small size of $\mathcal{H}_I$ relative to $\mathcal{H}$, i.e. we want $D_I \sim \Theta(D_{\neqm})$ so that thermalization statements about almost all states in $\mathcal{H}_I$ are sharp enough to resolve states within the nonequilibrium region $\mathcal{H}_{\neqm}$, preparing the state $\hat{\rho}_I$ is not straightforward. The easiest quantum simulation protocol --- prepare the full system in a maximally mixed state (e.g. by entangling each qubit in parallel with a unique auxiliary qubit, which only requires a single parallel set of local operations~\cite{NielsenChuang}) and then measure $\proj_I$ to prepare $\hat{\rho}_I$ precisely --- has postselection probability $D_I/D = \Theta(D_{\neqm})/D \approx 0$ in large systems, and is therefore overwhelmingly unlikely to produce the intended state in almost all trials.

To mitigate this expense, it appears convenient to use dissipative processes to prepare a proxy $\hat{\rho}_J$ for the initial state. To illustrate this in a concrete setting, consider a macroscopic (non-conserved) charge density $\hat{Q}$ with eigenvalues $q_k \in [-1,1]$ (such as $\hat{S}_z(\mathcal{L})$ in Sec.~\ref{sec:concentratedobservables}), and say that the corresponding equilibrium subspace of interest is $\proj_{\eqm}(\hat{Q}, 0, \epsilon)$, spanned by eigenvalues no further than $\epsilon$ from the presumed thermal value of $0$. For our initial subspace, let us say that we are satisfied with a set of states that are within a finite distance $\gamma: 0 < \gamma < 1-\epsilon$ of the lowest eigenvalue $-1$:
\begin{equation}
    \mathcal{H}_I := \vectorspan \left\lbrace \lvert q_k\rangle: q_k \leq -1+\gamma\right\rbrace.
\end{equation}
If $\hat{Q}$ were a (suitably rescaled) Hamiltonian (density), our subspace of initial states would correspond to an extensive number of low-energy states near the ground state. The projector $\proj_I$ is straightforward to implement via a projective measurement~\cite{NielsenChuang} of $\hat{Q}$, selecting outcomes in $[-1,-1+\gamma]$. To prepare the proxy state $\hat{\rho}_J$ (whose purpose will be described shortly), we start from the maximally mixed state and apply a dissipative quantum operation $\mathcal{D}$ that decreases the expectation value of $\hat{Q}$:
\begin{equation}
    \Tr[\mathcal{D}(\hat{\rho}_0)\hat{Q}] < 0.
\end{equation}
This could be done, for example if $\hat{Q}$ is an additive charge density, by taking a much larger external system with its corresponding charge $\hat{Q}_{\text{ext}}$ in the ground state, and coupling them weakly via a (well-understood) specific\footnote{We note that such an interaction may be identified and its dissipative properties determined in very system-specific ways, without requiring our more general framework that aims to provide system-independent computational modes of predicting thermalization, as a way of characterizing general mechanisms of the latter. In other words, we only require a single dissipative interaction that satisfies our requirements per nonequilibrium subspace of interest, rather than needing to generally characterize the most general process that may prepare such a state.} closed interaction of sufficient complexity that conserves $\hat{Q} + \hat{Q}_{\text{ext}}$ and is known to be dissipative in the precise way we will shortly require; in this case, the charge in our system of interest spreads into the external system and decreases the local charge density within the system. We expect that such a dissipative ``cooling'' process is likely to produce a proxy state that is diagonal in the $\lvert q_k\rangle$ basis:
\begin{equation}
    \hat{\rho}_J = \mathcal{D}(\hat{\rho}_0) = \sum_k p_k \lvert q_k\rangle\langle q_k\rvert,
\end{equation}
with $p_k \in [0,1]$ such that $\sum_k p_k = 1$ and
\begin{equation}
    p_k \geq \frac{p_0}{D_I} \text{ for all } k: q_k \leq -1+\gamma.
\end{equation}
Most importantly, the dissipation will have to be engineered so that $p_0$ is finite, in which case the density operator has significant support within $\mathcal{H}_I$, i.e. $\Tr[\hat{\rho}_J \proj_I] \geq p_0$.

Taking it for granted that at least one such dissipative process can be found and reliably characterized in a given computational platform of interest, such a preparable proxy state is useful as the dynamics of $\hat{\rho}_J$ constrains the dynamics of $\hat{\rho}_I$ according to:
\begin{equation}
    \Tr[\hat{\rho}_J(t) \proj_I] \geq \sum_{\substack{k:\\ q_k \leq -1+\gamma}} p_k \langle q_k(t)\rvert \proj_I\lvert q_k(t)\rangle \geq \sum_{\substack{k:\\ q_k \leq -1+\gamma}} \frac{p_0}{D_I}\langle q_k(t)\rvert \proj_I\lvert q_k(t)\rangle = p_0 \Tr[\hat{\rho}_I(t) \proj_I],
\end{equation}
due to which we have the following bound on aperiodicity (from Definition~\ref{def:aperiodicity}):
\begin{equation}
    \mathcal{A}_{\mathcal{T}}[\mathcal{H}_I] \geq \frac{p_0}{\displaystyle{\int_{\mathcal{T}}\frac{\diff t_1}{|\mathcal{T}|}\int_{\mathcal{T}}\frac{\diff t_2}{|\mathcal{T}|}\ \Tr[\hat{\rho}_J(t_1-t_2) \proj_I]}}.
    \label{eq:aperiodicity_proxystate_bound}
\end{equation}
Thus, all we require is the dissipative preparation of a ``low-charge'' nonequilibrium mixed state having sufficient \textit{state-wise} overlap with our initial nonequilibrium subspace to rigorously constrain aperiodicity from its measurable dynamics. As an aside, preparing this mixed state in this manner via purification (entanglement with an external system~\cite{NielsenChuang}) and subsequent dissipation, even for classical dynamics, can help statistically estimate the true behavior of the ensemble (which exists as a pure state in an extended system) without considering questions of classically sampling states from the ensemble (see also \cite{dynamicalqthermalization}). We take this to be a reasonable argument for the computability of this probe, pending the determination of such dissipative processes in different contexts of interest.

Another consideration of relevance to quantum simulations is that the dynamics of states is strictly evaluated at a discrete set of times by a finite computation, requiring $\diff t$ to be a discrete measure and $\mathcal{T}$ to be a finite set in Eq.~\eqref{eq:aperiodicity_proxystate_bound}. In this case, Theorem~\ref{thm:aperiodicityimpliesmacroscopicthermalization} implies longer-time thermalization only with the same \textit{discrete} measure, without constraining the times in between. However, note that Lemma~\ref{lem:finitetimeergodicitytoinfinitetimethermalization} allows the implication of macroscopic thermalization to carry over for \textit{any} set of shifts ($S_R$) of $\mathcal{T}$ by arbitrary real numbers, rather than just multiples of the original interval. In particular, we can e.g. shift a regular set of times in $[0,T]$ of spacing $\Delta t$ by fractions of the time step to form a regular lattice of any spacing $\delta t = \Delta t/w$, where $w\in \mathbb{N}$, for which Lemma~\ref{lem:finitetimeergodicitytoinfinitetimethermalization} provides the same quantitative bound as the original lattice. Taking the $w\to\infty$ limit, it follows (from the continuity of unitary dynamics in time) that establishing aperiodicity over a large but finite number of discrete time samples is sufficient to establish macroscopic thermalization over continuous intervals of time, with the fractional length of the exceptional set of times being comparable to their fraction in the discrete case.


In analytical calculations, we expect that it may generically be sufficiently straightforward to calculate the dynamics of $\hat{\rho}_I$ itself, or if not, a similar ``low-energy'' initial state $\hat{\rho}_J$ can be used, which can generally be treated using the techniques of nonequilibrium dynamics e.g. in field theories~\cite{KamenevNonEqBook, bergesNonEqQFT}. We emphasize that while our initial family of states may be nonequilibrium states with respect to a \textit{single} charge density (as was convenient in this illustration), what follows from Theorem~\ref{thm:aperiodicityimpliesmacroscopicthermalization} and Corollary~\ref{cor:multipleobservablefinitetimethermalization} is the macroscopic thermalization in almost all such initial states of \textit{each} sufficiently concentrated observable in the full Hilbert space of the system at almost all times, and \textit{every} finite set of such observables jointly at almost all times, in any finite but sufficiently long interval of time.

As an idealized heuristic scenario for what this might look like in practice: consider a finite volume (quantum) gas of electrically charged and magnetized (i.e. nonzero spin) particles. Take a sufficiently large ensemble of nonequilibrium initial states each with a net global charge and spin of zero (so that the equilibrium state has zero charge and spin density), but typically with strong magnetization within a large region $\mathcal{R}$ i.e. with spins strongly aligned in a specific direction within the region~\cite{Reif, ReichlStatMech}. For example, this could be an ensemble where $\mathcal{R}$, corresponding to half the volume, is typically strongly magnetized in one direction. Let us say that we merely observe that this magnetized ensemble loses its \textit{total} magnetization in $\mathcal{R}$ across sufficiently long times under its closed dynamics, which does not \textit{a priori} restrict local magnetizations of smaller volumes from being large (which may be mutually unaligned so that the net magnetization in $\mathcal{R}$ remains low) at those times. Then any (finite) set of macroscopic observables, which may comprise of macroscopic local magnetization densities in smaller volumes of sufficient size, electric charge densities, currents, pressure, number densities, and other such observables, must also (jointly) equilibrate to their typical values at almost all times for almost all states in this family, whatever their initial configuration. In particular, even smaller magnetized domains cannot exist at macroscopic scales at almost all times, nor can smaller (macroscopic) concentrations of electric charges, currents, pressure or number densities, assuming (as is typical) that the statistically typical configuration (or the ``macroscopic thermal state'') is a homogeneous and isotropic one corresponding to zero macroscopic spin and charge density.

In this scenario, the decay of the one observable property of total magnetization in $\mathcal{R}$ implies the aperiodicity of this ensemble (states with low magnetization in $\mathcal{R}$ cannot have a large overlap with high $\mathcal{R}$-magnetization states), and functions as computable information that replaces the eigenstate delocalization structure of the initial state used for infinite-time predictions~\cite{Tasaki1998, ShortDegenerate, TasakiTypicalityThermalization}, while recovering the latter and in addition certifying macroscopic thermalization at finite times as in the implication structure~\eqref{eq:macroscopicimplicationchain}. It would be interesting to obtain quantitative estimates for such scenarios, and explore the practical computability of macroscopic thermalization in this framework.


\subsubsection{The role of spectral statistics in thermalization}

The typicality and eigenstate-based layers of quantum statistical mechanics (discussed in Sec.~\ref{sec:operationalquantumstatmech}) have been historically intertwined with similar layers for quantum dynamical systems~\cite{Haake}. For the latter, the relevant observations are as follows. Haar-typical unitary time evolution operators (or their generating Hamiltonians) show specific spectral correlations in their energy eigenvalues $E_n$ that are collectively labeled ``Wigner-Dyson'' spectral statistics~\cite{Wigner1955, Dyson1962, DysonMehtaSR, Mehta}. Many measures of such correlations concentrate with respect to the Haar measure (i.e. have the same quantitative form in almost all systems), such as (normalized) nearest-neighbor level spacing statistics or longer-range spectral rigidity~\cite{Haake, Mehta}. A different typicality measure, especially for nearest-neighbor statistics, is provided by (largely) independent distributions of each energy level that shows no local correlations, leading to ``Poisson'' statistics~\cite{BerryTabor, Haake, Mehta}.

Beyond typicality, Wigner-Dyson statistics is often (but not always) found in the quantization of ``sufficiently complex'' (generally at least non-integrable or ergodic) classical dynamical systems~\cite{DKchaos, CGV, BerryStadium, BGS} that, for heuristically understood reasons~\cite{HOdA, BerrySpectralRigidity, Berry227, HaakePO, HaakePO2, Haake}, usually hew closer to Haar-typical behavior in the absence of various confounding factors (\cite{BGGSarithmetic, BraunHaakearithmetic} among many others, see e.g. Refs.~\cite{dynamicalqergodicity, dynamicalqentanglement, bakeranomalieserg} for a fuller discussion of confounding factors such as the (dis)appearance of accessible symmetries in the classical limit). Similarly, Poisson statistics is often (but not always) found in systems with an extensive number of \textit{conservation laws} e.g. generic (but not all) integrable systems whose standard quantizations contain several independent sequences of energy levels with different spacings, naturally leading to a statistical independence of nearest neighbors~\cite{BerryTabor, Haake, Mehta}. This apparent connection between quantized complex dynamics and Wigner-Dyson statistics on the one hand, and quantized integrable dynamics and Poisson statistics on the other, has suggested~\cite{srednicki1994eth, srednicki1999eth, DAlessio2016} a role of spectral statistics in thermalization in the same vein as the classical association between formal ergodic properties and thermalization (or formal integrability and the lack of thermalization) discussed in Sec.~\ref{sec:ergodicitythermalization}, though occasional exceptions have been observed numerically~\cite{JensenShankarETH, MaganWu, VardiCohenWDwithoutEigenstateSpread}.

Having established a connection between spectral statistics and a narrowly defined form of ergodic exploration of the full Hilbert space (as witnessed by cyclic permutations) in Ref.~\cite{dynamicalqergodicity}, a closely related continuous measure of ergodicity in Ref.~\cite{dynamicalqentanglement} that maintains a connection to spectral statistics, and having directly connected a partial form of the latter to macroscopic thermalization in the present work, we believe these developments bring us reasonably close to allowing a formal understanding of the connection between spectral statistics and thermalization. While spectral statistics, being a fine property of energy levels, is not accessible in the thermodynamically large many-body systems relevant for statistical mechanics, it appears that the associated dynamical process(es) of ergodicity is a sufficiently substantial commonality for an analysis of their relation to be of fundamental interest. The goal of these concluding paragraphs is therefore to examine in what ways spectral statistics does and does not relate to statistical mechanics, according to the dynamical picture developed thus far.

\paragraph{Spectral statistics and macroscopic thermalization} Our present results (e.g. Theorem~\ref{thm:aperiodicityimpliesmacroscopicthermalization} and Corollary~\ref{cor:multipleobservablefinitetimethermalization}) show that the partial ergodicity of an ensemble of nonequilibrium states is sufficient to guarantee macroscopic thermalization at finite times. The connection to spectral statistics emerges instead when considering the dynamics of \textit{pure} states (supported around some finite energy) over the Heisenberg timescale $t_{\htime} = 2\pi\dos(E)$ comparable to the density of states $\dos(E)$ (around the same energy), as in Ref.~\cite{dynamicalqentanglement}. For definiteness, let us consider a system having $D$ energy levels with Haar-random Circular Unitary Ensemble (CUE)~\cite{Haake} eigenvalues $E_n^{\text{CUE}} \in [0,2\pi)$, and another with uncorrelated (Poisson) random energies $E_{n}^{\text{Poisson}} \in [0,2\pi)$ in the same interval, evolving over discrete integer steps of time $t \in \mathbb{Z}$. For this system, the Heisenberg time is $t_{\htime} = D$. Then for an unbiased initial state $\lvert \psi_0\rangle = D^{-1}\sum_n \lvert E_n\rangle$ (which has maximal effective dimension, $d_{\effdim} = D$, as per Sec.~\ref{sec:corollaries_delocalization}), the fraction of Hilbert space explored in each case over the first $D$ steps (with $w=1/D$ for $t\in \mathbb{Z}_D$ and $w=0$ outside) becomes, in the large-$D$ limit with a slight change of notation in the subscript from $w$ to the domain of averaging (see \cite{dynamicalqentanglement} for a derivation from spectral statistics):
\begin{equation}
    \mu_{t \in \mathbb{Z}_D}^{\text{CUE}}\left(\lvert \psi_0\rangle\langle \psi_0\rvert\right) \to \frac{3}{4},\ \text{ in comparison with }\ \mu_{t \in \mathbb{Z}_D}^{\text{Poisson}}\left(\lvert \psi_0\rangle\langle \psi_0\rvert\right) \to \frac{1}{2}.
    \label{eq:spectralergodicity}
\end{equation}
This marks a quantitative $\Theta(1)$ difference in how much of the Hilbert space a state explores over the Heisenberg time scale, between these two prominent classes of spectral statistics. This difference has implications, for example, for few-body quantum teleportation~\cite{dynamicalqentanglement} and nonperturbative models of quantum gravity~\cite{JTreconstruction}. However, this is a transient phenomenon; over much longer time scales, e.g. $t \to \infty$ (relative to the Heisenberg scale) corresponding to a uniform weight over all of $t \in \mathbb{Z}$, both kinds of spectral statistics explore the full Hilbert space equally well:
\begin{equation}
    \mu_{t\in\mathbb{Z}}^{\text{CUE}}\left(\lvert \psi_0\rangle\langle \psi_0\rvert\right) = \mu_{t\in\mathbb{Z}}^{\text{Poisson}}\left(\lvert \psi_0\rangle\langle \psi_0\rvert\right) \to 1.
\end{equation}
Moreover, stronger connections to spectral statistics (that are more cumbersome to express in our present setting), producing a \textit{divergent} difference between the two forms of spectral statistics in addition to being able to quantitatively estimate Wigner-Dyson spectral rigidity, are obtained with an extension to a complete orthonormal basis of pure states in the language of cyclic permutations (analogous to Sec.~\ref{sec:classicalpermutations}) in the Hilbert space~\cite{dynamicalqergodicity} or quantum action-angle variables~\cite{JTreconstruction}.

Our primary emphasis here is that spectral statistics predominantly controls the rate of ergodic exploration of the Hilbert space by pure states over the Heisenberg timescale $t = \Theta(t_{\htime})$, which is typically exponentially large in the thermodynamic $N\to\infty$ limit as $\dos(E) \sim \exp(N)$. For coarse-grained states over finite times, one instead has partial ergodicity and macroscopic thermalization. The combination of Theorem~\ref{thm:aperiodicityimpliesmacroscopicthermalization}(1) and Eq.~\eqref{eq:spectralergodicity} in fact directly place macroscopic thermalization and spectral statistics in very different regimes on a continuum characterized by partial ergodicity:
\begin{equation}
\begin{aligned}
    \text{Macroscopic thermalization regime: }&  \mu_w(\proj_I) \gtrsim D_{\neqm}/D, &\text{ over times } t = \Theta(1), \\
    \text{Spectral statistics regime: }& \mu_w(\lvert \psi_I\rangle\langle\psi_I\rvert) = \Theta(1), &\text{ over times } t = \Theta(D).
\end{aligned}
\label{eq:spectralstatistics_and_partialergodicity_differentscales}
\end{equation}
This suggests that spectral statistics and macroscopic quantum thermalization are directly associated with different exploration scales (Heisenberg-scale and non-equilibrium region scale, respectively) of the \textit{same} form of ergodic dynamics. 
However, neither appears necessary nor sufficient for the other, despite existing on the same continuum; indeed, this is similar to the classical schematic in Fig.~\ref{fig:partialergodicity}.

\paragraph{Spectral statistics and microscopic thermalization} While aperiodicity has been defined here (Definition~\ref{def:aperiodicity}) for a single initial subspace, it can be averaged over a complete set of orthogonal subspaces~\cite{dynamicalqergodicity}, in which case it is lower-bounded by the spectral form factor~\cite{Haake} (an invariant spectral feature that can be obtained as the average aperiodicity of specific pure state bases). Over the Heisenberg timescale, this again imposes quantitative constraints admitting Wigner-Dyson statistics~\cite{dynamicalqergodicity}. At earlier times however, the form factor is sensitive to longer-range spectral features, and can constrain the average aperiodicity of large subspaces corresponding to \textit{few-body} eigenspaces. This leads to direct constraints on the \textit{speed} of the information scrambling process associated with microscopic quantum thermalization~\cite{dynamicalqspeedlimit, dynamicalqfastscrambling}. Once again, it appears that spectral statistics and microscopic quantum thermalization are associated with different extremes of time scales (Heisenberg and finite, respectively) of the same process of aperiodicity, but without strong reasons to suspect any direct connection across these scales. Another difference in scales worth emphasizing is that macroscopic thermalization requires the aperiodicity of a small subspace, while spectral statistics and speed limits on scrambling average aperiodicity over the full Hilbert space. We therefore have:
\begin{equation}
\begin{aligned}
    \text{Macroscopic thermalization regime: }&  \mathcal{A}_{[0,T]}[\mathcal{H}_I] = \Theta(1),\ &\text{ for } \dim \mathcal{H}_I = \Theta(D_{\neqm}),\ &T = \Theta(1), \\
    \text{Microscopic speed limit regime: }&  \langle \mathcal{A}_{[0,T]}[\mathcal{H}_I]\rangle_{\mathcal{H}} = \Theta(1),\ &\text{ for } \dim \mathcal{H}_I = \Theta(D),\ &T = \Theta(1), \\
    \text{Spectral statistics regime: }& \langle \mathcal{A}_{[0,T]}[\mathcal{H}_I]\rangle_{\mathcal{H}} = \Theta(D),\ &\text{ for } \dim \mathcal{H}_I = 1,\ &T = \Theta(D).
\end{aligned}
\label{eq:spectralstatistics_and_aperiodicity_differentscales}
\end{equation}


\paragraph{Emergence of statistical ``universality'' at different scales} In conclusion, it appears to be just the one computable property of forgetfulness of a \textit{single} object of interest that rigorously predicts its statistical behavior with respect to almost all other objects, underlying the different forms of quantum dynamics, statistical mechanics, and spectral features considered here. Over accessible timescales, it provides the basis for an operational formulation of statistical mechanics that allows computational predictions of thermalization i.e. that observables and states \textit{predictably} show typical behavior, at both microscopic and macroscopic length scales for classical and quantum systems. Over the exponentially large Heisenberg time scale for quantum systems, it accesses spectral features such as eigenvalue statistics and the structure of the energy eigenstates with respect to initial states or observables, connecting to the spectral behavior of typical \textit{systems}. For observable-dependent notions, the relevant mechanism appears to be a form of uniform alignment with respect to states~\cite{dynamicalqthermalization, dynamicalpurestatethermalization}, and for state-dependent ones a form of ergodic dynamics~\cite[and this work]{dynamicalqentanglement, dynamicalqergodicity}. It is possible that these mechanisms themselves may be extreme variants of each other for large subspaces (few-body observable eigenspaces) and small subspaces (states) in the Hilbert space. Even with the corresponding mechanisms apparently having been identified, there appears to be no \textit{a priori} theoretical reason\footnote{Beyond eigenstate structure conclusions that follow from the assumed \textit{exact} time-translation invariance stemming from isolated Hamiltonian dynamics, which itself may be restricted to only early times (without persisting long enough to carry implications for eigenstates) in practice due to external perturbations if a system is not perfectly isolated.} to expect this to mean that any one of these processes implies or causes the other without additional assumptions, especially due to their significant separations in time scales, length scales, and ``quantization scales''. However, it is perhaps satisfying that these different ``universal'' forms of statistical typicality in complex systems appear to essentially emerge from the same class of dynamical processes realized at different scales, which may enable a systematic exploration of additional factors that may contribute to their potential interdependence across different systems.


\subsubsection*{Acknowledgments}
This work was supported by NIST, by the National Science Foundation under Grant Number 1734006 (Physics Frontier Center), and by the Heising-Simons Foundation under Grant 2024-4848.

\printbibliography

\end{document}